\documentclass[a4paper,english]{lipics-v2018}

\usepackage{lipsum}

\usepackage{textcomp}
\usepackage{anyfontsize}

\usepackage{listings}
\lstdefinelanguage{program}{%
  keywords={%
    atomic,%
    assume,assert,call,return,new,%
    false,true,duplicate,restart,lock,unlock,%
    locate,insert,delete,contains,removeRight,rotateRightLeft,
    rcu_read_lock,rcu_read_unlock,synchronize_rcu%
  },  
  morecomment=[l]{//},
  morecomment=[s]{/*}{*/},
  morecomment=[n]{(**}{**)},
  mathescape=true,
  escapeinside=`',
}


\usepackage{booktabs}
\usepackage{mathpartir}
\usepackage{multirow}
\usepackage{amsmath}
\usepackage{amsthm}
\usepackage{amssymb}
\usepackage{MnSymbol}
\usepackage{stmaryrd}
\usepackage{tabularx}
\usepackage{wrapfig}
\usepackage{array}
\usepackage{subcaption}
\usepackage[T1]{fontenc}
\usepackage[nodayofweek]{datetime}
\usepackage{graphicx}
\usepackage{paralist}


\usepackage{xcolor}
\usepackage{extarrows}
\usepackage{mathtools}
\usepackage{paralist}
\usepackage{float}
\usepackage{array}
\usepackage{times}

\lstset{language=Java}
\lstdefinestyle{lnumbers}{numbers=left,  stepnumber=1, numberblanklines=false, numberstyle=\tiny,basicstyle=\scriptsize, numbersep=3pt, escapeinside={/*}{*/}}
\lstdefinestyle{nonumbers}{numbers=none, escapeinside={/*}{*/}}

\usepackage{cleveref} 

\newif\ifnitpick
\nitpickfalse

\newif\iftodobom
\todobomtrue

\newif\iflong
\longtrue


\theoremstyle{remark}
\newtheorem{remarknum}{Remark}

\Crefname{conjecture}{Conjecture}{Conjectures}
\Crefname{proposition}{Proposition}{Propositions}
\Crefname{lemma}{Lemma}{Lemmas}
\Crefname{corollary}{Corollary}{Corollaries}
\Crefname{example}{Example}{Examples}
\Crefname{definition}{Definition}{Definitions}
\Crefname{figure}{Fig.}{Fig.}
\Crefname{section}{Sec.}{Sec.}

\nolinenumbers





\bibliographystyle{plainurl}

\title{Order out of Chaos: Proving Linearizability Using Local Views} 

\titlerunning{Proving Linearizability Using Local Views}

\author{Yotam M.\ Y.\ Feldman}{Tel Aviv University, Israel}{}{}{}
\author{Constantin Enea}{IRIF, Univ.\ Paris Diderot \& CNRS, France}{}{}{}
\author{Adam Morrison}{Tel Aviv University, Israel}{}{}{}
\author{Noam Rinetzky}{Tel Aviv University, Israel}{}{}{}
\author{Sharon Shoham}{Tel Aviv University, Israel}{}{}{}

\authorrunning{YMY Feldman et al.}

\Copyright{Yotam M.\ Y.\ Feldman, Constantin Enea, Adam Morrison, Noam Rinetzky and Sharon Shoham} 

\subjclass{\ccsdesc{Computing methodologies~Shared memory algorithms, Program reasoning~Program verification}}

\keywords{concurrency and synchronization,
concurrent data structures,
lineariazability,
optimistic concurrency control,
verification and formal methods}


\iflong
\else
\relatedversion{The extended version of this paper appears in \href{https://arxiv.org/abs/1805.03992}{https://arxiv.org/abs/1805.03992}.}
\fi




\iflong
\ArticleNo{23}
\hideLIPIcs  
\fi

\iflong
\else
\EventEditors{Ulrich Schmid and Josef Widder}
\EventNoEds{2}
\EventLongTitle{32nd International Symposium on Distributed Computing (DISC 2018)}
\EventShortTitle{DISC 2018}
\EventAcronym{DISC}
\EventYear{2018}
\EventDate{October 15--19, 2018}
\EventLocation{New Orleans, USA}
\EventLogo{}
\SeriesVolume{121}
\ArticleNo{23} 
\fi




\newcommand{\eqdef}{\mathbin{\stackrel{\scriptscriptstyle{\textrm{def}}}{\boldsymbol{=}}}}

\newcommand{\totalto}{\rightarrow}




\newcommand{\ignore}[1]{}
\newcommand{\COMMENT}[1]{}

\newcommand{\citrus}[1]{}

\newif\ifcomments
\commentsfalse 
\ifcomments\usepackage[backgroundcolor=yellow,textsize=scriptsize]{todonotes}
\else
\usepackage[disable]{todonotes}
\fi

\ifcomments

\newcommand{\noamtodo}[1]{\textblue{#1}}
\newcommand{\sharon}[1]{\textblue{SH: #1 }}

\newcommand{\yf}[1]{\textred{YF: #1 }}
\newcommand{\textred}[1]{\color{red}#1\color{black}}
\newcommand{\textblue}[1]{\color{blue}#1\color{black}}

\else

\newcommand{\noamtodo}[1]{}
\newcommand{\sharon}[1]{}
\newcommand{\yf}[1]{}
\newcommand{\textred}[1]{#1}
\newcommand{\textblue}[1]{#1}

\fi

\newcommand{\TODO}[1]{\todo[inline]{\textbf{TODO: #1}}}

\newcommand{\noamdone}[1]{}

\newcommand{\noampardone}[1]{}
\newcommand{\noamdropped}[1]{}

\newcommand{\sharondone}[1]{}
\newcommand{\sharonx}[1]{}

\newcommand{\costin}[1]{\todo[inline]{CE: #1}}
\newcommand{\costindone}[1]{}

\newcommand{\yotam}[1]{\todo[inline]{YF: #1}}

\newcommand{\yfx}[1]{}%

\newcommand{\DONE}[1]{}

\ifnitpick
\newcommand{\nitpick}[1]{#1}
\else
\newcommand{\nitpick}[1]{}
\fi

\iftodobom
\newcommand{\todobom}[1]{\TODO{\textblue{#1}}}
\else
\newcommand{\todobom}[1]{}
\fi 


\newcommand{\imark}{\mathit{mark}}

\newcommand{\slheadobj}{{\tt head}}

\newcommand{\slQReachXK}[2]{\slheadobj\searchpath{#2}{#1}}

\newcommand{\slnextnode}{\operatorname{nextNode}}

\newcommand{\ttl}{\code{topLevel}}

\newcommand{\tk}{\code{key}}
\newcommand{\tn}{\code{ref}}
\newcommand{\tref}{\code{ref}}
\newcommand{\tm}{\code{mark}}
\newcommand{\tmnext}{\code{next}}

\newcommand{\lflI}[1]{I_{#1}}
\newcommand{\lflTI}[1]{\delta_{#1}}

\newcommand{\slI}[1]{I_{#1}}

\newcommand{\lflheadobj}{{\tt head}}

\newcommand{\lflQReachXYK}[3]{#1\searchpath{#3}{#2}}

\newcommand{\lflQReachXK}[2]{\lflheadobj\searchpath{#2}{#1}}

\newcommand{\inext}{\mathit{next}}

\newcommand{\PredXF}{\varphi}

\newcommand{\QXM}[1]{#1.\mathit{mark}}

\newcommand{\para}[1]{\vspace{0.3cm}\noindent \textbf{#1.}}

\newcommand{\nextChild}{\operatorname{nextChild}}
\newcommand{\ikey}{\mathit{key}}
\newcommand{\ileft}{\mathit{left}}
\newcommand{\iright}{\mathit{right}}
\newcommand{\repfunc}{\mathcal{A}}

\newcommand{\QReachXYK}[3]{#1\searchpath{#3}{#2}}

\newcommand{\PReachXK}[2]{\diamondminus(\QReachXK{#1}{#2})}
\newcommand{\QReachXK}[2]{\rootobj\searchpath{#2}{#1}}

\newcommand{\QXD}[2][]{#1#2.\mathit{del}}

\newcommand{\QXR}[2][]{#1#2.\mathit{rem}}

\newcommand{\PastSymb}{\diamondminus}
\newcommand{\Past}[1]{\diamondminus (#1)}

\newcommand{\fontCONST}[1]{\mathit{#1}}

\newcommand{\seqr}{\bar{r}}
\newcommand{\seqw}{\bar{w}}
\newcommand{\seqwf}{\bar{w}_f}

\newcommand{\NULL}{\fontCONST{null}}

\newcommand{\loc}{\ell}








\newcommand{\noamxr}[1]{}

\newcommand{\polC}[1][]{\mathrm{C}}
\newcommand{\polphi}[1][]{\phi}

\newcommand{\polCsyn}[1][]{\mathtt{C}}
\newcommand{\polphisyn}[1][]{\boldsymbol{\phi}}




\newcommand{\code}[1]{\ensuremath{\mathtt{#1}}}





\newcommand{\wlupdate}[1][]{\operatorname{upd}}
\newcommand{\wlupdateset}[1][]{\operatorname{updSets}}

\newcommand{\readset}[1]{\mathit{ReadSet}_{#1}}

\newcommand{\Pred}{\mathbb{P}}
\newcommand{\PredOv}{\Pred}

\newcommand{\hlocal}{H_{\mathit{lv}}}
\newcommand{\hlocali}[1]{H^{(#1)}_{\mathit{lv}}}
\newcommand{\hinit}{H_c^{(0)}}
\newcommand{\hobserved}{H_{f}}
\newcommand{\hconc}{H_{c}}

\newcommand{\opprefix}[2]{#1^{({#2})}}

\newcommand{\leqloc}{\leq}
\newcommand{\leqacculoc}{\leq^{\cup}}


\newcommand{\Prew}[1]{\mathbb{Q}_{#1}}

\newcommand{\dataf}{{\tt data}}
\newcommand{\fdel}{{\tt del}}

\newcommand{\false}{\textit{false}}
\newcommand{\true}{\textit{true}}

\newcommand{\rootobj}{{\tt root}}

\newcommand{\searchpath}[1]{\overset{#1}{\leadsto}}

\newcommand{\readr}[1]{\mathit{read}(#1)}
\newcommand{\modw}[1]{\mathop{\textit{mod}}({#1})}

\newcommand{\validw}[1]{#1}

\newcommand{\PredSearchSymb}{\mathbb{S}}
\newcommand{\PredSearch}[2]{\PredSearchSymb_{#1,#2}}

\newcommand{\predset}{\mathcal{PR}}

\iflong
\newcommand{\refappendix}[1]{\Cref{#1}}
\else
\newcommand{\refappendix}[1]{the extended version~\cite{extendedVersion}}
\fi

\begin{document}

\maketitle

\begin{abstract}
Proving the linearizability of highly concurrent data structures, such as those using optimistic concurrency control, is a challenging task. The main difficulty is in reasoning about the view of the memory obtained by the threads, because as they execute, threads observe different fragments of memory from different points in time. Until today, every linearizability proof has tackled this challenge from scratch.

We present a unifying proof argument for the correctness of unsynchronized traversals, and apply it to prove the linearizability of several highly concurrent search data structures, including an optimistic self-balancing binary search tree, the Lazy List and a lock-free skip list.
Our framework harnesses {\em sequential reasoning} about the view of a thread, considering the thread as if it traverses the data structure without interference from other operations. 
Our key contribution is showing that properties of reachability along search paths can be deduced for concurrent traversals from such interference-free traversals, when certain intuitive conditions are met. Basing the correctness of traversals on such \emph{local view arguments} greatly simplifies linearizability proofs. 
At the heart of our result lies a notion of \emph{order on the memory}, corresponding to the order in which locations in memory are read by the threads, which guarantees a certain notion of consistency between the view of the thread and the actual memory.

To apply our framework, the user proves that the data structure satisfies two conditions: (1) acyclicity of the order on memory, even when it is considered across intermediate memory states, and (2) preservation of search paths to locations modified by interfering writes.
Establishing the conditions, as well as the full linearizability proof utilizing our proof argument, reduces to simple concurrent reasoning.
The result is a clear and comprehensible correctness proof, and elucidates common patterns underlying several existing data structures.
\end{abstract}


\clearpage


\section{Introduction}\label{Se:Intro}

Concurrent data structures must minimize synchronization to obtain high performance~\cite{David:2015:ACS,TAOMPP}.
Many concurrent search data structures therefore use {\em optimistic} designs, which search the data structure without
locking or otherwise writing to memory, and write to shared memory only when modifying the data structure.
Thus, in these designs, operations that do not modify the same nodes do not synchronize with each
other; in particular, searches can run in parallel, allowing for high performance and scalability.
Optimistic designs are now common in concurrent search trees~\cite{Arbel:2014,Brown:2014:GTN,Clements:2012,PFL16:CFTree,Drachsler:2014,Ellen:2010,Howley:2012,Natarajan:2014,intlf},
skip lists~\cite{NoHotSpotSkipList,fraser-phd,OptSkipList}, and lists/hash tables~\cite{HarrisList,HellerHLMMS05,MichaelList,Triplett:2011:RSC}.

A major challenge in developing an optimistic search data structure is proving
{\em linearizability}~\cite{TOPLAS:HW90}, i.e., that every operation appears to take effect atomically
at some point in time during its execution.  Usually, the key difficulty is 
proving properties of unsynchronized searches~\cite{PODC10:Hindsight,DBLP:conf/wdag/Lev-AriCK15,PPOPP:VafeiadisHHS06,TAOMPP}, as they can
observe an {\em inconsistent} state of the data structure---for example, due to observing only some of
the writes performed by an update operation, or only some update operations but not others.
Arguing about such searches requires tricky {\em concurrent reasoning} about the possible interleaving of
reads and writes of the operations.
Today, every new linearizability proof tackles these problems from scratch, leading to long and
complex proofs.

\para{Our approach: local view arguments} 
This paper presents a unifying proof argument for proving linearizability of concurrent data structures
with unsynchronized searches that replaces the difficult concurrent reasoning described above
with {\em sequential reasoning} about a search, which does not consider interference from other operations.
Our main contribution is a framework for establishing properties of an unsynchronized search in a concurrent
execution by reasoning {\em only} about its {\em local view}---the (potentially inconsistent) picture of memory it
observes as it traverses the data structure.
We refer to such proofs as \emph{local view arguments}.
We show that under two (widely-applicable) conditions listed below,
the existence of a path to the searched node in the local view, 
deduced with sequential reasoning,
also holds at some point
during the \emph{actual} (concurrent) execution of the traversal. 
(This includes the case of non-existence of a key indicated by a path to \code{null}.)
Such {\em reachability} properties are typically key to the linearizability proofs of many prominent concurrent search data structures
with unsynchronized searches~\cite{David:2015:ACS}.
Once these properties are established, the rest of the linearizability proof requires only simple concurrent reasoning.


%

Applying a local view argument requires establishing the following two conditions:
\begin{inparaenum}[(i)]
	\item \emph{temporal acyclicity}, which states that the search follows an \emph{order} on the memory that is \emph{acyclic} across intermediate states throughout the concurrent execution; and
    \item \emph{preservation}, which states that
	whenever a node $x$ is changed, if it was on a search path for some key $k$ in the past, then it is also on such a search path at the time of the change.
\end{inparaenum}
Although these conditions refer to concurrent executions, proving them for the data structures we consider is straightforward.

More generally, these conditions can be established with inductive proofs that are simplified by relying
on the very same 
traversal properties obtained with the local view argument.
This seemingly circular reasoning holds because our framework is also proven inductively,
and so the case of executions of length $N+1$ in both the proof that (1) the data structure satisfies the
conditions and (2) the traversal properties follow from the local view argument can rely on the correctness of the
other proof's $N$ case.

\para{Simplifying linearizability proofs with local view arguments}
To harness local view arguments, our approach uses \emph{assertions} in the code 
as a way to divide the proof between (1) the linearizability proof that \emph{relies on the assertions}, 
and (2) the proof of the assertions, where the challenge of establishing properties of unsynchronized searches in concurrent executions is overcome by local view arguments. 
%
%
\yotam{Omitted: \\
We illustrate this approach with a proof of the linearizability of a self-balancing binary search tree (\Cref{Se:Motivating}), using local view arguments (\Cref{Se:LocalViewArguments}).
} 

Overall, our proof argument yields clear and comprehensible linearizability proofs, whose whole is
(in some sense) greater than the sum of the parts, since each of the parts
requires a simpler form of reasoning compared to contemporary linearizability proofs.
We use local view arguments to devise simple linearizability proofs of a variant of the contention-friendly
tree~\cite{PFL16:CFTree} (a self-balancing search tree)\citrus{ and the Citrus tree~\cite{Arbel:2014} (an RCU-based
search tree),}{}, lists with lazy~\cite{HellerHLMMS05} or non-blocking~\cite{TAOMPP} synchronization, and a lock-free skip list.


Our framework's \emph{acyclicity} and \emph{preservation} conditions can provide insight on algorithm
design, in that their proofs can reveal unnecessary protections against interference.
Indeed, our proof attempts exposed (small) parts of the search tree algorithm that were not needed to guarantee linearizability, leading us to consider a simpler variant of its search operation (see \Cref{rem:CF-diff}).


\para{Contributions}
To summarize, we make the following contributions:
\begin{enumerate}
	\item We provide a set of conditions under which reachability properties of local views, established using sequential reasoning, hold also for concurrent executions,
	\item We show that these conditions hold for non-trivial concurrent data structures that use unsynchronized searches, and
	\item We demonstrate that the properties established using local view arguments enable simple linearizability proofs, 
	alleviating the need to consider interleavings of reads and writes during searches.
\end{enumerate}

\ignore{
	The rest of the paper is organized as follows.
	\Cref{Se:Running} presents the running example of a concurrent self-balancing binary search tree, and illustrates the role of local view arguments in linearizability proofs.
	\Cref{Se:LocalViewArguments} presents local view arguments and our main technical result.
	\Cref{Se:PuttingTogetherCF} summarizes the 
	\Cref{Se:CaseStudies} demonstrates additional case studies of concurrent lists.
	\Cref{Se:Related} discusses related work, and \Cref{Se:Conc} concludes.	
}

\ignore{
	\sharon{rephrase the first paragraph: currently it jumps right to reasoning about searches, but this subsection is about the full linearizability proof. This paragraph needs to explain that we use assertions to split the proof: on the one hand rely on these assertions to prove linearizability and on the other hand prove the assertions. Here is my suggestion: \\
	Proving linearizability of an algorithm like ours is challenging because searches are performed with no synchronization. This means that, due to interference from concurrent updates,
	searches may observe an inconsistent state of the tree that has not existed at any point in time.
	(See \Cref{Fi:Rotation}.)
	Our approach focuses on this challenge, and uses \emph{assertions} in the code as a way to divide the proof between (1) the linearizability proof that \emph{relies on the assertions}, allowing it to be oblivious of the challenge of unsynchronized searches, and (2) the proof of the assertions, where unsynchronized searches are addressed.
	Our framework simplifies the latter reasoning and hence facilitates straightforward linearizability proofs, as we demonstrate next.
	}	
}

\ignore{
	The fact that the aforementioned acyclicity and preservation properties allow instantiating
	our framework for a search data structure is one of our main results.
	The framework's general proof argument reasons about more abstract properties:
	It considers the inconsistent picture of the memory state that a reading operation obtains, which
	we refer to as its \emph{local view}, and identifies sufficient conditions for ensuring
	that if a property $\Pred$ holds on the local view, then it holds on some memory state that
	occurred during the traversal.  To derive this implication, the abstract argument relies on
	a {\em simulation} condition which the concurrent algorithm must satisfy.  The basic idea
	is to start with 
	the global state when the operation starts executing, and apply
	the subsequence of writes whose effect was observed by the reader one by one.
	This process obtains a sequence of {\em fabricated} states, as it applies only some of the
	writes that occur in the execution.  Simulation then requires that if a write changes $\Pred$
	from false to true on the fabricated state on which it is applied, then it has the same
	effect on the global state in which it actually operates.  Therefore, if $\Pred$ is true on the
	local view, then it was true on the global state in which one of the observed writes
	operated.  (More precisely, this holds for \emph{upward absolute}~\cite{absoluteness}
	predicates, such as reachability, which remain true over the entire memory if they
	are true over a subset of the memory locations.)

	Proving simulation may not be much easier (if at all) than verifying the concurrent
	algorithm directly, as it still requires reasoning about the partial effects observed by
	reading operations.  Our insight is that if the concurrent algorithm further satisfies some
	{\em ordering} condition, then simulation can be proven much more easily.  The ordering
	condition requires the user to define an order over memory locations, such that readers
	access memory locations according to this order and writers ensure that the order is robust
	to their mutations.  It then becomes possible to weave additional writes (that were not directly
	observed by the reader) into the sequence of fabricated states, making the fabricated
	states very similar to actual global states.  In particular, they are {\em forward agreeing}:
	a write action $w$ is applied to a fabricated state in which the contents of the memory of
	all the locations which are greater than the modified one is identical to their contents in the
	real state on which $w$ operates.  This similarity simplifies the task of proving simulation.
	Our proof that simulation is implied by preservation for search data structures
	exemplifies such an argument.  With this result at hand, our framework can be applied to
	search data structures by proving the intuitive conditions of acyclicity and preservation.
}

\ignore{
	\TODO{copied from elsewhere -- merge}
We focus our attention to properties of \emph{reachability along search paths}. (We formally define this notion in an abstract, general way in \Cref{Se:ProofPresentSearchPaths}.)
We show that if a property $\QReachXK{x}{k}$ holds on the local view,
then $\QReachXK{x}{k}$ holds in some intermediate state of the concurrent execution during the traversal, provided that the following conditions are met:
\TODO{define ``local views'' somewhere earlier, it's in the title after all and we should use it}
\begin{itemize}
	\item \emph{Order}: the traversal is performed in accordance with some locally-defined partial order on memory, which is \emph{acyclic} across intermediate global states.

	\item \emph{Preservation}: If a concurrent write $w$ modifies some location $\modw{w}$ and there has been some point in time in which $\QReachXK{x}{k}$ held, then this is also true at the point in time when $w$ is performed.
\end{itemize}

Thus, the user uses the proof argument by proving ... and obtains $\PReachXK{x}{k}$ in the global state out of $\QReachXK{x}{k}$ in the local view, simplifying the proof.	
}


\ignore{
The framework considers the inconsistent picture of the memory state that a thread obtain after issuing a sequence of (potentially unsynchronized) atomic read actions.
The framework identifies sufficient conditions which  ensure that a restricted class of properties of such a picture, which we refer to as the thread's \emph{local view}, indeed hold in some  memory state which occurred during the traversal.
Roughly speaking, these condition require that the detected property be \emph{upward closed}~\cite{}, i.e., if it holds in a
some substate then it holds in all of its extensions.
Had the local views been taken from a single memory state, we were done.
Unfortunately,
A local view is a fusion of the fractions of states observed by the thread at different times.
To ensure the validity of exporting properties from the local view to a global state, the framework requires that
there would be a certain cooperation between the readers and the writers:
The user (algorithm designer) defines a way to induce an \emph{order} on memory location in any given state.
Readers are required to access the memory location according to this order.
The writers have to ensure that the order be robust to their mutations:
As the readers' local view is obtain from a fusion of fractions of states, it is the writers' responsibility to ensure that in any possible fusion of fractions of states which occur during the execution, the order be preserved.
For example, if the order is defined based on the link structure in the heap and in some state it is possible to reach from location $x$ to location $y$ by following pointer fields, then in any ``fused'' state it should not be possible to reach
$x$ from $y$. This ensure that the reader's can determine the allowed traversal order from their own local views.

The framework shows that if the order is respected, that it can  correlate the local views of threads with the
state which is closely related to the ones which actually occurred:
To do so, we prove that it is possible to take a subsequence of the concurrent writes and
fabricate a sequence of ``hybrid'' states. These fabricated states are a sort of a chimaera:
every write operation in the subsequence operates on a state in which the contents of the memory of all the locations which are greater than the modified one is identical to their contents in the real state on which it operates.

It is then the job of the user to prove that every write on the fabricated state \emph{simulates} the write on the real state in the following way: the effect of the write on any upward absolute predicate of interest, i.e., one which we want to export from the local view to the past of the concurrent execution, should concur with its effect on the real state with respect to changing the predicate from false to true. Note that the proof needs to relate the effect of writes on states which have a rather strong  similarity,
they agree on the contents of location which come forward of the modified location.
Note that the fabricate state resulting at the end of the subsequence of concurrent write provides a way of explaining the local view the thread observes by a subsequence of writes. The \emph{forward-agreement} of the fabricated state and the means that although not all writes are included (as the thread misses some), the use of order ensures a consistency property of forward-agreement, i.e., that the writes that construct the local view have the same picture of the ``continuation'' of the data structure as it really was in the global state.
While the forward agreement does not imply simulation by itself, it greatly simplifies the task of proving it.
Once simulation is proven, our framework ensures the validity exporting properties of local views to the past of the concurrent execution is valid.

To close the loop, we prove the for  concurrent search data structures, e.g., sorted lists and binary search tree, the aforementioned
preservation property implies simulation. Thus, using our framework for such data structures, requires the user to prove that readers respect the order, i.e., traverse location by following pointer links, and that writers maintain the acyclicity of the order and the preservation property.
}

\ignore{
\todo{Say something here. NR: Wrote a draft.}

What we want is to prove a simulation between the local view and the global state.
If one can show this (for upward absolute predicates) we are done.
However, proving this is too difficult, as we cannot establish  it inductively.

Thus, we provide a partial remedy: We consider the reading in order as a synchronization mechanism which ensures readers get a monotonic view of the memory with respect to certain predicates,  i.e., if they can establish  P then the can rely on P to keep being true.

The writers, who are aware of the restricted form of traversal, ensures that this picture is monotonic by restricting their writes.

The fabricated state, specifically the sequence of writes which produces it, and the forward agreement conditions  is an artifact of the proof.
It allows to prove by induction the simulation in cases the   predicate have the uniform-change property.
However, they do not prove the uniform change.

We then show that the specialized version of the framework is correct by showing that preservation implies uniform-change.
Again, this is not the end. The user needs to prove preservation.
We do this for the CF-tree / citrus / linked-list as examples for reasoning tat goes all the way.

I would be more positive towards the fabricated state, and say that the fabricated state is an interesting intuition, a way of explaining the local view the thread observes by a subsequence of writes. Forward-agreement means that although not all writes are included (as the thread misses some), the use of order ensures a consistency property of forward-agreement, that the writes that construct the local view have the same picture of the "continuation" of the data structure as it really was in the global state. While this does not imply simulation by itself, preservation completes the task.

(I wrote in future work that conditions other than preservation may also be of interest.)
}

\newcommand{\clearpagex}{}

\section{Motivating Example}\label{Se:Running}\label{Se:Motivating}



\lstset{language={program},style=lnumbers,firstnumber=last,basicstyle=\ttfamily\scriptsize,numberstyle=\tiny,escapeinside={(**}{**)},backgroundcolor=\color{white}}

\begin{figure}[t]
\centering
\begin{tabular}[t]{p{5.6cm}p{5.0cm}p{4cm}}
\begin{lstlisting}
type N 
  int key  
  N left, right 
  bool del,rem

N root$\leftarrow$new N($\infty$);

N$\times$N locate(int k)
  x,y$\leftarrow$root
  while (y$\neq$null $\land$ y$.$key$\neq$k)
    x$\leftarrow$y
    if (x$.$key<k)
      y$\leftarrow$x$.$right
    else
      y$\leftarrow$x$.$left
 $
  \begin{array}{l}
  \{\PReachXK{x}{k}  \land \PReachXK{y}{k}   \\
    \ \ \land\ \QXK[\neq]{x}{k} \land y \neq \NULL \implies  \QXK{y}{k}\}
  \end{array}
  \label{Ln:RetLocate}
  $  
  return (x,y)

bool contains(int k)  
  (_,y)$\leftarrow$locate(k) 
  if (y = null) 
    $  
    \{ \Past{\QReachXK{\NULL}{k}} \} \label{Ln:ContainsRetFalseNULL}
    $
    return false
  $\{ \PReachXK{y}{k} \}  \label{Ln:ContainsBeforeReadDEL}$
  if (y.del)
    $  
    \{ \Past{\QReachXK{y}{k} \land \QXD{y}} \land \QXK{y}{k} \} \label{Ln:ContainsRetFalseDEL}
    $
    return false
  $  
    \{ \Past{\QReachXK{y}{k} \land \neg \QXD{y}}  \land  \QXK{y}{k}  \} \label{Ln:ContainsRetTrue}
  $ 
  return true  
\end{lstlisting}



&




\begin{lstlisting}
bool delete(int k) 
  (_,y)$\leftarrow$locate(k) 
  if (y = null)
    $  
      \{ \Past{\QReachXK{\NULL}{k}} \} \label{Ln:DeleteYNull}
    $  
    return false $\label{Ln:DeleteRetFalseNull}$
  lock(y)
  if (y$.$rem) restart
  ret $\leftarrow$ $\neg$y$.$del
  $
       \{\QReachXK{y}{k} \land \QXK[=]{y}{k} \land \QXR[\neg]{y} \} \label{Ln:DeleteSetDel}% \land \mathit{ret} = \neg y.\mathit{del}\}
  $
  y$.$del$\leftarrow$true
  return ret                $\label{Ln:DeleteRetNotDel}$

bool insert(int k)  
  (x,y)$\leftarrow$locate(k)
  $
  %\{\diamondminus(\QReachXK{x}{k}\land \QXK[\neq]{x}{k} \} \label{Ln:InsertLocate}
  \{\PReachXK{x}{k}\land \QXK[\neq]{x}{k} \} \label{Ln:InsertLocate}
  $  
  if (y$\neq$null)
    $
        \{\PReachXK{y}{k}\land \QXK[=]{y}{k}\} 
    $
    lock(y)
    if (y$.$rem) restart
    ret $\leftarrow$ y$.$del
    $
        \{\QReachXK{y}{k}\land \QXK[=]{y}{k} \land \QXR[\neg]{y}\} \label{Ln:InsertSetDel} 
    $
    y.del$\leftarrow$false 
    return ret $ \label{Ln:InsertRetDel} $
  lock(x)
  if (x$.$rem) restart
  if (k < x$.$key  $\land$ x.left$=$null)
    $
    \begin{array}{l}
    \{\QReachXK{x}{k} \land \QXR[\neg]{x}   \\
      \ \ \land\  k < x.key \land x.\mathit{left}=\text{null} \}
    \end{array}
    \label{Ln:InsertInsertLeft}
    $
    x.left $\leftarrow$ new N(k)
  else if (x.right$=$null)
    $
    \begin{array}{l}
    \{\QReachXK{x}{k} \land \QXR[\neg]{x}   \\
      \ \ \land\  k > x.key \land x.\mathit{right}=\text{null} \}
    \end{array}
    \label{Ln:InsertInsertRight}
    $         
    x.right $\leftarrow$ new N(k)  
  else 
    restart
  return true
\end{lstlisting}




&



\begin{lstlisting}
removeRight()  
  (z,_) $\leftarrow$ locate(*)
  lock(z)
  y $\leftarrow$ z.right
  if(y=null $\lor$ z.rem) 
    return
  lock(y)
  if (y.del)
    return
  if (y.left$=$null)
    $\label{Ln:RemoveRight} $ z.right $\leftarrow$ y.right
  else if (y.right$=$null)
    $\label{Ln:RemoveLeft} $ $\label{cft-code:remove-bypass}$z.right $\leftarrow$ y.left
  else
    return
  $\label{cft-code:remove-rem-y}$y.rem $\leftarrow$ true

rotateRightLeft()  
  (p,_) $\leftarrow$ locate(*)
  lock(p)
  y $\leftarrow$ p$.$left
  if(y=null $\lor$ p.rem) 
    return
  lock(y)
  x $\leftarrow$ y$.$left
  if(x=null)
    return
  lock(x)
  z $\leftarrow$ duplicate(y)
  z.left $\leftarrow$ x.right   $\label{Ln:RotateFresh}$ 
  x.right $\leftarrow$ z        $\label{Ln:RotateXRight}$  
  p.left $\leftarrow$ x         $\label{Ln:RotatePLeft}$  $\label{cft-code:rotate-change-parent}$
  y.rem $\leftarrow$ true       $\label{cft-code:rotate-rem-y}$
\end{lstlisting}
\end{tabular} 
\caption{\label{Fi:Running}
Running example. For brevity, \textbf{unlock} operations are omitted; a procedure releases all the locks it acquired when it terminates or \textbf{restart}s. $*$ denotes an arbitrary key.
}
\end{figure}

As a motivating example we consider a self-balancing binary search tree
with optimistic, read-only searches. This is an example of a concurrent data structure for which it is challenging to prove linearizability ``from scratch.'' The algorithm is based on the
contention-friendly (CF) tree~\cite{EuroPar13:CFTree,PFL16:CFTree}.  It
is a fine-grained lock-based implementation of a set object with the standard
\code{insert(}$k$\code{)}, \code{delete(}$k$\code{)}, and \code{contains}($k$\code{)} operations.
\yotam{I think we can do without footnote (although I asked about it...): For simplicity, we assume that keys are natural numbers.}
The algorithm maintains an {\em internal} binary tree that stores a key in every
node.
Similarly to the lazy list~\cite{HellerHLMMS05}, the algorithm distinguishes between
the {\em logical deletion} of a key, which removes it from the set represented by the tree, and the
{\em physical removal} that unlinks the node containing the key from the tree.

We use this algorithm as a running example to illustrate how our framework allows to lift sequential reasoning into assertions about concurrent executions, which are in turn used to prove linearizability. In this section, we present the algorithm and explain the linearizability proof based on the assertions, highlighting the significant role of local view arguments in the proof.

\Cref{Fi:Running} shows the code of the algorithm.
(The code is annotated with
assertions written inside curly braces, which the reader should ignore for
now; we explain them in~\Cref{Se:ProofLin}.)
Nodes contain two boolean fields, $del$ and $rem$, which indicate whether
the node is logically deleted and physically removed, respectively.  Modifications of
a node in the tree are synchronized with the node's lock.
Every operation starts with a call to \code{locate(k)}, which performs a standard
binary tree search---without acquiring any locks---to locate the node with the target
key $k$.
This method returns the last link it traverses, $(x,y)$.  Thus, if $k$ is found,
$y.key=k$; if $k$ is not found, $y=\NULL$ and $x$ is the node that would be $k$'s
parent if $k$ were inserted.
A \code{delete(k)} logically deletes $y$ after verifying that $y$ remained linked
to the tree after its lock was acquired.
An \code{insert(k)} either revives a logically deleted node or, if $k$ was not found,
links a new node to the tree.
A \code{contains(k)} returns $\true$ if it locates a node with key $k$
that is not logically deleted, and $\false$ otherwise.

Physical removal of nodes and balancing of the tree's height are performed using auxiliary
methods.\footnote{The reader should assume that these methods can be invoked at any time; the details
of when the algorithm decides to invoke them are not material for correctness.
For example, in~\cite{EuroPar13:CFTree,PFL16:CFTree}, these methods are invoked by a dedicated restructuring thread.}
\begin{figure}
\begin{center}
\begin{minipage}[t]{.65\textwidth}
\centering
\includegraphics[scale=0.5]{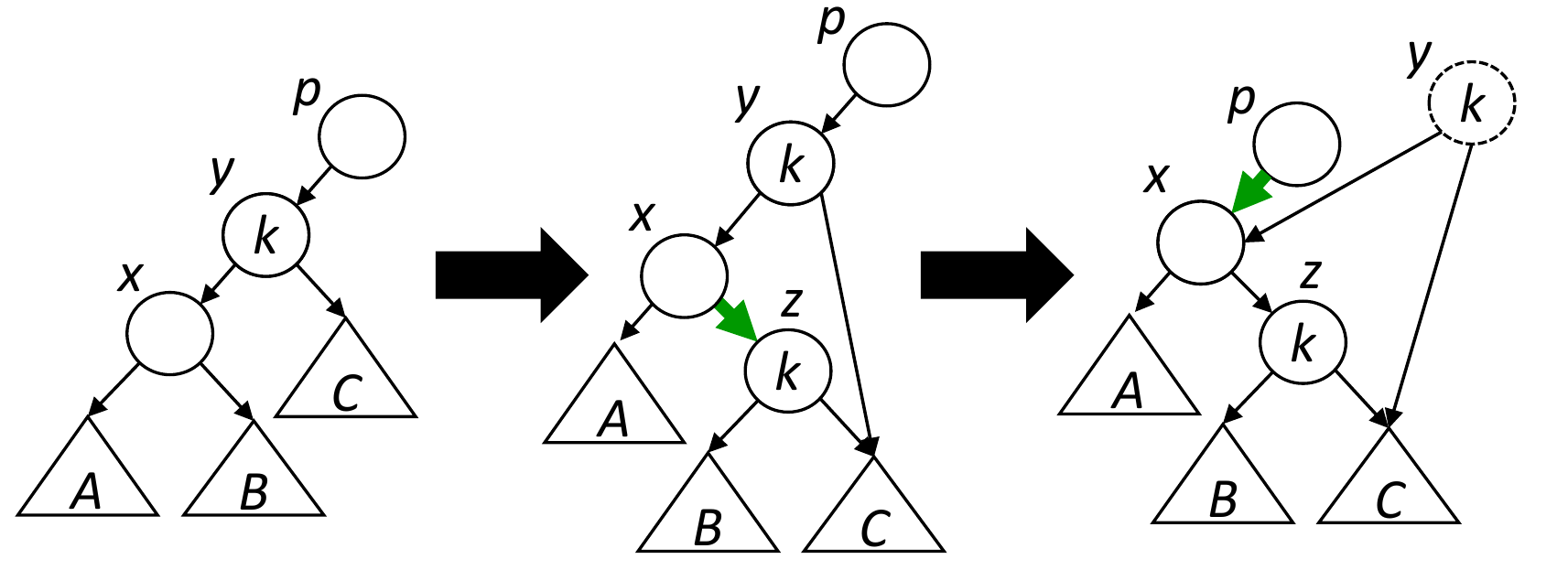}
\subcaption{\footnotesize Right rotation of $y$.  (The bold green link is the one written in each step.
The node with a dashed border has its $rem$ bit set.)
}
\label{Fi:Rotation}
\end{minipage}
\ \ \ 
\begin{minipage}[t]{.3\textwidth}
\centering
\includegraphics[scale=0.5]{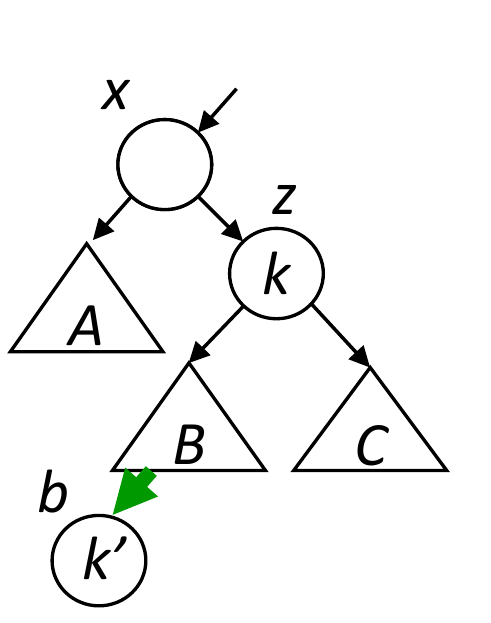}
\subcaption{\footnotesize Node $b$ is added after the right rotation of $y$, when $y$
is no longer in the tree.}
\label{Fi:RotationInconsist}
\end{minipage}
\end{center}
\caption{\footnotesize A right rotation, and how it can lead a search to observe an inconsistent state of the tree.
If $b$ is added after the rotation, a search for $k'$ that starts
before the rotation and pauses at $x$ during the rotation will traverse the path $p,y,x,z,\dots,b$,
although $y$ and $b$ never exist simultaneously in the tree.}
\label{Fi:RotationAndInconsist}
\end{figure}
%
%
%
%
%
%
The algorithm physically removes only nodes with at most one child.  The \code{removeRight}
method unlinks such a node that is a right child, and sets its $rem$ field to notify threads
that have reached the node of its removal.  (We omit the symmetric \code{removeLeft}.)
Balancing is done using rotations.
\Cref{Fi:Rotation} depicts the operation of \code{rotateRightLeft},
which needs to rotate node $y$ (with key $k$) down.
(We omit the     symmetric operations.)
It creates a new node $z$ with the same key and $\mathit{del}$ bit as $y$ to take $y$'s place, leaving $y$
unchanged except for having its $rem$ bit set.
A similar technique for rotations is used in lock-free search trees~\cite{Brown:2014:GTN}.


\begin{remarknum} \label{rem:CF-diff}
The example of \Cref{Fi:Running} differs from the original contention-friendly tree~\cite{EuroPar13:CFTree,PFL16:CFTree} in a few points.
The most notable difference is that our traversals do not consult the \code{rem} flag, and in particular we do not need to distinguish between a left and right rotate, making the traversals' logic simpler.
Checking the \code{rem} flag is in fact unnecessary for obtaining linearizability, but it allows proving linearizability with a \emph{fixed} linearization point, whereas
proving the correctness of the algorithm without this check requires an \emph{unfixed} linearization point.
For our framework, the necessity to use an unfixed linearization point incurs no additional complexity. In fact, the simplicity of our proof method allowed us to spot this ``optimization.''
 %
In addition, the original algorithm performs backtracking by setting pointers from child to parent when nodes are removed. Instead, we restart the operation;
see \Cref{Se:Conc} for a discussion of backtracking.
\yotam{Reference to future work sufficiently non-misleading but not too unsettling?}
Lastly, we fix a minor omission  in the description of~\cite{PFL16:CFTree}, where the \code{del} field was not copied from a rotated node.
\end{remarknum}

\costin{We should maybe reinforce the simplification "do not consult the \code{rem} flag" a little bit. Say that this was an unnecessary check, and that the simplicity of our proof method allowed us to spot this "optimization", even if it is minor. But more generally, this is a hint that an uniform and simple proof method can be beneficial even for designing concurrent data structures\\
SH: added something. is it ok?}

\costin{What do we mean by "we do not need to distinguish between a left and right rotate". And how does this simplify the logic ? It sounds mysterious to me.}


\subsection{Proving Linearizability}\label{Se:ProofLin}

Proving linearizability of an algorithm like ours is challenging because searches are performed with no synchronization. This means that, due to interference from concurrent updates,
searches may observe an inconsistent state of the tree that has not existed at any point in time.
(See \Cref{Fi:RotationAndInconsist}.)
In our example, while it is easy to see that \code{locate} in \Cref{Fi:Running} constructs a search path to a node in sequential executions, what this implies for concurrent traversals 
is not immediately apparent. Proving properties of the traversal---in particular, that a node reached in the traversal truly lies on a search path for key $k$---is instrumental for the linearizability proof~\cite{PPOPP:VafeiadisHHS06,PODC10:Hindsight}.

Generally, our linearizability proofs consist of two parts: (1) proving a set of \emph{assertions} in the code of the concurrent data structure, and (2) a proof of linearizability based on those assertions. The most difficult part and the main focus of our paper is proving the assertions using local view arguments, discussed in \Cref{Se:RunningProvingAssertions}.
In the remaining of this section we demonstrate that having assertions about the actual state during the concurrent execution makes it a straightforward exercise to verify that the algorithm in \Cref{Fi:Running} is a linearizable implementation of a set, {\em assuming these assertions}.

Consider the assertions in \Cref{Fi:Running}. An assertion  $\{\Pred\}$ means that $\Pred$ holds now (i.e., in any state in which the next line of code executes).
An assertion of the form $\{\PastSymb \Pred\}$ means that $\Pred$ was true at some point between the invocation of the
operation and now.
The assertions contain predicates about the state of locked nodes, immutable fields, and predicates of the form
$\QReachXK{x}{k}$, which means that $x$ resides on a \emph{valid search path} for
key $k$ that starts at $\rootobj$; if $x=\NULL$ this indicates that $k$ is not in the tree
(because a valid search path to $k$ does not continue past a node with key $k$).
Formally, search paths between objects (representing nodes in the tree) are defined as follows:
$$
\begin{array}{l}
\QReachXYK{o_r}{o_x}{k}
\eqdef
\exists o_0,\ldots,o_m.\,
o_0=o_r \land o_m=o_x \land
\forall i=1..m.\, \nextChild(o_{i-1},k,o_i)\,, \text{ and }
\\[2pt]
\qquad
\nextChild(o_{i-1},k,o_i) = (o_{i-1}.key > k \land o_{i-1}.\ileft = o_i) \lor (o_{i-1}.key < k \land o_{i-1}.\iright = o_i) \ .
\end{array}
$$

One can prove linearizability from these assertions by, for example, using an \emph{abstraction function}
$\repfunc: H \totalto \powerset(\mathbb{N})$ that maps a concrete memory state\footnote{We use standard modeling of the memory state (the {\em heap}) as a function $H$ from locations to values; see \Cref{Se:LocalViewArguments}.} of the tree, $H$, to
the {\em abstract set} represented by this state, and showing that \code{contains}, \code{insert}, and \code{delete}
manipulate this abstraction according to their specification.
We define $\repfunc$ to map $H$ to the set of keys of the nodes that are on a valid search path for their key and
are not logically deleted in $H$:
\yotam{Why are keys natural numbers?}
\yotam{Type of the domain of the function $\repfunc$}
\costin{I added footnotes 2 and 3 to makes these things precise}
$$
\repfunc(H) = \{ k \in \mathbb{N} \mid H \models \exists x.\,\QReachXK{x}{k} \land x.key=k \land \neg x.del\},\ \text{ where } H \models P \text{ means that } P \text{ is true in state }H.
$$
\TODO{Reviewer 1: using $H$ as a notation for state implies that the state is given by
a history. Is that the intent?}

The assertions almost immediately imply that for every operation invocation $op$, there exists a state $H$ during
$op$'s execution for which the abstract state $\repfunc(H)$ agrees with $op$'s return value, and so $op$ can
be linearized at $H$.
\sharonx{``Method'' vs.\ operation}%
\iflong
We need only make the following observations.  First, \code{contains} and a failed \code{delete}
or \code{insert} do not modify the memory, and so can be linearized at the point in time in which the assertions
before their \code{return} statements hold.  Second, in the state $H$ in which a successful \code{delete(k)}
(respectively, \code{insert(k)}) performs a write, the assertions on line~\ref{Ln:DeleteSetDel}
(respectively, lines \ref{Ln:InsertSetDel}, \ref{Ln:InsertInsertLeft}, and~\ref{Ln:InsertInsertRight}) imply that $k \in \repfunc(H)$
(respectively, $k \not\in \repfunc(H)$).  Therefore, these writes change the abstract set, making it agree
with the operation's return value of $\true$.  Finally, since these are the only memory modifications performed by the set
operations, it only remains to verify that no write performed by an auxiliary operation in state $H$ modifies $\repfunc(H)$.
Indeed, as an operation modifies a field of node $v$ only when it has $v$ locked,
it is easy to see that for any node $x$ and key $k$, if $\QReachXK{x}{k}$ held before the write, then it also
holds afterwards with the exception of the removed node $y$. However, \code{removeRight} removes a deleted node, and thus does not change $\repfunc(H)$.
Further, \code{rotateRightLeft} links $z$ ($y$'s replacement) to the tree before unlinking $y$, so the existence of a search path to $y.k=z.k$ is retained (although the actual path changes), leaving the contents of the abstract set unchanged
because the $\mathit{del}$ bit in $z$ has the same value as in~$y$.
\else
We provide a more detailed discussion in the extended version~\cite{extendedVersion}.
\fi

\subsection{Proving the Assertions}
\label{Se:RunningProvingAssertions}
To complete the linearizability proof, it remains to prove the validity of the assertions in concurrent executions.
The most challenging assertions to prove are those concerning properties of unsynchronized traversals, which we target in this paper.
%
In \Cref{Se:LocalViewArguments} we present our framework, which allows to deduce assertions of the form of $\PReachXK{x}{k}$ at the end of (concurrent) traversals by considering only interference-free executions. We apply our framework to establish the assertions $\PReachXK{x}{k}$ and $\PReachXK{y}{k}$ in~\cref{Ln:RetLocate}. 
In fact, our framework allows to deduce slightly stronger properties, namely, of the form $\Past{\QReachXK{x}{k} \land \varphi(x)}$, where $\varphi(x)$ is a property of a single field of $x$ (see \Cref{rem:ReachWithField}). This is used to prove the assertions $\Past{\QReachXK{y}{k} \land \QXD{y}}$ in \cref{Ln:ContainsRetFalseDEL} and similarly in \cref{Ln:ContainsRetTrue}.
%
For completeness, we now show how the proof of the remaining assertions in \Cref{Fi:Running} is attained, when assuming
the assertions deduced by the framework. This concludes the linearizablity proof.

\para{Reachability related assertions}
In \cref{Ln:ContainsBeforeReadDEL} the fact that $\PReachXK{y}{k}$ is true follows from~\cref{Ln:RetLocate}.

%

The writes in \code{insert} and \code{delete} (\cref{Ln:InsertSetDel,Ln:InsertInsertLeft,Ln:InsertInsertRight,Ln:DeleteSetDel}) require that a path exists \emph{now}.
This follows from the $\PReachXK{x}{k}$ (known from the local view argument) and the fact that $\QXR[\neg]{x}$, using an invariant similar to preservation (see \Cref{example:preservation-cf}):
For every location $x$ and key $k$, if $\QReachXK{x}{k}$, then every write retains this unless it sets $\QXR{x}$ before releasing the lock on $x$ (this happens in~\cref{Ln:RotatePLeft,Ln:RemoveLeft,Ln:RemoveRight}). Thus, when \code{insert} and \code{remove} lock $x$ and see that it is not marked as removed, $\QReachXK{x}{k}$ follows from $\PReachXK{x}{k}$.
Note that the fact that writes other than~\cref{Ln:RotatePLeft,Ln:RemoveLeft,Ln:RemoveRight} do not invalidate $\QReachXK{x}{k}$ follows easily from their annotations.

\para{Additional assertions}
The invariant that keys are immutable justifies assertions referring to keys of objects that are read earlier, e.g.\ in~\cref{Ln:InsertSetDel} and the rest of the assertion in~\cref{Ln:ContainsRetTrue}
($y.key$ is read earlier in \code{locate}).
The rest of the assertions can be attributed to reading a location under the protection of a lock. An example of this is the assertion that $\QXR[\neg]{y}$ in~\cref{Ln:DeleteSetDel}.

\section{The Framework: Correctness of Traversals Using Local Views}
\label{Se:LocalViewArguments}
In this section we present the key technical contribution of our framework, which targets proving properties of traversals.
We address properties of \emph{reachability along search paths} (formally defined in \Cref{Se:ProofPresentSearchPaths}).
Roughly speaking, our approach considers the traversal in concurrent executions as operating without interference on a \emph{local view}: the thread's potentially inconsistent picture of memory obtained by performing reads concurrently with writes by other threads.
For a property $\PredSearch{k}{x} =\QReachXK{x}{k}$ of reachability along a search path,
we introduce conditions under which one can deduce that $\PastSymb \PredSearch{k}{x}$
holds in the actual global state of the concurrent data structure out of the fact that $\PredSearch{k}{x}$ holds in the local view of a single thread, where the latter is established using sequential reasoning (see \Cref{Se:SeqReasoning}).
This alleviates the need to reason about intermediate states of the traversal in the concurrent proof.



This section is organized as follows:
We start with some preliminary definitions.
\Cref{Se:ProofPresentSearchPaths} defines the abstract, general notion of search paths our framework treats. 
\Cref{Se:local-view} defines the notion of a local view which is at the basis of local view arguments. 
\Cref{Se:FrameworkConditions} formally defines the conditions under which local view arguments hold, and states our main technical result.
In \Cref{Se:ProofIdea} we sketch the ideas behind the proof of this result.

\para{Programming model}
A \emph{global state} (state) is a mapping between \emph{memory locations} (locations)  and \emph{values}.
A value is either a natural number, a location, or   $\NULL$.
Without loss of generality, we assume that threads share access to a global state. Thus, memory locations are used to store the values of fields of objects.
A \emph{concurrent  execution} (execution) is a sequence of states produced by an interleaving of atomic actions issued by threads.
We assume that each atomic action is either a \emph{read} or a \emph{write} operation.
(We treat synchronization actions, e.g., \emph{lock} and \emph{unlock}, as writes.)
A \emph{read} $r$ consists of a value $v$ and a location $\readr{r}$ with the meaning that $r$ reads $v$ from $\readr{r}$.
Similarly, a write $w$ consists of a value $v$ and a location $\modw{w}$ with the meaning that $w$ sets $\modw{w}$ to $v$.
We denote by $w(H)$ the state resulting from the execution of $w$ on state $H$.

\subsection{Reachability Along Search Paths}
\label{Se:ProofPresentSearchPaths}

The properties we consider are given by predicates of the form $\PredSearch{k}{x} = \rootobj \searchpath{k} x$, denoting reachability of $x$ by a $k$-search path, where $\rootobj$ is the entry point to the data structure.
A $k$-search path in state $H$ is a sequence of locations 
that is traversed when searching for a certain element, parametrized by $k$, in the data structure.
Reachability of an \emph{object} $x$ along a $k$-search path from $\rootobj$ is understood as the existence of a $k$-search path between designated locations of $x$, e.g.\ the key field, and $\rootobj$.

Search paths may be defined differently in different data structures (e.g., list, tree or array).
For example, $k$-search paths in the tree of \Cref{Fi:Running} consist of sequences $\langle x.key, x.\ileft, y.key \rangle$ where $y.key$ is the address pointed to by $x.\ileft$ (meaning, the location that is the value stored in $x.\ileft$) and $x.key > k$, or
	$\langle x.key, x.\iright, y.key \rangle$ where $y.key$ is the address pointed to by $x.\iright$ and $x.key < k$.
	This definition of $k$-search paths reproduces the definition of reachability along search paths from \Cref{Se:ProofLin}.

Our framework is oblivious to the specific definition of search paths, and only assumes the following properties of search paths (which are satisfied, for example, by the definition above):
%
\begin{itemize}
		\item If $\ell_1,\ldots,\ell_m$ is a $k$-search path in $H$ and $H'$ satisfies $H'(\ell_i) = H(\ell_i)$ for all $1 \leq i < m$, then $\ell_1,\ldots,\ell_m$ is a $k$-search path in $H'$ as well, i.e., the search path depends on the values of locations in $H$ only for the locations along the sequence itself (but the last).
		\item If $\ell_1,\ldots,\ell_m$ and $\ell_m,\ldots,\ell_{m+r}$ are both $k$-search paths in $H$, then so is $\ell_1,\ldots,\ell_m,\ldots,\ell_{m+r}$, i.e., search paths are closed under concatenation.
		\item If $\ell_1,\ldots,\ell_m$ is a $k$-search path in $H$ then so is $\ell_i,\ldots,\ell_j$ for every $1 \leq i \leq j \leq m$, i.e., search paths are closed under truncation.
	\end{itemize}

\begin{remarknum}\label{rem:ReachWithField}
It is simple to extend our framework to deduce properties of the form $\Past{\QReachXK{x}{k} \land \PredXF(x)}$ where $\PredXF(x)$ is a property of a single field of $x$. For example, $\PredXF(x) = x.\mathit{del}$ states that the field $\mathit{del}$ of $x$ is true.
As another example, the predicate $\QReachXK{x}{k} \land (x.\mathit{next} = y)$
says that the \emph{link} from $x$ to $y$ is reachable. See \refappendix{Se:ReachWithField} for details.
\end{remarknum}

\subsection{Local Views and Their Properties}
\label{Se:local-view}
We now formalize the notion of \emph{local view} and explain how properties of local views can be established using sequential reasoning.

%

\para{Local view} Let $\seqr= r_1,\ldots,r_d$ be a sequence of 
read actions executed by some thread. As opposed to the global state, the \emph{local view} of the reading thread refers to the inconsistent picture of the memory state that the thread obtains after issuing $\seqr$ (concurrently with writes).
Formally, the sequence of reads $\seqr$ induces a state $\hlocal$, which is constructed by assigning to every location $x$ which $\seqr$ reads the last value $\seqr$ reads in $x$. Namely,
when $\seqr$ starts, its local view $\hlocali{0}$ is empty, and, assuming its $i$th read of value $v$ from location $\loc$, the
produced local view is $\hlocali{i}=\hlocali{i-1}[\loc \mapsto v]$.
We refer to $\hlocal=\hlocali{d}$ as the \emph{local view} produced by $\seqr$ (\emph{local view} for short).
We emphasize that while technically $\hlocal$ is a state, it is \emph{not} necessarily an actual intermediate global state, and may have never existed in memory during the execution.


\para{Sequential reasoning for establishing properties of local views} \label{Se:SeqReasoning}
Properties of the local view $\hlocal$, which are the starting point for applying our framework, are established using \emph{sequential} reasoning.
Namely, proving that a predicate such as $\QReachXK{x}{k}$ holds in the local view at the end of the traversal amounts to proving that it holds in  \emph{any sequential} execution of the traversal, i.e., an execution without interference which starts at an \emph{arbitrary} program state.  This is because the concurrent traversal constructing the local view can be understood as a sequential execution that starts with the local view as the program state.

%
\begin{example}
\label[example]{example:seq-reason-cf}
In the running example, straightforward sequential reasoning shows that indeed $\QReachXK{x}{k}$ holds at~\cref{Ln:RetLocate} in sequential executions of \code{locate(k)} (i.e., executions without interference), no matter at which program state the execution starts.
This ensures that it holds, in particular, in the local view.
\end{example}

%
%
\costin{Can we give an example ? Maybe on Fig. 2}

\subsection{Local View Argument: Conditions \& Guarantees}
\label{Se:FrameworkConditions}
\yotam{Probably need to use some word other than ``framework'', it really hurts the eye...}
The main theorem underlying our framework bridges the discrepancy between the \emph{local view} of a thread as it performs a sequence of read actions, and the actual \emph{global state} during the traversal.

In the sequel, we fix a sequence of 
read actions $\seqr= r_1,\ldots,r_d$ executed by some thread, and denote the sequence of write actions executed concurrently with $\seqr$ by $\seqw = w_1,\ldots, w_n$.
We denote the global state when $\seqr$ starts its execution by $\hinit$, and the intermediate global states obtained after each prefix of these writes in $\seqw$ by $\opprefix{\hconc}{i} = w_1 \ldots w_i(\hinit)$.

Using the above terminology, our framework devises conditions for showing for a reachability property $\PredSearch{k}{x}$ that if $\PredSearch{k}{x}(\hlocal)$ holds, then there exists $0 \leq i \leq n$ such that $\PredSearch{k}{x}(\opprefix{\hconc}{i})$ holds, which means that $\PastSymb \PredSearch{k}{x}$ holds in the actual global state reached at the end of the traversal. We formalize these conditions below.

\subsubsection{Condition I: Temporal Acyclicity}
The first requirement of our framework concerns the order on the memory locations representing the data structure,
according to which readers perform their traversals. We require that writers maintain this order \emph{acyclic across intermediate states} of the execution.
%
For example, when the order is based on following pointers in the heap, then, if it is possible to reach location $y$ from location $x$ by following a path in which every pointer was present at \emph{some} point in time (not necessarily the same point), then it is not possible to reach $x$ from $y$ in the same manner.
This requirement is needed in order to ensure that the order is robust even from the perspective of a concurrent reading operation, whose local view is obtained from a fusion of fractions of states.
We begin formalizing this requirement with the notion of search order on memory.

\para{Search order}
The acyclicity requirement is based
on a mapping from a state $H$ to a \emph{partial order} that $H$ induces on memory locations, denoted $\leqloc_H$,
that captures the order in which operations read the different memory locations.
Formally, 
$\leqloc_H$ is a \emph{search order}:
\begin{definition}[Search order]
\label[definition]{def:search-order}
$\leqloc_H$ is a \emph{search order} if it satisfies the following conditions:
\begin{enumerate}[(i)]
	\item It is \emph{locally determined}: if $\ell_2$ is an immediate successor of $\ell_1$ in $\leqloc_H$, then for every $H'$ such that $H'(\ell_1) = H(\ell_1)$ it holds that $\ell_1 \leqloc_{H'} \ell_2$.
\TODO{CE: : I would replace “locally-determined” with “value-determined”. I have the impression that this condition is not about locality, but about saying that the order is determined by values stored in locations.}
	\item \label{it:search-order-path} Search paths follow the order: if there is a $k$-search path between $\ell_1$ and $\ell_2$ in $H$, then $\ell_1 \leqloc_H \ell_2$.
	\item \label{it:search-order-read} Readers follow the order: 
	reads in $\seqr$ always read a location further in the order in the current global state. Namely, if $\ell'$ is the last location read, the next read $r$ reads a location $\ell$ from the state $\opprefix{\hconc}{m}$ such that $\ell' \leqloc_{\opprefix{\hconc}{m}} \ell$.
\end{enumerate}
\end{definition}
Note that the locality of the order is helpful for the ability of readers to follow the order: the next location can be known to come forward in the order solely from the last value the thread reads.


\begin{example}
In the example of \Cref{Fi:Running},
the order $\leqloc_H$ is defined by following pointers from parent to children, i.e., all the fields of $x.\textit{left}$ and $x.\textit{right}$ are ordered after the fields of $x$, and the fields of an object are ordered by $x.key < \QXD{x} < \{x.\textit{left}, x.\textit{right}\}$.
%
It is easy to see that this is a search order. Locality follows immediately, and so does the property that search paths follow the order.
The fact that the read-in-order property holds for all the methods in~\Cref{Fi:Running} follows from a very simple syntactic analysis, e.g., in the case of \code{locate(k)}, children are always read after their parents and the field $\textit{key}$ is always accessed before $\textit{left}$ or $\textit{right}$.
\end{example}

\begin{remarknum}
Different search orders may be used for different traversals and different $k$'s when establishing $\PReachXK{x}{k}$ at the end of the traversal. 
In \Cref{def:search-order}, condition (\ref{it:search-order-read}) considers (just) the reads performed by the traversal of interest, and condition (\ref{it:search-order-path}) considers the possible search paths it constructs in the local view (just) for the $k$ of interest.
\end{remarknum}

%
%
%


\para{Accumulated order and acyclicity}
%
The accumulated order captures the order as it may be observed by concurrent traversals across different intermediate states.
 Formally, we define the \emph{accumulated order} w.r.t.\ a sequence of writes $\hat{w}_1,\ldots,\hat{w}_m$, denoted $\leqacculoc_{\hat{w}_1 \ldots \hat{w}_m(\hinit)}$, as the transitive closure of
$\bigcup\limits_{0 \leq s \leq m} \leqloc_{\hat{w}_1 \ldots \hat{w}_s(\hinit)}$.
In our example, the accumulated order consists of all parent-children links created during an execution.
We require:

\begin{definition}[Acyclicity]
\label[definition]{def:cond-acyclicity}
 We say that $\leqloc_H$ satisfies \emph{acyclicity of accumulated order} w.r.t.\ a sequence $\seqw = w_1,\ldots,w_n $ of writes if the accumulated order $\leqacculoc_{w_1 \ldots w_n(\hinit)}$ is a partial order.
\end{definition}

\begin{example}
\label[example]{example:acyclicity-cf}
In our running example, 
acyclicity holds because \code{insert}, \code{remove}, and \code{rotate} modify the pointers from a node only to point to new nodes, or to nodes that have already been reachable from that node. Modifications to other fields have no effect on the order. Note that \code{rotate} does not perform the rotation in place, but allocates a new object.
\sharon{revived following:}
Therefore, the accumulated order, which consists of all parent-children links created during an execution, is acyclic, and hence remains a partial order.
%
\end{example}

\subsubsection{Condition II: Preservation of Search Paths}
The second requirement of our framework is that for every write action
$w$ which happens concurrently with the sequence of reads $\seqr$ and modifies location $\modw{w}$,
if $\modw{w}$ was $k$-reachable (i.e., $\PredSearch{k}{\modw{w}}$ was true) at some point in time after $\seqr$ started and before $w$ occurred,
then it also holds right before $w$ is performed.
We note that this must hold in the presence of all possible interferences, including writes that operate on behalf of other keys (e.g.\ $\code{insert}(k')$).
Formally, we require:


\begin{definition}[Preservation]
\label[definition]{def:cond-preservation}
We say that $\seqw$ \emph{ensures preservation of $k$-reachability by search paths} if for every $1 \leq m \leq n$, 
if for some $0 \leq i < m$, $\opprefix{\hconc}{i} \models \PredSearch{k}{\modw{w_m}}$  then $\opprefix{\hconc}{m-1} \models \PredSearch{k}{\modw{w_m}}$.
\end{definition}
Note that $\opprefix{\hconc}{m-1} \models \PredSearch{k}{\modw{w_m}}$ iff $\opprefix{\hconc}{m} \models \PredSearch{k}{\modw{w_m}}$ since the search path to $\modw{w_m}$ is not affected by $w_m$ (by the basic properties of $\PredSearch{k}{\modw{w_m}}$, see \Cref{Se:ProofPresentSearchPaths}).

\begin{example}
\label[example]{example:preservation-cf}
In our running example, preservation holds because $w_m$ either modifies a location that has never been reachable (such as \cref{Ln:RotateFresh}), in which case preservation holds vacuously, or holds the lock on $x$ when $\QXR[\neg]{x}$ (without modifying its predecessor earlier under this lock).\footnote{
	In \cref{Ln:RotateXRight}, because $x$ is a child of $y$ which is a child of $p$ and $\QXR[\neg]{p}$, it follows that $\QXR[\neg]{x}$ because a node marked with ${\tt rem}$ loses its single parent beforehand.
}
In the latter case preservation holds because every previous write $w'$ retains $\QReachXK{\modw{w_m}}{k}$ unchanged unless it sets the field ${\tt rem}$ of $x$ to $\true$ before releasing the lock on $x$. Therefore, $\QReachXK{\modw{w_m}}{k}$ is retained still when $w_m$ is performed. Preservation follows.
\end{example}

\yotam{Trying to answer Reviewer 1: In particular, I'm not sure I fully understand the second condition, ``preservation''. Why does it need the k-reachability to hold *before* the write is performed? I expected something along the lines of every write preserves all k-paths. Example 7 doesn't really help because the example seems to be satisfying a stronger property.}
We emphasize that the preservation condition only requires that $k$-reachability is retained to modified locations $\ell$ and only at the point of time when the write $w$ to $\ell$ is performed; $k$-reachability \emph{may} be lost at later points in time. In particular, locations whose reachability has been reduced may be accessed, as long as they are not modified after the reachability loss.
For example, consider a rotation as in \Cref{Fi:Rotation}.
The rotation breaks the $k$-reachability of $y$: $\QReachXK{y}{k}$ holds before the rotation but not afterwards.
Indeed, our framework does not establish $\QReachXK{y}{k}$, but infers $\PReachXK{y}{k}$, which does hold.
In this example, the preservation condition requires that the left and right pointers of $y$ are not modified after this rotation is performed.\footnote{Modification of $y.{\tt rem}$ is allowed because this field does not affect search paths (see \Cref{Se:ProofPresentSearchPaths}).}
On the other hand, concurrent traversals may \emph{access} $y$.
In the example, this happens
when (1) the traversal continues beyond $y$ in the search for $k' \neq k$, and when (2) the traversal searches for $k$ and terminates in $y$.

\subsubsection{Local View Arguments' Guarantee}
We are now ready to formalize our main theorem, relating reachability in the local view (\Cref{Se:local-view}) to reachability in the global state, provided that the conditions from \Cref{def:cond-acyclicity,def:cond-preservation} are satisfied.

\begin{theorem}
\label{thm:main-result-instance-reach-short}
If
\begin{inparaenum}[(i)]
	\item $\leqloc_H$ is a search order satisfying the accumulated acyclicity property w.r.t.\ $\seqw$, and
	\item $\seqw$ ensures preservation of $k$-reachability by search paths, 
\end{inparaenum}
then for every $k$ and location $x$, if $\PredSearch{k}{x}(\hlocal)$ holds, then there exists
$0 \leq i \leq n$ s.t.\ $\PredSearch{k}{x}(\opprefix{\hconc}{i})$ holds.
\end{theorem}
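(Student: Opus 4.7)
My plan is to bridge $\hlocal$ and the real execution through a sequence of \emph{fabricated} states. Let $\seqwf = w_{m_1},\dots,w_{m_q}$ be the subsequence of $\seqw$ whose effects persist in $\hlocal$---so that, for every location $\ell$ read by $\seqr$, $\hlocal(\ell)$ equals the value produced by the last included write to $\ell$, or by $\hinit$ if there is none. Applying $\seqwf$ to $\hinit$ in order yields a sequence of states $H^*_0 = \hinit, H^*_1,\dots,H^*_q$, and by construction $H^*_q$ agrees with $\hlocal$ on every location read by $\seqr$. Any $k$-search path witnessing $\PredSearch{k}{x}(\hlocal)$ uses only read locations at its internal steps (values at all non-endpoint locations of the path must have been read in order to be recorded in $\hlocal$), so it also witnesses $\PredSearch{k}{x}(H^*_q)$. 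The theorem thus reduces to transferring reachability from $H^*_q$ back into the real execution.

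\textbf{Forward agreement.} The core invariant I will maintain is that, for every $1 \le s \le q$, the fabricated state $H^*_s$ agrees with the real state $\opprefix{\hconc}{m_s}$ on every location lying above $\modw{w_{m_s}}$ in the accumulated order. Accumulated acyclicity (\Cref{def:cond-acyclicity}) makes this ``forward cone'' a well-defined partial order across all intermediate fabricated and real states, and the read-in-order clause of \Cref{def:search-order} ensures that any write in $\seqw \setminus \seqwf$ that touches a forward location would have to be observed by some later read in $\seqr$---contradicting its omission from $\seqwf$. Consequently only writes modifying locations outside the forward cone of $\modw{w_{m_s}}$ are allowed to distinguish $H^*_{s-1}$ from $\opprefix{\hconc}{m_s-1}$.

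\textbf{Inductive lift.} With forward agreement in hand, I will prove by induction on $s$: for every location $y$, if $\PredSearch{k}{y}(H^*_s)$ holds then $\PredSearch{k}{y}(\opprefix{\hconc}{i})$ holds for some $0 \le i \le m_s$. The base case $s = 0$ is immediate since $H^*_0 = \opprefix{\hconc}{0}$. For the step, if $\PredSearch{k}{y}(H^*_{s-1})$ already holds, the IH applies directly. Otherwise $w_{m_s}$ created the path, so $\modw{w_{m_s}}$ lies on it, and because search paths are insensitive to the value at their endpoint (\Cref{Se:ProofPresentSearchPaths}), the prefix $\PredSearch{k}{\modw{w_{m_s}}}$ already holds in $H^*_{s-1}$. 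The IH then yields some $i_0 \le m_{s-1} < m_s$ with $\PredSearch{k}{\modw{w_{m_s}}}(\opprefix{\hconc}{i_0})$, and preservation (\Cref{def:cond-preservation}) propagates this to $\opprefix{\hconc}{m_s-1}$, hence to $\opprefix{\hconc}{m_s}$. Forward agreement then extends the witness with the suffix from $\modw{w_{m_s}}$ to $y$, yielding $\PredSearch{k}{y}(\opprefix{\hconc}{m_s})$. Instantiating at $s = q$ and $y = x$ completes the proof.

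\textbf{Main obstacle.} The hard part, I expect, is establishing forward agreement rigorously: one must argue---purely from accumulated acyclicity and the reader's adherence to the order---that $H^*_s$ and $\opprefix{\hconc}{m_s}$ agree throughout the entire forward cone of $\modw{w_{m_s}}$, in spite of arbitrarily many unrelated writes intervening in the real execution. This is where the interplay between the order on memory and the read-in-order condition must be spelled out with care: any omitted write that reaches into the forward cone is ruled out by the observation that some subsequent read would be forced to see it, contradicting the choice of $\seqwf$. Once this geometric invariant is on firm footing, the inductive lift above is comparatively routine.
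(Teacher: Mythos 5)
Your overall architecture---fabricated states built from a subsequence $\seqwf$, a forward-agreement invariant, and an induction that uses preservation to supply the prefix of the newly created path and forward agreement to supply its suffix---is exactly the paper's. The gap is in the one step you yourself flag as the main obstacle: forward agreement fails for your choice of $\seqwf$. You take $\seqwf$ to be only the writes whose effects persist in $\hlocal$, and you justify agreement on the forward cone by claiming that any omitted write to a location forward of $\modw{w_{m_s}}$ ``would have to be observed by some later read.'' The read-in-order condition does not give you this: it only says that each successive read is to a location further along the order, not that the reader visits every location in the forward cone. A traversal reads a single path; a concurrent write to a branch not taken (e.g., to $x.\mathit{left}$ while the reader descends through $x.\mathit{right}$, or to a field such as $x.\mathit{del}$ that the reader skips) modifies a location forward of an earlier read, is never observed, is therefore omitted from your $\seqwf$, and leaves $H^*_{s-1}$ and $\opprefix{\hconc}{m_s-1}$ disagreeing inside the forward cone of a later included write. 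Restricting agreement to the locations of the particular witnessing path does not rescue the induction either, because at intermediate steps $s<q$ the path created in the fabricated state may pass through locations the reader never reads, so the ``a later read would see it'' argument has no purchase there.

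The paper resolves this by making the opposite choice: $\seqwf$ contains, for every read $r$ of location $\ell_r$ from state $\opprefix{\hconc}{m}$, \emph{all} writes $w_j$ with $j \leq m$ and $\ell_r \leqacculoc_{w_1\ldots w_m(\hinit)} \modw{w_j}$---not merely the observed ones. Forward agreement then holds essentially by construction, and the nontrivial burden shifts to the other side: showing that these extra writes do not clobber values the reader actually recorded, i.e., that $\hlocal$ is still contained in the resulting fabricated state. That containment is where read-in-order and acyclicity of the accumulated order are genuinely used---an extra write targets a location forward of some read, so by acyclicity it cannot coincide with a location whose last read happened earlier. Your proof needs this reversal; as written, the forward-agreement lemma you rely on is false, and the remainder of your (otherwise correct) inductive lift has nothing to stand on.
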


In~\refappendix{Se:NecessityConditios} we illustrate how violating these conditions could lead to incorrectness of traversals. \yf{Not accurate, we demonstrate for read in order instead of acyclicity...} \Cref{Se:ProofIdea} discusses the main ideas behind the proof.

\subsection{Proof Idea}
\label{Se:ProofIdea}
We now sketch the correctness proof of \Cref{thm:main-result-instance-reach-short}.
(The full details appear in \refappendix{Se:AbstractFrameworkLong}.)
The theorem transfers $\PredSearch{k}{x}$ from the local view to the global state.
Recall that the local view is a fusion of the fractions of states observed by the thread at different times.
To relate the two, we study the local view from the lens of a fabricated state: a state resulting from a subsequence of the interfering writes, which includes the observed local view.
We exploit the cooperation between the readers and the writers that is guaranteed by the order $\leqloc_H$ (which readers and writers maintain) to construct a fabricated state which is closely related to the global state, in the sense that it \emph{simulates} the global state (\Cref{def:uniform-change-abstract-short}); simulation depends both on the acyclicity requirement and on the preservation requirement (\Cref{lemma:simulation-from-conditions}).
Deducing the existence of a search path in an intermediate global state out of its existence in the local view is a corollary of this connection (\Cref{thm:abstract-framework}).

\para{Fabricated state}
The fabricated state provides a means of analyzing the local view and its relation to the global (true) state.
A \emph{fabricated state} is a state consistent with the local view (i.e.\ it agrees with the value of every location present in the local view) that is constructed by a subsequence $\seqwf = w_{i_1},\ldots, w_{i_k}$ of the writes $\seqw$.
One possible choice for $\seqwf$ is the subsequence of writes whose effect was observed by $\seqr$ (i.e.\ $\seqr$ read-from).
For relating the local view to the global state, which is constructed from the entire $\seqw$, it is beneficiary to include in $\seqwf$ additional writes except for those directly observed by $\seqr$.
In what follows, we choose the subsequence $\seqwf$ so that the fabricated state satisfies a consistency property of \emph{forward-agreement} with the global state. This means that although not all writes are included in $\seqwf$ (as the thread misses some),
the writes that are included 
have the same picture of the ``continuation'' of the data structure as it really was in the global state.

\para{Construction of fabricated state based on order}
Our construction of the fabricated state includes in $\seqwf$ 
all the writes that occurred \emph{backward} in time and wrote to locations \emph{forward} in the order than the current location read, for every location read.
(In particular, it includes all the writes that $\seqr$ reads from directly).
%
Formally, let $\modw{w}$ denote the location modified by write $w$.
Then for every read $r$ in $\seqr$ that reads location $\ell_r$ from global state $\opprefix{\hconc}{m}$, we include in $\seqwf$ all the writes $\{w_j \mid j \leq m \land \ell_r \leqacculoc_{w_1 \ldots w_m(\hinit)} \modw{w_j}\}$ (ordered as in $\seqw$).
We use the notation $\opprefix{\hobserved}{j} = w_{i_1}\ldots w_{i_j}(\hinit)$ for intermediate fabricated states.
This choice of $\seqwf$ ensures \emph{forward-agreement} between the fabricated state and the global state: every write $w_{i_j}$ in $\seqwf$, the states on which it is applied, $\opprefix{\hconc}{i_j-1}$ and $\opprefix{\hobserved}{j-1}$ agree on all locations $\ell$ such that $\modw{w_{i_j}} \leqloc_{\opprefix{\hobserved}{j-1}} \ell$.

In what follows, we fix the fabricated state to be the state resulting at the end of this particular choice of $\seqwf$.
It satisfies forward-agreement by construction, and is an extension of the local view,
relying on the \emph{acyclicity} requirement.

\sharon{I can live with this version, but note that forward agreement repeats 3 times... twice with an explanation. YF: changed one to be more formal}


\para{Simulation}
As we show next, the construction of $\seqwf$ ensures that the effect of every write in $\seqwf$ on $\PredSearch{k}{x}$ is guaranteed to concur with its effect on the real state with respect to changing $\PredSearch{k}{x}$ from false to true. We refer to this property as \emph{simulation}.


\begin{definition}[Simulation]
\label[definition]{def:uniform-change-abstract-short}
For a predicate $\Pred$, we say that the subsequence of writes $w_{i_1} \ldots w_{i_k}$ \emph{$\Pred$-simulates} the sequence $w_1 \ldots w_n$ 
if for every $1 \leq j \leq k$,
if $\neg\Pred(\opprefix{\hobserved}{j-1})$ but $\Pred(w_{i_j}(\opprefix{\hobserved}{j-1}))$, then
$\neg\Pred(\opprefix{\hconc}{i_{j}-1}) \implies \Pred(w_{i_j}(\opprefix{\hconc}{i_{j}-1}))$.
%
\end{definition}

Simulation implies that the write $w_{i_j}$ in $\seqwf$ that changed $\PredSearch{k}{x}$ to true on the local view, would also change it on the corresponding global state $\opprefix{\hconc}{i_j}$ (unless it was already true in $\opprefix{\hconc}{i_j-1}$). This provides us with the desired global state where $\PredSearch{k}{x}$ holds.
Using also the fact that $\PredSearch{k}{x}$ is upward-absolute~\cite{absoluteness} (namely, preserved under extensions of the state), we obtain:
\begin{lemma} \label[lemma]{thm:abstract-framework}
Let $\seqwf$ be the subsequence of $\seqw = w_{1},\ldots, w_{n}$ defined above.
If $\PredSearch{k}{x}(\hlocal)$ holds and $\seqwf$ $\PredSearch{k}{x}$-simulates 
$\seqw$, then there exists some $0 \leq i \leq n$ s.t.\ $\PredSearch{k}{x}(\opprefix{\hconc}{i})$.
\end{lemma}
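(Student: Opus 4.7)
The plan is to locate the ``critical'' write in the fabricated sequence $\seqwf = w_{i_1},\ldots,w_{i_k}$ at which $\PredSearch{k}{x}$ first becomes true, and then use simulation to export this transition to the global execution. Concretely, I would first observe that by construction the fabricated state $\opprefix{\hobserved}{k}$ is an extension of the local view $\hlocal$ (this is stated in the paragraph following \Cref{def:uniform-change-abstract-short} and follows from how $\seqwf$ is built, using acyclicity to avoid conflicts). Since $\PredSearch{k}{x}$, being reachability along a $k$-search path, is \emph{upward-absolute}, the assumption $\PredSearch{k}{x}(\hlocal)$ lifts to $\PredSearch{k}{x}(\opprefix{\hobserved}{k})$.

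Next I would perform a minimality argument on the fabricated sequence. If $\PredSearch{k}{x}(\opprefix{\hobserved}{0})$ already holds, then since $\opprefix{\hobserved}{0} = \hinit = \opprefix{\hconc}{0}$ we are done with $i = 0$. Otherwise, let $j \in \{1,\ldots,k\}$ be the smallest index with $\PredSearch{k}{x}(\opprefix{\hobserved}{j})$. Minimality gives $\neg \PredSearch{k}{x}(\opprefix{\hobserved}{j-1})$, so the write $w_{i_j}$ flips the predicate from false to true on the fabricated state, i.e.\ $\PredSearch{k}{x}(w_{i_j}(\opprefix{\hobserved}{j-1}))$ holds while $\PredSearch{k}{x}(\opprefix{\hobserved}{j-1})$ does not. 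This is exactly the hypothesis of the simulation property in \Cref{def:uniform-change-abstract-short}.

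Applying $\PredSearch{k}{x}$-simulation of $\seqwf$ vs.\ $\seqw$ at index $j$ yields the implication $\neg \PredSearch{k}{x}(\opprefix{\hconc}{i_j - 1}) \implies \PredSearch{k}{x}(w_{i_j}(\opprefix{\hconc}{i_j - 1}))$. Since $w_{i_j}(\opprefix{\hconc}{i_j - 1}) = \opprefix{\hconc}{i_j}$, there are two cases: either $\PredSearch{k}{x}$ already held in $\opprefix{\hconc}{i_j - 1}$, in which case $i = i_j - 1$ witnesses the conclusion, or it holds in $\opprefix{\hconc}{i_j}$, in which case $i = i_j$ does. In both cases the desired intermediate global state exists.

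The main obstacle I anticipate is cleanly justifying that $\opprefix{\hobserved}{k}$ extends $\hlocal$ and that the predicate is upward-absolute on extensions in the sense just needed: the former uses the particular construction of $\seqwf$ together with the acyclicity of $\leqacculoc$ to argue no write in $\seqwf$ overwrites a location that was read and retained in $\hlocal$, while the latter follows from the basic properties of $k$-search paths listed in \Cref{Se:ProofPresentSearchPaths} (a search path existing in a sub-state persists in any super-state). Once these are in place, the minimality plus simulation argument above is essentially book-keeping.
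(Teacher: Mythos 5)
Your proposal is correct and follows essentially the same route as the paper: the paper also lifts $\PredSearch{k}{x}$ from $\hlocal$ to the fabricated state via upward-absoluteness (with $\hlocal \subseteq \hobserved$ established separately from acyclicity and read-in-order), then takes the minimal index $j$ at which the predicate first becomes true along $\seqwf$ and applies simulation to conclude with $i = i_j - 1$ or $i = i_j$. No gaps.
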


Finally, we show that the fabricated state satisfies the simulation property.
Owing to the specific construction of $\seqwf$, the proof needs to relate the effect of writes on states which have a rather strong similarity:
they agree on the contents of locations which come forward of the modified location. Preservation complements this by guaranteeing the existence of a path to the modified location:
\begin{lemma}\label[lemma]{lemma:simulation-from-conditions}
If $\seqw$ satisfies preservation of $\PredSearch{k}{\modw{w}}$ for all $w$, then $\seqwf$ $\PredSearch{k}{x}$-simulates 
$\seqw$ for all $x$.
\end{lemma}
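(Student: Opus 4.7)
The plan is to prove the simulation clause by strong induction on the step $j$ within $\seqwf$, quantifying over all predicates of the form $\PredSearch{k}{x}$. Fix $j$, $k$, and $x$, write $\Pred=\PredSearch{k}{x}$, and assume the three hypotheses of simulation: $\neg\Pred(\opprefix{\hobserved}{j-1})$, $\Pred(w_{i_j}(\opprefix{\hobserved}{j-1}))$, and $\neg\Pred(\opprefix{\hconc}{i_j-1})$. The goal is to construct a $k$-search path from $\rootobj$ to $x$ in $\opprefix{\hconc}{i_j}$.

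First, I would analyze how $\Pred$ becomes true in the fabricated trajectory. Since a $k$-search path $\ell_1,\ldots,\ell_m$ depends only on the values at $\ell_1,\ldots,\ell_{m-1}$, and $w_{i_j}$ modifies only $\modw{w_{i_j}}$, any path witnessing $\Pred(\opprefix{\hobserved}{j})$ but not $\Pred(\opprefix{\hobserved}{j-1})$ must pass through $\modw{w_{i_j}}$ at some internal position $s<m$. By closure of search paths under truncation, $\ell_1,\ldots,\ell_s$ is a $k$-search path to $\modw{w_{i_j}}$ whose existence does not depend on the value at its own endpoint, hence it is already present in $\opprefix{\hobserved}{j-1}$, giving $\PredSearch{k}{\modw{w_{i_j}}}(\opprefix{\hobserved}{j-1})$.

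Next, I would lift this reachability into the real trajectory. By the inductive hypothesis applied to the predicate $\PredSearch{k}{\modw{w_{i_j}}}$, the first $j-1$ steps of $\seqwf$ satisfy simulation for this predicate against the corresponding prefix of $\seqw$. Tracking the latest step at which $\PredSearch{k}{\modw{w_{i_j}}}$ flips from false to true in the fabricated trajectory and invoking simulation there (or else noting it already holds at $\hinit=\opprefix{\hconc}{0}$)---essentially an internal reprise of the argument of \Cref{thm:abstract-framework}---yields some $i'<i_j$ with $\PredSearch{k}{\modw{w_{i_j}}}(\opprefix{\hconc}{i'})$. Preservation of $\PredSearch{k}{\modw{w_{i_j}}}$ applied to the write $w_{i_j}$ then propagates this up to $\opprefix{\hconc}{i_j-1}$, and since $w_{i_j}$ modifies only the value at its own endpoint---which the prefix path does not consult---the prefix persists into $\opprefix{\hconc}{i_j}$.

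Finally, I would splice the real prefix with the suffix $\ell_s,\ldots,\ell_m$ of the fabricated path, using closure under concatenation. For this, the suffix itself must be a $k$-search path in $\opprefix{\hconc}{i_j}$; since $w_{i_j}$ writes the same value to $\modw{w_{i_j}}$ in both trajectories, it suffices to show that $\opprefix{\hobserved}{j-1}$ and $\opprefix{\hconc}{i_j-1}$ agree at the interior locations $\ell_{s+1},\ldots,\ell_{m-1}$. By forward-agreement, this agreement is guaranteed whenever those locations lie forward of $\modw{w_{i_j}}$ in $\leqloc_{\opprefix{\hobserved}{j-1}}$. I expect this to be the main obstacle: the search-paths-follow-order property of \Cref{def:search-order}(ii) only places them forward in $\leqloc_{\opprefix{\hobserved}{j}}$, so bridging the two orders requires a careful step-by-step use of local determination (\Cref{def:search-order}(i)) along the suffix, exploiting the fact that $w_{i_j}$ leaves the values at $\ell_{s+1},\ldots,\ell_{m-1}$ intact so that successorship from each $\ell_i$ ($i>s$) transfers unchanged from $\opprefix{\hobserved}{j}$ to $\opprefix{\hobserved}{j-1}$, and ultimately invoking the acyclicity of the accumulated order to rule out any inconsistency between the two snapshots of $\leqloc$.
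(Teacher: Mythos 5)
Your proposal reproduces the paper's proof of \Cref{lemma:simulation-from-conditions} almost step for step: the same mutual induction (simulation together with the hindsight implication that $\PredSearch{k}{x}(\opprefix{\hobserved}{j-1})$ yields $\PredSearch{k}{x}(\opprefix{\hconc}{i})$ for some earlier $i$), the same observation that a path newly created in $\opprefix{\hobserved}{j}$ must pass through $\modw{w_{i_j}}$ so that truncation gives $\PredSearch{k}{\modw{w_{i_j}}}$ already in $\opprefix{\hobserved}{j-1}$, the same use of the induction hypothesis plus preservation to obtain a prefix path to $\modw{w_{i_j}}$ in the global state at the time of the write, and the same splicing of that prefix with the fabricated suffix via forward-agreement and closure under concatenation.

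The one flaw is in your final paragraph. The ``main obstacle'' you anticipate does not arise, and the bridge you propose would not close it if it did. Forward-agreement, as the construction of $\seqwf$ actually delivers it (\Cref{def:forward-agreement} and \Cref{lem:forward-agreement}), guarantees that $\opprefix{\hobserved}{j-1}$ and $\opprefix{\hconc}{i_j-1}$ agree on every $\ell$ satisfying $\modw{w_{i_j}} \leqloc_{\opprefix{\hobserved}{j-1}} \ell$ \emph{or} $\modw{w_{i_j}} \leqloc_{\opprefix{\hobserved}{j}} \ell$; the post-write disjunct is available precisely because $\seqwf$ is defined from the \emph{accumulated} order, which subsumes both snapshots. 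Since the suffix $\ell_s,\ldots,\ell_m$ is a $k$-search path in $\opprefix{\hobserved}{j}$, condition (ii) of \Cref{def:search-order} places its interior locations forward of $\modw{w_{i_j}}$ in $\leqloc_{\opprefix{\hobserved}{j}}$, and that disjunct applies directly --- no transfer to the pre-write order is needed. Had you restricted yourself to the pre-write order, your local-determination argument would fail at the very first link of the suffix: local determination transfers successorship only from a location whose \emph{value} agrees between the two states, but $\modw{w_{i_j}}$ is exactly the location whose value $w_{i_j}$ changes, so $\modw{w_{i_j}} \leqloc_{\opprefix{\hobserved}{j}} \ell_{s+1}$ says nothing about $\leqloc_{\opprefix{\hobserved}{j-1}}$ (the write may be redirecting $\modw{w_{i_j}}$ to a brand-new successor, as in an insertion). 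Acyclicity of the accumulated order cannot repair this; it excludes cycles, it does not make the two snapshots agree. Replace the bridging attempt with an appeal to the two-sided forward-agreement guarantee and your argument coincides with the paper's.
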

To prove the lemma, we show that preservation, together with forward agreement, implies the simulation property, which in turn implies that
$\PredSearch{k}{x}(\opprefix{\hobserved}{j-1}) \implies \exists 0 \leq i \leq i_{j-1} \ \PredSearch{k}{x}(\opprefix{\hconc}{i})$  (see \Cref{thm:abstract-framework}).
%
%
To show simulation, consider a write $w_{i_j}$ 
that creates a $k$-search path $\zeta$ to $x$ in $\opprefix{\hobserved}{j}$. We construct such a path in the corresponding global state. The idea is to
divide $\zeta$ to two parts: the prefix until $\modw{w_{i_j}}$, and the rest of the path.
Relying on forward agreement, the latter is exactly the same in the corresponding global state, and preservation lets us prove that there is also an appropriate prefix: necessarily there has been a $k$-search path to $\modw{w_{i_j}}$ in the fabricated state \emph{before} $w_{i_j}$, so by induction, exploiting the fact that simulation up to $j-1$ implies that $\PredSearch{k}{x}(\opprefix{\hobserved}{j-1}) \implies \exists 0 \leq i \leq i_{j-1}. \ \PredSearch{k}{x}(\opprefix{\hconc}{i})$,
there has been a $k$-search path to $\modw{w_{i_j}}$ in some intermediate global state that occurred earlier than the time of $w_{i_j}$.
Since $w_{i_j}$ writes to $\modw{w_{i_j}}$, the preservation property ensures that there is a $k$-search path to $\modw{w_{i_j}}$ in the global state also at the time of the write $w_{i_j}$, and the claim follows.



\section{Putting It All Together: Proving Linearizability Using Local Views}
\label{Se:PuttingTogetherCF}

Recall that our overarching objective in developing the local view argument (\Cref{Se:LocalViewArguments}) is to prove the correctness of assertions used in linearizability proofs (e.g., in~\Cref{Se:ProofLin}).
We now summarize the steps in the proof of the assertions.
Overall, it is composed of the following steps:
\begin{enumerate}
	\item \label{it:prove-assert:seq-reasoning} Establishing properties of traversals on the local view using sequential reasoning,
	\item \label{it:prove-assert:conditions} Establishing the acyclicity and preservation conditions by simple concurrent reasoning, and
	\item \label{it:prove-assert:rest} Proving the assertions when relying on local view arguments, augmented with some concurrent reasoning.
\end{enumerate}

For the running example, step \ref{it:prove-assert:seq-reasoning} is presented in \Cref{example:seq-reason-cf}, and step \ref{it:prove-assert:conditions} consists of \Cref{example:acyclicity-cf,example:preservation-cf} (see \refappendix{Se:FormalRunningConditions} for a full formal treatment). Step \ref{it:prove-assert:rest} concludes the proof as discussed in \Cref{Se:RunningProvingAssertions}.


\begin{remarknum}\label{remark:assertions-based-reasoning}
While the local view argument, relying in particular on step \ref{it:prove-assert:conditions}, was developed to simplify the proofs of the assertions in \ref{it:prove-assert:rest},
this goes also in the other direction. Namely,
the concurrent reasoning required for proving the conditions of the framework (e.g., preservation) \yotam{Omitted (I guess it refers to rely-guarantee, but its too subtle, and it's standard, no?): as well as the assertions}
can be greatly simplified by relying on the correctness of the assertions (as they constrain possible interfering writes).
Indeed, the proofs may mutually rely on each other.
This is justified by a proof by induction: we prove that the current write satisfies the condition in the assertion, assuming that all previous writes did.
This is also allowed in proofs of the conditions in \Cref{Se:FrameworkConditions}, because they refer to the effect of \emph{interfering} writes, that are known to conform to their respective assertions from the induction hypothesis. Hence, carrying these proofs together avoids circular reasoning and ensures validity of the proof.
\end{remarknum}

\ignore{
\subsection{Performing Local View Arguments}
\label{Se:PerformingLocalView}
Recall from \Cref{Se:LocalViewArguments} that a local view argument allows the algorithm designer to deduce reachability properties of traversals based on two components: sequential reasoning about the traversal, and concurrent reasoning for proving the acyclicity and preservation conditions.
We now expound on the reasoning the user performs when they use a local view argument in the running example.
\sharon{we wanted to make it not only about the example}

\para{Step \ref{it:prove-assert:seq-reasoning}: Sequential reasoning about traversal}
\yotam{Where does this go to?:\\
The starting point of applying our framework for proving $\Past{\PredOv}$ at the end of $R$, i.e., that $\PredOv$ held during the execution of $R$,
is to show that
$\PredOv$ \emph{presently} holds in $\hlocal$ (at the end of $R$).
This amounts to proving that $\PredOv$ holds in  \emph{any sequential} execution of $R$, i.e.,
an execution without interference, which starts at an \emph{arbitrary} program state.}

The goal at this step is proving that in the local view, the reachability property of interest holds at the end of the traversal.
\yotam{Omitted:
This is the starting point of applying our framework for proving $\Past{\QReachXK{x}{k}}$ at the end of the traversal, i.e., that $\QReachXK{x}{k}$ held during the execution of the traversal.
}
In the running example, straightforward sequential reasoning shows that indeed $\QReachXK{x}{k}$ holds at~\cref{Ln:RetLocate} in sequential executions of \code{locate(k)} (i.e., executions without interference), no matter at which program state the execution starts.
This ensures that it holds, in particular, at the local view, since the concurrent execution constructing the local view can be understood as a sequential execution where the local view is the initial state.



\para{Step \ref{it:prove-assert:conditions}: Concurrent reasoning for acyclicity and preservation}
The goal at this step is to prove the conditions of local view arguments: acyclicity and preservation.
Explanation of the validity of these properties in the running example appear in \Cref{example:acyclicity-cf,example:preservation-cf}.\footnote{A formal exposition of the proofs of the conditions appears in \Cref{Se:ProvingAssert}.}
\sharon{I think we should remove this entire paragraph and instead put the pointers already at the opening paragraph of the section}

\ignore{
	For \emph{acyclicity}, in the running example the order is defined by following pointers from parent to children. Acyclicity thus means that collecting all parent-children links created during an execution does not create cycles.
	This holds because \code{insert}, \code{remove}, and \code{rotate} modify the pointers from a node only to point to new nodes, or to nodes that have already been reachable from that node. Modifications to other fields have no effect on reachability. Note that \code{rotate} does not perform the rotation in place, but allocates a new object.

	For \emph{preservation}, every write $w$ either modifies a location that has never been reachable (such as \cref{Ln:RotateFresh}), in which case preservation holds vacuously, or holds the lock on $x$ when $\QXR[\neg]{x}$ (without modifying its predecessor earlier under this lock)\footnote{
		In \cref{Ln:RotateXRight}, because $x$ is a child of $y$ which is a child of $p$ and $\QXR[\neg]{p}$, it follows that $\QXR[\neg]{x}$ because a node marked with ${\tt rem}$ loses its single parent beforehand.
	}. Every other write $w'$ retains $\QReachXK{x}{k}$ unchanged, unless it sets the field ${\tt rem}$ of $x$ to $\true$ before releasing the lock on $x$. Preservation of search paths to $\modw{w}$ follows.\footnote{A formal exposition of the proofs of the conditions appears in \Cref{Se:ProvingAssert}.}
}

}

\section{Additional Case Studies} \label{Se:CaseStudies}\label{Se:Using:Lazy}

\subsection{Lazy and Optimistic Lists}

We successfully applied our framework to prove the linearizability of sorted-list-based concurrent set implementations with unsynchronized reads.
Our framework is capable of verifying various versions of the algorithm in which \code{insert} and \code{delete} validate that the nodes they locked are reachable using a boolean field, as done in the lazy list algorithm~\cite{HellerHLMMS05},  or by rescanning the list, as done in the optimistic list algorithm~\cite[Chap\,9.8]{TAOMPP}. Our framework is also applicable for verifying implementations of the lazy list algorithm in which the logical deletion and the physical removal are done by the same operation or by different ones.
We give a taste of these proofs here. 

\Cref{Fi:Lazy-List} shows an annotated pseudo-code of the lazy list algorithm.
Every operation starts with a call to \code{locate(k)}, which performs a standard search in a sorted list---without acquiring any locks---to locate the node with the target key $k$.
This method returns the last link it traverses, $(x,y)$.
\Cref{Fi:Lazy-List} includes two variants of \code{contains(}$k$\code{)}:
In one variant, it returns \code{true} only if it finds a node with key $k$ that is not logically deleted (\cref{LnLazy:ContainsReturnTorT}), while in the second variant it returns \code{true} even if that node is logically deleted (the commented \code{return} at \cref{LnLazy:ContainsReturnTorF}).
Interestingly, the same annotations allow to verify both variants, and the proof differs only in the abstraction function mapping states of the list to abstract sets.
Modifications of a node in the list are synchronized with the node's lock.
An \code{insert(k)} operation calls \code{locate}, and then links a new node to the list if $k$ was not found.
\code{delete(k)} logically deletes $y$ (after validating that $y$ remained linked
to the list after its lock was acquired), and then physically removes it.

As in \Cref{Se:Motivating}, the assertions contain predicates 
of the form
$\QReachXK{x}{k}$, which means that $x$ resides on a \emph{valid search path} for
key $k$ that starts at $\rootobj$; 
the formal definition of a 
search path in the lazy list appears below. Note that $\QReachXK{\NULL}{k}$ indicates that $k$ is not in the list.
$$
\begin{array}{l}
\QReachXYK{o_r}{o_x}{k}
\eqdef
\exists o_0,\ldots,o_m.\
o_0=o_r \land o_m=o_x \land
\forall i=1..m.\ o_{i-1}.\ikey < k \land o_{i-1}.\inext = o_i
\end{array}
$$

We  prove the linearizability of the algorithm  using an \emph{abstraction function}.
One  abstraction function we may use maps $H$ to the set of keys of the nodes that are on a valid search path for their key and
are not logically deleted in $H$:
$$
\repfunc^{\mathit{logical}}(H) = \{ k \in \mathbb{N} \mid H \models \exists x.\,\QReachXK{x}{k} \land x.\ikey=k \land \neg\QXM{x}\}\,.
$$





\lstset{language={program},style=lnumbers,firstnumber=last}

\begin{figure}[t]
\centering
\begin{tabular}{p{7cm}p{6.7cm}p{6.7cm}}
\begin{lstlisting}
type N 
  int key 
  N next 
  bool mark

N root$\leftarrow$new N($-\infty$);

N$\times$N locate(int k)
  x,y$\leftarrow$root
  while (y$\neq$null $\land$ y$.$key$<$k)
    x$\leftarrow$y 
    y$\leftarrow$x$.$next
  $
  \{\Past{\QReachXK{x}{k} \land x.\inext=y }\}   \label{LnLazy:RetLocate}
  $ 
  $
  \{\QXK[<]{x}{k} \land (y \neq \NULL \implies  \QXK[\geq]{y}{k})\}  \label{LnLazy:RetLocateKeys} 
  $ 
  return (x,y)

bool insert(int k)  
  (x,y)$\leftarrow$locate(k) 
  if (y$\neq$null $\land$ y$.$key$=$k)
    $
      \{\PReachXK{y}{k} \land \QXK{y}{k}\}
    $   
    return false
  lock(x) 
  lock(y)
  if (x$.$mark $\lor$ x$.$next$\neq$y)
    restart
  $
     \{\neg \QXM{x} \land x.\inext=y \}    \label{LnLazy:InsertValidate}
  $    
  z$\leftarrow$new N(k)
  $
  \{y \neq \NULL \implies  \QXKR[>]{y}{k}\}   
  $ 
  z$.$next$\leftarrow$y $\label{LnLazy:InsertPrepareBypass}$
 $
  \begin{array}{l}
     \{\QReachXK{x}{k} \land x.\inext=y  \land \QXK[<]{x}{k}   \land {}\\
     \qquad z.\inext=y \land \neg \QXM{z} \land (y \neq \NULL \implies  \QXKR[>]{y}{k})  \}
  \end{array}
  \label{LnLazy:InsertLink}
  $    
  x$.$next$\leftarrow$z  $\label{LnLazy:InsertSetNext}$
  return true
\end{lstlisting}

&


\begin{lstlisting}
bool contains(int k)  
  (_,y)$\leftarrow$locate(k) 
  if (y$=$null)
    $
      \{\PReachXK{\NULL}{k}\}     \label{LnLazy:ContainsNull}
    $   
    return false
  if (y.key$\neq$k)
    $
     \{\Past{\QReachXK{x}{k} \land x.\inext=y} \land \QXKR[<]{x}{k} \land \QXK[>]{y}{k} \}   \label{LnLazy:ContainsNoKey}
    $ 
    return false 
  if ($\neg$y.mark)
    $
     \{\QReachXK{y}{k} \land \QXK{y}{k} \land \neg\QXM{y} \}   \label{LnLazy:ContainsKeyNotMark}
    $ 
    return true $\label{LnLazy:ContainsReturnTorT}$
  $
    \{\PReachXK{y}{k} \land \QXK{y}{k} \land \QXM{y}  \}   \label{LnLazy:ContainsKeyMark}
  $ 
  return false // return true $\label{LnLazy:ContainsReturnTorF}$

bool delete(int k) 
  (x,y)$\leftarrow$locate(k) 
  if (y$=$null)
    $
      \{\PReachXK{\NULL}{k}\}
    $   
    return false
  if (y.key$\neq$k)
    $
     \{\Past{\QReachXK{x}{k} \land x.\inext=y} \land \QXK[<]{x}{k} \land  \QXK[>]{y}{k} \}   
    $ 
    return false
  $
  \{\QXK{y}{k}\}
  $   
  lock(x) 
  lock(y)
  if (x$.$mark $\lor$ y$.$mark $\lor$ x$.$next$\neq$y)
    restart 
  $
   %  \{\neg \QXM{x} \land x.\inext=y \land \QXK{y}{k}\}    \label{LnLazy:DeleteValidate}
     \{\QReachXK{x}{k} \land x.\inext=y    \land \QXK{y}{k} \land \neg\QXM{x} \land 
     \neg\QXM{y} \}   \label{LnLazy:DeleteLogical} \label{LnLazy:DeleteValidate}
  $    
  y$.$mark$\leftarrow$true                   $\label{LnLazy:DeleteSetMark}$
  $
     \{\QReachXK{x}{k} \land x.\inext=y    \land \QXK{y}{k} \land \neg\QXM{x} \land
      \QXM{y} \}   
  $ 
  x$.$next$\leftarrow$y$.$next                $\label{LnLazy:DeleteRemove}$
  return true                               $\label{LnLazy:DeleteRetTrue}$
\end{lstlisting}
\end{tabular} 
\vspace{-7mm}
\caption{\label{Fi:Lazy-List}
Lazy List~\cite{HellerHLMMS05}. The code is annotated with assertions written inside curly braces.
For brevity, \textbf{unlock} operations are omitted; a procedure releases all the locks it acquired when it terminates or \textbf{restart}s.
} 
\vspace{-3mm}
\end{figure}
%
%
%
%
Another possibility is to define the abstract set to be the keys of all the reachable nodes:
$$
\repfunc^{\mathit{physical}}(H) = \{ k \in \mathbb{N} \mid H \models \exists x.\,\QReachXK{x}{k} \land x.\ikey=k\}\,.
$$
We note that $\repfunc^{\mathit{logical}}(H)$ can be used to verify the code of \code{contains} as written,
while $\repfunc^{\mathit{physical}}(H)$ allows to change   the algorithm to return \code{true} in \cref{LnLazy:ContainsReturnTorF}.
In both cases, the proof of linearizability is carried out using the same assertions currently annotating the code.
In the rest of this section, we discuss the verification of the code in \Cref{Fi:Lazy-List} as written, and thus use  $\repfunc(H)=\repfunc^{\mathit{logical}}$ as the abstraction function.
The assertions almost immediately imply that for every operation invocation $op$, there exists a state $H$ during
$op$'s execution for which the abstract state $\repfunc(H)$ agrees with $op$'s return value, and so $op$ can
be linearized at $H$;
\sharonx{``Method'' vs.\ operation}%
we need only make the following observations.  First, \code{contains()} and a failed \code{delete()}
or \code{insert()} do not modify the memory, and so can be linearized at the point in time in which the assertions
before their \code{return} statements hold.  Second, in the state $H$ in which a successful \code{delete(k)}
(respectively, \code{insert(k)}) performs a write, the assertions on line~\ref{LnLazy:DeleteLogical}
(respectively, \cref{LnLazy:InsertLink}) imply that $k \in \repfunc(H)$
(respectively, $k \not\in \repfunc(H)$).  Therefore, these writes change the abstract set, making it agree
with the operation's return value of $\true$.
Finally, it only remains to verify that the physical removal performed by \code{delete(}$k$\code{)}  in state $H$ does not modify $\repfunc(H)$.
Indeed, as an operation modifies a field of node $v$ only when it has $v$ locked,
it is easy to see that for any node $x$ and key $k$, if $\QReachXK{x}{k}$ held before the write, then it also
holds afterwards with the exception of the removed node $y$. However, \code{delete(}$k$\code{)} removes a deleted node, and thus does not change $\repfunc(H)$.

The proof of the assertions in \Cref{Fi:Lazy-List} utilizes a local view argument for the $\PastSymb$ assertion in~\cref{LnLazy:RetLocate} for the predicate $\QReachXK{x}{k} \land x.\inext=y$, using the extension with a single field discussed in~\Cref{rem:ReachWithField}. 
The conditions of the local view argument are easy to prove:
The acyclicity requirement is evident, as writes modify the pointers from a node only to point to new nodes, or to nodes that have already been reachable from that node.
Preservation holds because a write either (i) marks a node, which does not affect the search paths; (ii) modifies a location that has never been reachable (such as
\cref{LnLazy:InsertPrepareBypass}), 
in which case preservation holds vacuously;
(iii) removes a marked node $y$ (\cref{LnLazy:DeleteRemove}) which removes all the search paths that go through it. However, as $y$ is marked, its fields are not going to be modified later on, and thus $y$ cannot be the cause of violating preservation. Furthermore, all search paths that reach $y$'s successor before the removal are retained and merely get shorter; or
(iv)  adds a reachable node $z$ in between two reachable nodes $x$ and $y$ 
(\Cref{LnLazy:InsertSetNext}). However, as $z$'s key is smaller than $y$'s, the insertion preserves any search paths which goes \emph{through} $y$'s \code{next} pointer.


As for the rest of the assertions, when \code{insert} and \code{delete} lock $x$ and see that it is not marked, the $\QReachXK{x}{k}$ property follows from the $\PReachXK{x}{k}$ deduced above by a local view argument using the same invariant in preservation above.\footnote{As in \Cref{Se:CaseLockFreeList}, these assertions could also be deduced directly from a slightly stronger invariant that unmarked nodes are reachable and that the list is sorted. This is not the case in the optimistic list of~\cite[Chap\,9.8]{TAOMPP} which rescans instead of using a marked bit. In both cases \code{contains} requires a local view argument.}
The remainder assertions are attributed to reading a location under the protection of a lock, e.g.\ $\neg\QXM{x}$ in~\cref{LnLazy:InsertValidate}.

\subsection{Lock-free List and Skip-List}
\label{Se:CaseLockFreeList}\label{Se:CaseLockFreeSkipList}
We used our framework to prove the linearizability  a sorted lock-free list-based concurrent set algorithm~\cite[Chapter 9.8]{TAOMPP} and of a lock-free skip-list-based concurrent set algorithm~\cite[Chapter 14.4]{TAOMPP}.
In these proofs we use local view arguments to prove the concurrent traversals of the \code{contains} method, which is the most difficult part of the proofs:
\code{add} and \code{remove} use the internal \code{find} which traverses the list
and also prunes out marked nodes, and thus their correctness follows easily from an invariant ensuring the reachability of unmarked nodes.
\yotam{\textbf{this is true also for the lazy list (but not the optimistic), no?}}
\iflong
The proofs appear in \Cref{Se:LockFreeList,Se:LockFreeSkipList}.
\else
The proofs appear in~\cite{extendedVersion}.
\fi









\section{Related Work}\label{Se:Related}

Verifying linearizability of concurrent data structures 
has been studied extensively.
Some techniques, 
e.g., ~\cite{conf/tacas/AbdullaHHJR13,conf/cav/AmitRRSY07,conf/cav/DragoiGH13,conf/ppopp/VafeiadisHHS06,conf/vmcai/Vafeiadis09}, apply to a restricted set of algorithms where the linearization point of every invocation is \emph{fixed} to a particular statement in the code. While these works provide more automation, they are not able to deal with the algorithms considered in our paper where for instance, the linearization point of \code{contains(k)} invocations is not fixed. Generic reductions of verifying linearizability to checking a set of assertions in the code have been defined in~\cite{conf/esop/BouajjaniEEH13,DBLP:conf/icalp/BouajjaniEEH15,DBLP:conf/cav/BouajjaniEEM17,DBLP:conf/pldi/LiangF13,conf/concur/HenzingerSV13,conf/cav/Vafeiadis10,DBLP:conf/cav/ZhuPJ15}. These works apply to algorithms with non-fixed linearization points, but they do not provide a systematic methodology for proving the assertions, which is the main focus of our paper.

Verifying linearizability has also been addressed in the context of defining program logics for \emph{compositional} reasoning about concurrent programs. In this context, the goal is to define a proof methodology that allows   composing proofs of program's components to get a proof for the entire program, which can also be reused in every valid context of using that program. Improving on the classical Owicki-Gries~\cite{DBLP:journals/cacm/OwickiG76} and Rely-Guarantee~\cite{DBLP:conf/ifip/Jones83} logics, various extensions of Concurrent Separation Logic~\cite{DBLP:conf/popl/BornatCOP05,DBLP:conf/concur/Brookes04,DBLP:conf/concur/OHearn04,DBLP:conf/popl/ParkinsonBO07} have been proposed in order to reason compositionally about different instances of fine-grained concurrency, e.g.~\cite{DBLP:conf/popl/JungSSSTBD15,DBLP:conf/popl/Ley-WildN13,DBLP:conf/ecoop/PintoDG14,DBLP:conf/esop/SergeyNB15,DBLP:conf/icfp/TuronDB13,phd/Vafeiadis08}. 
However, they focus on the reusability of a proof of a component in a larger context (when composed with other components) while our work focuses on simplifying the proof goals that guarantee linearizability. The concurrent reasoning needed for our framework could be carried out using one of these logics.


The proof of  linearizability of the lazy-list algorithm given in~\cite{PODC10:Hindsight} is based on establishing the conditions required by the 
\emph{hindsight lemma}~\cite[Lemma~5.2]{PODC10:Hindsight}. The lemma states that every link traversed during an unsynchronized traversal was indeed reachable at some point in time between the beginning of the traversal and the moment the link was crossed. This enables verifying the correctness of the \code{contains} method using, effectively, sequential reasoning. The hindsight lemma is a specific instance of the extension discussed in \Cref{rem:ReachWithField}, and its assumptions narrows its application to concurrent set algorithms implemented using   sorted singly-linked lists.
In contrast, we present a fundamental technique which is based on far more generic properties which  is applicable to list and tree-based data structures alike.


The proof methodology for proving linearizability of~\cite{DBLP:conf/wdag/Lev-AriCK15}  relies on properties
of the data structure in sequential executions.
The methodology assumes the existence of {\em base points}, which are points in time during the concurrent
execution of a search in which some predicate holds over the shared state.  For instance, when applying the
methodology to the lazy list, they prove the existence of base points using prior
techniques~\cite{PODC10:Hindsight,LazyListSafety} that employ tricky concurrent reasoning.
Our work is thus complementary to theirs: our proof argument is meant to replace the latter
kind of reasoning, and can thus simplify proofs of the existence of base points.

The {\em Edgeset} framework of Shasha and Goodman~\cite{DBLP:journals/tods/ShashaG88},
which has recently been formalized using concurrent separation logic~\cite{DBLP:journals/pacmpl/KrishnaSW18},
provides conditions for the linearizability of concurrent search data structures.
It relies on a precondition that for any operation on key $k$, $\QReachXK{x}{k}$ holds
when the operation looks for, inserts, or deletes $k$ at $x$.
However, the optimistic data structures that we consider often do not satisfy this precondition,
making the Edgeset framework inapplicable.  (\Cref{example:preservation-cf} describes
how this precondition does not hold in our search tree example, and a similar issue
exists in the lazy-list.)
Moreover, the Edgeset precondition implies that the linearization point of an operation occurs at one of its
own atomic steps.  Our framework does not have this requirement.
Shasha and Goodman also describe three algorithm templates and prove, using concurrent
reasoning, that these templates satisfy the preconditions of the Edgeset framework.
In contrast, our argument uses sequential reasoning for traversals, and our concurrent proofs consider only
the effects of interleaving writes---not both reads and writes.

\section{Conclusions and Future Work}\label{Se:Conc}

This paper presents a novel approach for constructing 
linearizability proofs of
concurrent search data structures. We present a general proof argument that is applicable to many existing algorithms, uncovering fundamental structure---the acyclicity and preservation conditions---shared by them.
%
We have instantiated our framework for a self-balancing binary search tree, lists with lazy~\cite{HellerHLMMS05} or non-blocking~\cite{TAOMPP} synchronization, and a lock-free skip list. 
To the best of our knowledge, our work is the first to prove linearizability of a self-balancing binary search tree using a unified proof argument.



An important direction for future work is the mechanism of backtracking.
Some algorithms, including the original CF tree~\cite{EuroPar13:CFTree,PFL16:CFTree}, backtrack instead of restarting when their optimistic validation fails. In the CF tree, backtracking is implemented by directing pointers from child to parent, breaking our acyclicity requirement.
A similar situation arises in the in-place rotation of~\cite{DBLP:conf/ppopp/BronsonCCO10}.
Handling these scenarios in our proof argument is an interesting direction for future work.

An additional direction to explore is validations performed during traversals. For example, the SnapTree algorithm~\cite{DBLP:conf/ppopp/BronsonCCO10} performs in-place rotations which violate preservation. The algorithm overcomes this by performing hand-over-hand validation during a lock-free traversal.
This validation, consisting of re-reading previous locations and ensuring version numbers have not changed, does not fit our approach of reasoning sequentially about traversals.

The preservation of reachability to location of modification arises naturally out of the correctness of traversals in modifying operations, ensuring that the conclusion of the traversal---the existence of a path---holds not only in some point in the past, but also holds at the time of the modification.
We 
show that, surprisingly, preservation, when it is combined with the order, suffices to reason about the traversal by a local view argument. We 
base the correctness of read-only operations on the same predicates, and so rely on the same property. It would be interesting to explore different criteria which ensure the simulation of the fabricated state constructed based on the accumulated order.

Finding ways to extend the framework in these directions is an interesting open problem. This notwithstanding, we believe that our framework captures important principles underlying modern highly concurrent data structures that could prove useful both for structuring linearizability proofs and elucidating the correctness principles behind new concurrent data structures.




\bigskip 

\subparagraph{Acknowledgments.}
  This publication is part of a project that has received funding from the European Research Council (ERC) under the European Union's Horizon 2020 research and innovation programme (grant agreements No [759102-SVIS] and [678177]).
  The research was partially supported by Len Blavatnik and the Blavatnik Family foundation, the Blavatnik Interdisciplinary Cyber Research Center, Tel Aviv University, 
  the United States-Israel Binational Science Foundation (BSF) grants No.\ 2016260 and 2012259,
  and the Israeli Science Foundation (ISF) grant No.\ 2005/17.
  We thank the anonymous reviewers whose comments helped improve the paper.


 
\clearpage

\bibliography{biblio,violin-short}
 
\iflong

%
%
 
\clearpage

\appendix

\section{Proof Method}\label{Se:AbstractFrameworkLong}



In this section, we describe our main result, which allows to lift a property $\Pred$ that holds on the local view of a thread accumulated by performing a sequence of reads $\seqr$ to a property that held on the global, concrete, state \emph{at some point} during the execution of $\seqr$.

\paragraph{Programming model.}
A \emph{global state} (state) is a mapping between \emph{memory locations} (locations)  and \emph{values}.
A value is either a natural number, a location, or   $\NULL$.
Without loss of generality, we assume threads share access to immutable global variables and to a mutable \emph{global heap}.
Thus, memory locations are used to stores the values of fields of objects.
A \emph{concurrent  execution} (execution) $\pi$ is a sequence of states produced by an interleaving of atomic actions issued by threads.
A pair of states $(H,H')$ is a \emph{transition} in $\pi$ if $\pi=\cdots H H' \cdots$\,.
Without loss of generality, we assume that  each transition results from either a \emph{read} or \emph{write} operation.
(We treat synchronization actions, e.g., \emph{lock} and \emph{unlock}, as writes.)
A \emph{read} $r$ consists of a value $v$ and a location $\readr{r}$ with the meaning that $r$ reads $v$ from $\readr{r}$.
Similarly, a write $w$ consists of a value $v$ and a location $\modw{w}$ with the meaning that $w$ sets $\modw{w}$ to $v$.
We denote by $w(H)$ the state resulting from the execution of $w$ on state $H$.

\subsection{Between the Local View and the Global State}
\ignore{
In this section, we describe our main result, which allows to lift a property $\Pred$ that holds on the local view of a thread accumulated by performing a sequence of reads $\seqr$ to a property that held on the global, concrete, state \emph{at some point} during the execution of $\seqr$.
We start with a few definitions that will let us formalize this claim as well as the conditions it requires.
}

\ignore{
\paragraph{Writes.}
The conditions under which our main claim holds refer to sequences of \emph{writes} performed on the global state by other threads that run concurrently with the thread executing the reading sequence $\seqr$.
A \emph{state} is a mapping between locations and values, values being numeric or locations.
A \emph{write} $w = (\ell, v, \Prew{})$ consists of a value $v$, a location $\ell$ and a precondition $\Prew{}$, with the meaning that the value $v$ is assigned to location $\ell$. The write is considered \emph{valid} on a state $H$ if the precondition holds in that state: $\Prew{}(H)$ is true. (In our setting, preconditions are given by the assertions in the code that precede a statement performing a write.) We write $\modw{w}$ to denote the location $\ell$ that $w$ writes to, $\Prew{w}$ to denote its precondition, and $w(H)$ to denote the state resulting from the execution of $w$ on state $H$.
}


For the rest of this section, we fix a sequence $\seqr= r_1,\ldots,r_d$ of reads performed by a thread, and the execution $\pi$ from which it is taken.
We denote the (global) state when the reading sequence $\seqr$ started its execution by $\hinit$.
We denote the sequence of writes performed concurrently with $\seqr$ in $\pi$ by $\seqw = w_1,\ldots,w_n$.
%
The sequence $\seqw$ produces intermediate \emph{global states} after the execution of each write.
We denote the global state after execution of $w_1 \ldots w_i$ by $\opprefix{\hconc}{i}$, i.e., $\opprefix{\hconc}{i} = w_1 \ldots w_i(\hinit)$.


\para{Local views} 
The sequence of reads $\seqr$ induces a state $\hlocal$, which directly corresponds to the values $\seqr$ observes in memory.
This is constructed by assigning to location $x$ the value read in the last read in $\seqr$ if $x$ is read at all, i.e.,
When $\seqr$ starts its local view $\hlocali{0}$ is empty, and, assuming its $i$th read is $(\loc,v)$, the
produced local view is $\hlocali{i}=\hlocali{i-1}[\loc \mapsto v]$.
We refer to $\hlocal=\hlocali{d}$ as the \emph{local view} produced by $\seqr$ (\emph{local view} for short).
We emphasize that while technically $\hlocal$ is a state, it is \emph{not} necessarily an actual intermediate global state, and may have never existed in memory during the execution. 


%

\yotam{Write somewhere (in intro?) that we explain the local view by a fabricated heap, if you can prove something for fabricated heaps that you know this for the local view in the ``sequential reasoning''}

\para{Fabricated state}
In order to relate between the local view and the global state, we consider a \emph{fabricated state} $\hobserved$. The fabricated state is a state such that (i)~$\hlocal \subseteq \hobserved$, and (ii)~$\hobserved$ is constructed by a \emph{subsequence} $\seqwf = w_{i_1},\ldots,w_{i_k}$ of
$\seqw$, i.e., $\hobserved = w_{i_1} \ldots w_{i_k}(\hinit)$.
Again,  while the sequence of writes $\seqw = w_1,\ldots,w_n$ comes from an execution, the subsequence $w_{i_1},\ldots,w_{i_k}$ may not be produced by any execution. and thus $\hobserved$ may not occur in any execution of the algorithm. 
As before, we denote the intermediate fabricated state constructed after performing the first $j$ writes in $\seqwf$ 
by $\opprefix{\hobserved}{j}$, i.e., $\opprefix{\hobserved}{j} = w_{i_1} \ldots w_{i_j}(\hinit)$.
The notion of a fabricated state is in line with the intuition that the reading sequence has been affected by some of the concurrent writes but also missed some. The specific construction of the fabricated state we use in our framework is provided in \Cref{subsec:fabricated}.



Our proof approach (i) establishes a connection, that we call \emph{simulation}, between the effect of the writes 
on the fabricated state to their effect in the concrete state, and (ii) uses a property called \emph{upward absoluteness} to relate the local view to the fabricated state (based on the property that $\hlocal \subseteq \hobserved$).

In this way, the fabricated state allows us to consider the memory ($\hlocal$) that $\seqr$ observes as if $\seqr$ were operating sequentially on a state ($\hobserved$) that is closely related to the global memory state ($\hconc$).

\subsubsection{From the Fabricated State to the Global State Using Simulation}
The fabricated state will be constructed such that it has the following connection to the concrete state:
if a write turns $\Pred$ to true in an intermediate fabricated state $\opprefix{\hobserved}{j} = w_{i_1} \ldots w_{i_j}(\hinit)$ obtained after executing $w_{i_1},\ldots,w_{i_j}$, then it also turns $\Pred$ to true in the intermediate global state $\opprefix{\hconc}{i_j} = w_{1} \ldots w_{i_j}(\hinit)$ obtained after executing the prefix  $w_{1},\ldots,w_{i_j}$ of the full sequence of writes. This is formalized in the following definition. We then show that if 
this connection is established, then $\Pred$ being true in the fabricated state transfers to it being true in \emph{some} intermediate point in the \emph{global} state.
%

\begin{definition}[Simulation]
\label{def:uniform-change-abstract}
The subsequence of writes $w_{i_1} \ldots w_{i_k}$ \emph{simulates} the sequence $w_1 \ldots w_n$ w.r.t.\ $\Pred$ if for every $1 \leq j \leq k$,
if $\neg\Pred(\opprefix{\hobserved}{j-1})$ but $\Pred(w_{i_j}(\opprefix{\hobserved}{j-1}))$, then
$\neg\Pred(\opprefix{\hconc}{i_{j}-1}) \implies \Pred(w_{i_j}(\opprefix{\hconc}{i_{j}-1}))$.

We say that $\hobserved = w_{i_1} \ldots w_{i_k}(\hinit)$ \emph{simulates} $\hconc = w_1 \ldots w_n(\hinit)$ w.r.t.\ $\Pred$ if $w_{i_1} \ldots w_{i_k}$ simulates $w_1 \ldots w_n$ w.r.t.\ $\Pred$.
\todobom{A drawing would be useful here}
\todobom{$\Pred$-simulates instead of simulates w.r.t. $\Pred$?}
\end{definition}

\begin{lemma} \label{lem:simulation}
If $\hobserved$ simulates $\hconc$ w.r.t.\ $\Pred$ and $\Pred(\hobserved)$ holds, then there exists some $0 \leq i \leq n$ s.t.\ $\Pred(\opprefix{\hconc}{i})$.
\end{lemma}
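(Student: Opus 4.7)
The plan is to reduce the claim to a straightforward case analysis on where $\Pred$ first becomes true along the fabricated sequence. Since $\Pred(\hobserved) = \Pred(\opprefix{\hobserved}{k})$ holds, either $\Pred$ already holds in the initial state $\hinit = \opprefix{\hobserved}{0}$, in which case we are done with $i=0$ (observing that $\opprefix{\hconc}{0} = \hinit$ as well), or there is some first index where $\Pred$ flips from false to true.

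Concretely, I would define $j^\star = \min\{\, j \in \{1,\ldots,k\} \mid \Pred(\opprefix{\hobserved}{j}) \,\}$, which is well defined by the assumption that $\Pred(\opprefix{\hobserved}{k})$ holds together with the case assumption $\neg \Pred(\opprefix{\hobserved}{0})$. By minimality, $\neg\Pred(\opprefix{\hobserved}{j^\star-1})$ while $\Pred(w_{i_{j^\star}}(\opprefix{\hobserved}{j^\star-1})) = \Pred(\opprefix{\hobserved}{j^\star})$. This is exactly the premise of the simulation clause in \Cref{def:uniform-change-abstract} for index $j^\star$.

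Invoking the simulation property then yields the implication $\neg\Pred(\opprefix{\hconc}{i_{j^\star}-1}) \Rightarrow \Pred(w_{i_{j^\star}}(\opprefix{\hconc}{i_{j^\star}-1}))$. A simple case split finishes the proof: if $\Pred(\opprefix{\hconc}{i_{j^\star}-1})$ holds, take $i = i_{j^\star}-1$; otherwise the implication forces $\Pred(\opprefix{\hconc}{i_{j^\star}})$ to hold, and we take $i = i_{j^\star}$. In both sub-cases $0 \le i \le n$ because $1 \le i_{j^\star} \le n$ (the $i_j$'s are indices into $w_1,\ldots,w_n$).

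I do not anticipate any real obstacle here: the definition of simulation was crafted precisely so that a write that switches $\Pred$ to true in the fabricated run witnesses a corresponding switch (or prior truth) in the global run. The only minor subtlety is ensuring $j^\star$ exists, which requires the explicit split on whether $\Pred(\hinit)$ already holds; beyond that the argument is one application of the definition. The genuinely hard content of the framework lies elsewhere, in \Cref{lemma:simulation-from-conditions}, which establishes that the canonical choice of $\seqwf$ actually yields simulation — this lemma only harvests the consequence.
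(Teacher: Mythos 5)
Your proof is correct and follows essentially the same route as the paper's: split on whether $\Pred(\hinit)$ holds, take the first index $j$ at which $\Pred$ becomes true along the fabricated sequence, and apply the simulation clause with a final case split on $\Pred(\opprefix{\hconc}{i_j-1})$. Nothing is missing.
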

\iflong
\begin{proof}
	If $\Pred(\hinit)$, then $i = 0$ establishes the claim.
	Otherwise, let $w_{i_j}$ be the first write to make $\Pred$ true in $\hobserved$, namely: let $1 \leq j \leq k$ be the minimal $j$ such that $\neg\Pred(\opprefix{\hobserved}{j-1}$ but $\Pred(\opprefix{\hobserved}{j})$.
	If $\Pred(\opprefix{\hconc}{i_j-1})$, take $i = i_j - 1$. Otherwise, $\neg\Pred(\opprefix{\hconc}{i_j-1})$.
	So we have: $\neg\Pred(\opprefix{\hobserved}{j-1})$ but $\Pred(w_{i_j}(\opprefix{\hobserved}{j-1}))$ and $\neg\Pred(\opprefix{\hconc}{i_j - 1})$.
	From the premise that $\hobserved$ simulates $\hconc$ it follows that $\Pred(w_{i_j}(\opprefix{\hconc}{i_j - 1}))$, and by taking $i = i_j$ the claim follows.
\end{proof}
\fi

\begin{remark}
[Information Between Predicates via Hindsight]
In certain cases, it is useful to consider more than one predicate on the fabricated state.
For example, when $y$ is inserted as a child of $x$, the reachability of $y$ is determined by the reachability of $x$ before the write.
To this end, we prove the translation from the fabricated state to the global state for all predicates of interest simultaneously.
We use this approach in~\Cref{subsec:search-paths}.
\end{remark}

\subsubsection{From the Local View to the Fabricated State Using Upward-Absoluteness}

Simulation lets us relate the fabricated state to the global state. Next, we relate the value of $\Pred$ on the local state to its value on the fabricated state.
Recall that the fabricated state does not correspond exactly to the local view. Instead, $\hlocal \subseteq \hobserved$.
This gap is bridged when $\Pred$ is \emph{upward-absolute}~\cite{absoluteness}:
\begin{definition}[Upward-Absoluteness]
\label{def:upward-absolutenss}
A predicate $\Pred$ is \emph{upward-absolute} if for every pair of states $H,H'$ such that $H \subseteq H'$, $\Pred(H) \implies \Pred(H')$.
\end{definition}
Upward absoluteness is in line with the fact that an operation rarely observes all the contents of all memory locations.
The procedure may have read only a partial view of memory. However, the decision based on $\Pred(\hlocal)$ must hold regardless of unobserved locations.

For example, $\QReachXK{x}{k}$, for any location $x$ and key $k$,
is upward-absolute, because if a memory state contains a path then so does every extension of this state.

As we take $\hobserved$ such that $\hlocal \subseteq \hobserved$, we deduce from upward-absoluteness that if $\Pred(\hlocal)$ then $\Pred(\hobserved)$ holds, and, if $\hobserved$ simulates $\hconc$, then $\Pred(\opprefix{\hconc}{i})$ holds for some $i$. This is summarized in the following theorem.

\begin{theorem}
If $\Pred$ is upward-absolute, $\Pred(\hlocal)$ holds, $\hlocal \subseteq \hobserved$, and $\hobserved$ simulates $\hconc$ w.r.t. $\Pred$, then there exists some $0 \leq i \leq n$ s.t.\ $\Pred(\opprefix{\hconc}{i})$.
\end{theorem}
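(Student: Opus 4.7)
The plan is to observe that this theorem is a direct composition of two facts already established in the excerpt: upward-absoluteness bridges the local view $\hlocal$ and the fabricated state $\hobserved$, while the simulation lemma (\Cref{lem:simulation}) bridges $\hobserved$ and the global states $\opprefix{\hconc}{i}$. So the proof is essentially a two-line chain.

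First I would apply \Cref{def:upward-absolutenss} to the assumed inclusion $\hlocal \subseteq \hobserved$: since $\Pred$ is upward-absolute and $\Pred(\hlocal)$ holds by hypothesis, we immediately obtain $\Pred(\hobserved)$. This is the step that justifies working with the fabricated state instead of the actual local view — intuitively, the local view is only a ``partial observation,'' and upward-absoluteness (which is the crucial restriction we place on the class of predicates, and which is satisfied by $\PredSearch{k}{x}$) ensures that any property witnessed by such a partial observation is preserved on any extension, in particular on $\hobserved$.

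Next I would invoke \Cref{lem:simulation} with $\hobserved$ and $\hconc$: by assumption $\hobserved$ simulates $\hconc$ w.r.t.\ $\Pred$, and we have just established $\Pred(\hobserved)$, so the lemma directly yields the desired index $0 \leq i \leq n$ with $\Pred(\opprefix{\hconc}{i})$. Combining the two steps completes the proof.

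Since both ingredients are already proven, there is no real obstacle here; the only thing worth emphasizing in the write-up is \emph{why} the two hypotheses match up cleanly — that is, that the specific construction of $\hobserved$ from a subsequence of $\seqw$ (to be fixed in \Cref{subsec:fabricated}) is designed precisely so that both $\hlocal \subseteq \hobserved$ and the simulation property with respect to the concrete execution $\hconc$ hold simultaneously. The present theorem is the abstract ``glue'' that packages these into the statement used by the framework; the substantive work lies in the later construction of $\hobserved$ and the verification, via acyclicity and preservation, that it indeed simulates $\hconc$ (to be carried out in \Cref{lemma:simulation-from-conditions}).
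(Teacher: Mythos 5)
Your proposal is correct and matches the paper's own argument exactly: the paper derives $\Pred(\hobserved)$ from $\Pred(\hlocal)$, $\hlocal \subseteq \hobserved$, and upward-absoluteness, and then concludes by applying \Cref{lem:simulation} to the simulation hypothesis. Nothing further is needed.
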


\subsection{The Fabricated State} \label{subsec:fabricated}

In this section, we define the fabricated state $\hobserved$, obtained by a subsequence of writes $w_{i_1},\ldots,w_{i_k}$. Our goal is to ensure that $\hlocal \subseteq \hobserved$ and
to relate $\hobserved$ to $\hconc$ by simulation (\Cref{def:uniform-change-abstract}).

In order to ensure that $\hlocal \subseteq \hobserved$, it suffices to include in $w_{i_1},\ldots,w_{i_k}$ all the writes that affected the reads that constructed the local view. However, ensuring simulation requires a more involved construction.
%
%
\Cref{def:uniform-change-abstract} requires writes on the fabricated state to have a similar effect as in the global state. To achieve this, we choose the fabricated state to be as similar as possible to the global state from the perspective of the write and the predicate it might affect.

To this end, we develop a consistency condition of \emph{forward-agreement}, based on a
\emph{partial order} on memory. The idea is that for each write $w_{i_j}$ in $w_{i_1},\ldots,w_{i_k}$, the write sees the same picture of the memory that comes forward in the data structure both in the fabricated state $\opprefix{\hobserved}{j-1}$ and in the corresponding global state $\opprefix{\hconc}{i_j-1}$. We show how to construct the fabricated state so that this property holds.
While this property alone does not suffice to imply simulation, in \Cref{subsec:search-paths} we show how together with an additional property (\Cref{def:preservation}), forward agreement can be used to prove the simulation property w.r.t. predicates that track reachability along search paths.


\para{Order on memory} Every state of the data structure induces a certain order on how operations read the different memory locations.
To capture this, the user provides a mapping from a state $H$ to a \emph{partial order} that $H$ induces on memory locations, denoted $\leqloc_H$.
%
%

\begin{example}\label{Ex:OrderSearchPaths}  
In the running example of \Cref{Fi:Running},
the order $\leqloc_H$ on memory locations , i.e., fields of objects, is defined by following pointers from parent to children, i.e., all the fields of $x.\textit{left}$ and $x.\textit{right}$ are ordered after the fields of $x$, and the fields of an object are ordered by $x.key < \QXD{x} < \{x.\textit{left}, x.\textit{right}\}$. \yf{a drawing would be nice}
\end{example}

\ignore{
As $\leqloc_H$ depends on $H$,
it changes with time (as the state changes).
In the construction of the fabricated state, we consider the \emph{accumulated order}, defined as follows:
}

We note that as $\leqloc_H$ depends on $H$,
it changes with time (as the state changes).
We make the following requirements on $\leqloc_H$, which the user also needs to establish (for \Cref{lem:forward-agreement,lem:inclusion} below).

\para{Locality of the order} In order to ensure that the fabricated state is forward agreeing with the global state (\Cref{lem:forward-agreement}), we require that
the order $\leqloc_{H}$ is determined \emph{locally} in the sense that if $\ell_2$ is an immediate successor of $\ell_1$ in $\leqloc_H$, then for every $H'$ such that $H'(\ell_1) = H(\ell_1)$ it holds that $\ell_1 \leqloc_{H'} \ell_2$. Note that the value in the target location does not affect the inclusion in the order; as an illustration, if $\leqloc_H$ is defined based on pointers, and $\ell_1$ is a pointer to $\ell_2$, making $\ell_2$ an immediate successor of $\ell_1$ in $\leqloc_H$, then this definition depends on the value in $\ell_1$ but not on the value in $\ell_2$.

\para{Read in order} To enforce the fact that the order captures the order in which operations read the different memory locations, we require reads to the local view to respect the order. Formally, consider a read $r$ in the sequence $\seqr$ reading the  location $\ell$ from the global state $\opprefix{\hconc}{m} = w_1 \ldots w_m(\hinit)$, and let $\readset{}$ be the set of locations read by earlier reads in $\seqr$. We require $\readset{} \leqacculoc_{w_1 \ldots w_m(\hinit)} \ell$.
The read-in-order property holds if reads always read a location further in the order in the current global state. Namely, if $\ell'$ is the last location read into the local view, the next location $\ell$ read is such that $\ell' \leqloc_{\opprefix{\hconc}{m}} \ell$.

Note that the read must respect the order of the global state but is oblivious to most of the global state; this is in line with the local nature of the order.

\begin{example}
The fact that the read-in-order property holds for all the methods in~\Cref{Fi:Running} follows from a very simple syntactic analysis, e.g., in the case of \code{locate(k)}, children are always read after their parents and the field $\textit{key}$ is always accessed before $\textit{left}$ or $\textit{right}$.
\end{example}

\para{Acyclicity of the accumulated order} We define the \emph{accumulated order} w.r.t.\ a sequence of writes $\hat{w}_1,\ldots,\hat{w}_m$, denoted $\leqacculoc_{\hat{w}_1 \ldots \hat{w}_m(\hinit)}$, as the transitive closure of
$\bigcup\limits_{0 \leq s \leq m} \leqloc_{\hat{w}_1 \ldots \hat{w}_s(\hinit)}$.
We require that the accumulated order $\leqacculoc_{w_1 \ldots w_n(\hinit)}$ is a partial order.

\begin{example}
In the running example of \Cref{Fi:Running},
the accumulated order is constructed by collecting all pointer links created during an execution.
As explained in \Cref{example:acyclicity-cf} this relation is acyclic, and hence remains a partial order.
\end{example}


\ignore{
\begin{example}
In the instance of our framework for concurrent search data structures presented in \Cref{Se:Using},
the accumulated order is constructed by collecting all pointer links created during an execution.

\ignore{
The relation produced by this union 
should remain acyclic.

In our running example,  creating the accumulated order is done by collecting all parent-children links created during an execution.
To apply our framework, we need to ensure that this procedure does not create cycles. Essentially, this holds because $\code{contains}(k)$ is a read-only procedure, \code{insert(k)} and \code{delete(k)} do not modify links in the tree, they can only modify the field ${\tt del}$ or add a new node, \code{removeRight()} modifies the child of a node to one of its descendants, and the rotations do not modify the fields of a child so that they point to its parent, but allocate a new object instead.
}
\end{example}
}


%

\para{Construction of the fabricated state}
We exploit the order to construct $\hobserved$ so that it contains $\hlocal$, but also satisfies the property of forward-agreement with the global state.
Formally, we construct $\hobserved$ in the following way: consider the reads that form $\hlocal$. Each read $r$ in $\seqr$ is of some location $x_r$ in an intermediate memory state $\opprefix{\hconc}{m} = w_1 \ldots w_m(\hinit)$. We take the writes that occurred \emph{backwards} in time and modify locations \emph{forward} in the accumulated order:
$\mathop{\textit{precede-forward}}(r) = \{w_j \mid j \leq m \land x_r \leqacculoc_{w_1 \ldots w_m(\hinit)} \modw{w_j}\}$.
The subsequence $\seqwf = w_{i_1},\ldots,w_{i_k}$ is taken to be the union of $\mathop{\textit{precede-forward}}(r)$ for all $r$'s in $\seqr$.

\ignore{
Roughly speaking, taking all the forward writes ensures that $\hobserved$ is forward-agreeing with $\hconc$ (under an assumption of \emph{locality} of the order).
\yf{rephrase following}
Furthermore, under an assumption that reads respect the order and that the accumulated order is still a partial order,
we have that $\hlocal \subseteq \hobserved$. This is because in each read step, the writes that modify locations forward in the order compared to the current read are consistent with the view of the thread to this point.
We formalize these claims next.
}


\subsubsection{Forward-Agreement as a Step Toward Simulation}\label{subsec:forward-agreement}
Next we formalize \emph{forward-agreement} between the fabricated state and the global state. Forward-agreement requires that when a write in the subsequence $w_{i_1},\ldots,w_{i_k}$ is performed on the intermediate fabricated state, the locations that come \emph{forward} in the order --- in the order induces by the before or after the writ --- have \emph{exactly} the same values as when the write is performed in global memory.

\sharon{Perhaps better to phrase this using the accumulated order}
\begin{definition}[Forward-Agreement]
\label{def:forward-agreement}
	$\hobserved = w_{i_1} \ldots w_{i_k}(\hinit)$ is \emph{forward-agreeing} with $\hconc = w_1 \ldots w_n(\hinit)$ if for every $0 \leq j < k$,
	\begin{equation}
		\forall \ell. \ 	\left( \modw{w_{i_j}} \leqloc_{\opprefix{\hobserved}{j-1}} \ell \right)
					  \lor
					  		\left( \modw{w_{i_j}} \leqloc_{\opprefix{\hobserved}{j}} \ell \right)
						\implies
							\opprefix{\hobserved}{j-1}(\ell) = \opprefix{\hconc}{i_j-1}(\ell).
	\end{equation}
\end{definition}


	\begin{lemma} \label{lem:forward-agreement}
		If the order $\leqloc_H$ is determined locally, then $\hobserved$ is forward-agreeing with $\hconc$.
	\end{lemma}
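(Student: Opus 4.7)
The plan is to argue by contradiction. Fix $j$ and $\ell$ as in the definition, and suppose $\opprefix{\hobserved}{j-1}(\ell) \neq \opprefix{\hconc}{i_j-1}(\ell)$. Both states are obtained by applying subsequences of $w_1,\ldots,w_{i_j-1}$ to $\hinit$ in their original relative order: $\opprefix{\hobserved}{j-1}$ applies $w_{i_1},\ldots,w_{i_{j-1}}$, while $\opprefix{\hconc}{i_j-1}$ applies all of $w_1,\ldots,w_{i_j-1}$. Comparing the last writes to $\ell$ in each sequence (taking $\hinit(\ell)$ if none) shows that a disagreement forces the existence of an index $s < i_j$ with $\modw{w_s} = \ell$ and $w_s \notin \{w_{i_1},\ldots,w_{i_{j-1}}\}$; moreover, every $i_l$ with $l \geq j$ satisfies $i_l \geq i_j > s$, so in fact $w_s \notin \seqwf$. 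The goal is now to derive a contradiction by showing that the construction of $\mathop{\textit{precede-forward}}$ should have included $w_s$ in $\seqwf$.

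The key step is to lift the ordering $\modw{w_{i_j}} \leqloc_{\opprefix{\hobserved}{j-1}} \ell$ (the symmetric case with $\opprefix{\hobserved}{j}$ is analogous, using that $w_{i_j}$ fixes the value of $\modw{w_{i_j}}$ in both $\opprefix{\hobserved}{j}$ and $\opprefix{\hconc}{i_j}$) into the accumulated order over concrete states. I would decompose this ordering as a chain $\modw{w_{i_j}} = a_0 \to a_1 \to \cdots \to a_k = \ell$ of immediate successors in $\opprefix{\hobserved}{j-1}$. For each intermediate node $a_i$ with $i < k$, its value in $\opprefix{\hobserved}{j-1}$ is either $\hinit(a_i)$ (if no write in $w_{i_1},\ldots,w_{i_{j-1}}$ touched $a_i$) or the value written by the most recent $w_{i_l}$ with $l < j$ and $\modw{w_{i_l}} = a_i$. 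In either case, some concrete state $\opprefix{\hconc}{m_i}$ with $m_i < i_j$ has the same value at $a_i$ --- namely $\opprefix{\hconc}{0}$ or $\opprefix{\hconc}{i_l}$. By locality, the edge $a_i \to a_{i+1}$ transfers to $\opprefix{\hconc}{m_i}$, so $a_i \leqloc_{\opprefix{\hconc}{m_i}} a_{i+1}$ lies in the accumulated order up to any time $m \geq i_j$. Concatenating these edges and using transitivity of $\leqacculoc$ yields $\modw{w_{i_j}} \leqacculoc_{w_1 \ldots w_m(\hinit)} \modw{w_s}$; the degenerate case $\ell = \modw{w_{i_j}}$ follows from reflexivity.

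With this relation in hand, the contradiction is immediate from the construction of $\seqwf$. Since $w_{i_j} \in \seqwf$, there is a read $r \in \seqr$ of some location $x_r$ from a state $\opprefix{\hconc}{m}$ with $i_j \leq m$ and $x_r \leqacculoc_{w_1 \ldots w_m(\hinit)} \modw{w_{i_j}}$. Transitivity then gives $x_r \leqacculoc_{w_1 \ldots w_m(\hinit)} \modw{w_s}$, and since $s < i_j \leq m$, this places $w_s$ into $\mathop{\textit{precede-forward}}(r) \subseteq \seqwf$, contradicting $w_s \notin \seqwf$. The main obstacle is the locality-transfer step in the middle paragraph: one must carefully exhibit, for each intermediate node of the fabricated chain, a specific concrete state at which its value matches the fabricated one. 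This relies on the observation that every value appearing in $\opprefix{\hobserved}{j-1}$ either comes from $\hinit$ or from a write that was actually executed at some earlier point in the global sequence $\seqw$, and that locality only requires agreement at the source of each immediate-successor edge.
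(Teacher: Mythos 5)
Your proof is correct and follows essentially the same route as the paper's: you identify the last global write $w_s$ to $\ell$ missing from $\seqwf$, lift the edge $\modw{w_{i_j}} \leqloc_{\opprefix{\hobserved}{j-1}} \ell$ into the accumulated order of the concrete states via a chain of immediate successors and locality (this is exactly the paper's auxiliary lemma that $\leqacculoc_{w_{i_1}\ldots w_{i_k}(\hinit)} \subseteq \leqacculoc_{w_1\ldots w_n(\hinit)}$), and then use the read $r$ that caused $w_{i_j}$ to be included, together with transitivity, to conclude $w_s \in \mathop{\textit{precede-forward}}(r)$. The only difference is cosmetic: you phrase it as a contradiction while the paper argues directly that the last write to $\ell$ before $w_{i_j}$ is included in $\seqwf$.
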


\iflong
For the proof of \Cref{lem:forward-agreement}, we first show that, under the locality assumption, the accumulated order induced by a subsequence of writes is contained in the accumulated order of the entire sequence (this is important for relating the order in the fabricated and global state).
We note that the accumulated order is trivially monotonic w.r.t.\ additional writes: $\leqacculoc_{w_1 \ldots w_i(\hinit)} \ \subseteq \ \leqacculoc_{w_1 \ldots w_{i+1}(\hinit)}$, but when considering a subsequence, the intermediate states do not coincide.

\begin{lemma}
	If the order $\leqloc_H$ is determined locally, then for every sequence of writes $w_1,\ldots,w_n$ and every subsequence $w_{i_1},\ldots,w_{i_k}$ operating on a state $\hinit$,
	\begin{equation*}
		\leqacculoc_{w_{i_1} \ldots w_{i_k}(\hinit)}  \ \subseteq \ \leqacculoc_{w_1 \ldots w_n(\hinit)}
	\end{equation*}
\end{lemma}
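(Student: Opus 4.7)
The plan is to reduce the statement about the transitive closures to the generating relations, then transfer each generating pair using the locality hypothesis. Concretely, since $\leqacculoc_{w_{i_1}\ldots w_{i_k}(\hinit)}$ is by definition the transitive closure of $\bigcup_{0\le s\le k}\leqloc_{\opprefix{\hobserved}{s}}$, and transitive closure is monotone under inclusion of the underlying relation, it suffices to show that every pair $(\ell_1,\ell_2)$ with $\ell_1\leqloc_{\opprefix{\hobserved}{s}}\ell_2$ (for some $0\le s\le k$) belongs to $\leqacculoc_{w_1\ldots w_n(\hinit)}$. A further reduction: by decomposing the relation $\leqloc_{\opprefix{\hobserved}{s}}$ into a chain of immediate successors, it is enough to treat the case when $\ell_2$ is an immediate successor of $\ell_1$ in $\leqloc_{\opprefix{\hobserved}{s}}$, because then the full pair follows by transitivity within $\leqacculoc_{w_1\ldots w_n(\hinit)}$.

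So fix $s$ and an immediate successor pair $\ell_1 \lessdot \ell_2$ in $\leqloc_{\opprefix{\hobserved}{s}}$. By the locality hypothesis, to place this pair in some $\leqloc_{w_1\ldots w_{s'}(\hinit)}$ it suffices to find an index $0\le s'\le n$ such that $w_1\ldots w_{s'}(\hinit)(\ell_1) = \opprefix{\hobserved}{s}(\ell_1)$. I will split into two cases according to the last write to $\ell_1$ in the subsequence prefix $w_{i_1},\ldots,w_{i_s}$. If no such write exists, then $\opprefix{\hobserved}{s}(\ell_1)=\hinit(\ell_1)$; taking $s'=0$, the full-sequence state $w_1\ldots w_0(\hinit)=\hinit$ agrees on $\ell_1$, and locality yields $\ell_1\leqloc_{\hinit}\ell_2$, hence the pair lies in $\leqacculoc_{w_1\ldots w_n(\hinit)}$. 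If instead $w_{i_{t'}}$ is the last write to $\ell_1$ within $w_{i_1},\ldots,w_{i_s}$, writing some value $v$, then $\opprefix{\hobserved}{s}(\ell_1)=v$; taking $s'=i_{t'}$, the state $w_1\ldots w_{i_{t'}}(\hinit)$ also stores $v$ at $\ell_1$ (since $w_{i_{t'}}$ is literally the final write of this prefix), so locality again gives $\ell_1\leqloc_{w_1\ldots w_{i_{t'}}(\hinit)}\ell_2$, placing the pair in $\leqacculoc_{w_1\ldots w_n(\hinit)}$.

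Combining the two cases, every immediate successor pair in the generating relation of the subsequence's accumulated order lies in the full sequence's accumulated order; composing along chains gives the same for arbitrary pairs, and taking transitive closures concludes the inclusion.

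\textbf{Main obstacle.} The only subtle point is that the intermediate states $\opprefix{\hobserved}{s}$ and $w_1\ldots w_{s'}(\hinit)$ will generally disagree at many locations, so one cannot just match whole states. The key idea is that locality of $\leqloc_H$ is exactly the right granularity: only agreement at $\ell_1$ is needed, and for any value that $\ell_1$ takes in the subsequence it is easy to exhibit a prefix of the full sequence realizing the same value (either the initial state or the state right after the corresponding write in the full sequence).
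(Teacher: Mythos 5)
Your proposal is correct and follows essentially the same route as the paper's proof: reduce to the generating relations, decompose into immediate-successor pairs, and for each such pair exhibit a prefix of the full sequence whose state agrees with the subsequence state on the single location $\ell_1$, then invoke locality. Your explicit case split (initial state vs.\ the state right after the last write to $\ell_1$ in the subsequence prefix) just spells out the step the paper states in one line.
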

\begin{proof}
Assume $\ell_1 \leqacculoc_{w_{i_1} \ldots w_{i_n}(\hinit)} \ell_2$, and let $x_1,\ldots,x_m$ be a sequence of locations such that $x_1 = \ell_1$, $x_m = \ell_2$ and for every $0 \leq i < m$, $x_i \leqloc_{w_{i_1} \ldots w_{i_s}(\hinit)} x_{i+1}$ for some $0 \leq s \leq k$. It suffices to show that $\leqloc_{w_{i_1} \ldots w_{i_s}(\hinit)}  \ \subseteq \ \leqacculoc_{w_1 \ldots w_n(\hinit)}$, because this implies that $x_1,\ldots,x_m$ is an appropriate sequence for establishing that $\ell_1 \leqacculoc_{w_1 \ldots w_n(\hinit)} \ell_2$.

Assume therefore that $\ell_1 \leqloc_{w_{i_1} \ldots w_{i_k}(\hinit)} \ell_2$, and prove that $\ell_1 \leqacculoc_{w_1 \ldots w_n(\hinit)} \ell_2$.
Since the order is discrete, there is a sequence $x_1,\ldots,x_m$ of locations such that $x_1 = \ell_1$, $x_m = \ell_2$ and $x_{i+1}$ is the immediate successor of $x_i$ in $\leqloc_{w_{i_1} \ldots w_{i_k}(\hinit)}$ for all $0 \leq i < m$. It suffices to show that $x_i \leqacculoc_{w_1 \ldots w_n(\hinit)} x_{i+1}$ for every $0 \leq i < m$.

Fix some $i$. Since $w_{i_1},\ldots,w_{i_k}$ is a subsequence of $w_1,\ldots,w_n$, there is some intermediate state $w_1 \ldots w_s(\hinit)$ that agrees on the value in the \emph{single location} $x_i$, meaning $(w_{i_1} \ldots w_{i_k}(\hinit))(x_i) = (w_1 \ldots w_s(\hinit))(x_i)$.
By the locality assumption, $x_i \leqloc_{w_1 \ldots w_s(\hinit)} x_{i+1}$. This implies that $x_i \leqacculoc_{w_1 \ldots w_n(\hinit)} x_{i+1}$.
The claim follows.
\end{proof}

	\begin{proof}[Proof of \Cref{lem:forward-agreement}]
		Let $w_{i_j}$ be a write in the subsequence. Let $\ell$ be a location s.t.\ $\modw{w_{i_j}} \leqloc_{\opprefix{\hobserved}{j-1}} \ell$ or $\modw{w_{i_j}} \leqloc_{\opprefix{\hobserved}{j}} \ell$. We need to show that $\opprefix{\hobserved}{j-1}(\ell) = \opprefix{\hconc}{i_j - 1}(\ell)$.

		Let $w_s$ the last write (in the global memory) to modify $\ell$ before $w_{i_j}$: take $s$ to be maximal index such that $s < i_j$ and $\modw{w_s} = \ell$.
		If there is no such $s$, $\opprefix{\hconc}{i_j - 1}(\ell) = \opprefix{\hobserved}{j-1}(\ell) = \hinit(\ell)$, since $\ell$ is not modified by any preceding write.
		Otherwise, it suffices to show that $w_s$ is also included in the subsequence, since $w_s$ is performed on both $\hobserved$ and $\hconc$ and there are no writes after $w_s$ and before $w_{i_j}$ to modify $\ell$ in the sequence of writes and therefore also in the subsequence.

		By the construction of the subsequence, there is a read $r$ of location $x_r$ from the global state $\opprefix{\hconc}{m}$ such that $i_j \leq m$ and $x_r \leqacculoc_{\opprefix{\hconc}{m}} \modw{w_{i_j}}$.
		We have that $s < m$ and $x_r \leqacculoc_{\opprefix{\hconc}{m}} \modw{w_s}$ as $\modw{w_s} = \ell$ and $\modw{w_{i_j}} \leqacculoc_{\opprefix{\hconc}{m}} \ell$ because $\modw{w_{i_j}} \leqacculoc_{\opprefix{\hconc}{i_j}} \ell$ (the accumulated order satisfies $\leqacculoc_{\opprefix{\hconc}{i_j}} \subseteq \leqacculoc_{\opprefix{\hconc}{m}}$ as $i_j \leq m$).
		Therefore, the construction includes $w_s$ in the subsequence as well.
	\end{proof}
\fi

In the next section we use forward agreement, together with an additional property, to show simulation for predicates defined by reachability along search paths.

\subsubsection{Inclusion of the Local View in the Fabricated State}

Next, we turn to showing that $\hlocal \subseteq \hobserved$.
For this purpose, we require that the definition of the order correctly captures the order of manipulation of the data structure.
Namely, we require that the reads are performed in the order dictated by $\leqloc_H$, and that writes preserve the order in the sense that they do not introduce cycles in the accumulated order.

	\begin{lemma} \label{lem:inclusion}
		If (1) 
the accumulated order $\leqacculoc_{\bar{w}(\hinit)}$ is a partial order, and (2) every sequence of reads satisfies the read-in-order property, then $\hlocal \subseteq \hobserved$.
	\end{lemma}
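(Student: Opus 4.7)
}
The plan is to show, for each location $\ell$ in the domain of $\hlocal$, that $\hobserved(\ell)$ equals $\hlocal(\ell)$. Given such an $\ell$, I would pick $r$ to be the \emph{last} read of $\ell$ in $\seqr$, say reading $\ell$ from the global state $\opprefix{\hconc}{m_r}$, so that $\hlocal(\ell) = v := \opprefix{\hconc}{m_r}(\ell)$. Let $w_s$ denote the last write to $\ell$ among $w_1,\ldots,w_{m_r}$ (if none exists, take $v = \hinit(\ell)$ and the argument below degenerates correspondingly). Since $\ell \leqacculoc_{\opprefix{\hconc}{m_r}} \ell = \modw{w_s}$ (because $\leqacculoc$ is a partial order, hence reflexive), the construction of $\seqwf$ via $\mathop{\textit{precede-forward}}(r)$ forces $w_s$ to belong to $\seqwf$. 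So $\hobserved(\ell)$ is the value of the \emph{last} write to $\ell$ inside $\seqwf$; the goal reduces to showing that $w_s$ itself is this last write, i.e.\ no $w_t \in \seqwf$ with $t > s$ satisfies $\modw{w_t} = \ell$.

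Suppose for contradiction such a $w_t$ exists. By the construction of $\seqwf$, there is a read $r'$ in $\seqr$ reading some location $x_{r'}$ from state $\opprefix{\hconc}{m_{r'}}$ with $t \leq m_{r'}$ and $x_{r'} \leqacculoc_{\opprefix{\hconc}{m_{r'}}} \modw{w_t} = \ell$. The argument splits on whether $r'$ precedes or follows $r$ in $\seqr$. If $r'$ occurs at or before $r$, then $m_{r'} \leq m_r$, so $s < t \leq m_r$ exhibits a write to $\ell$ strictly later than $w_s$ inside $w_1,\ldots,w_{m_r}$, contradicting the choice of $w_s$. If $r'$ occurs after $r$, then the read-in-order hypothesis, applied to $r'$, gives $\ell \leqacculoc_{\opprefix{\hconc}{m_{r'}}} x_{r'}$ (since $\ell$ is among the locations read earlier in $\seqr$). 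Combined with $x_{r'} \leqacculoc_{\opprefix{\hconc}{m_{r'}}} \ell$ above and the antisymmetry of the accumulated order---which transfers from the ambient partial order $\leqacculoc_{w_1 \ldots w_n(\hinit)}$ to its sub-relation $\leqacculoc_{\opprefix{\hconc}{m_{r'}}}$ by the standard monotonicity $\leqacculoc_{\opprefix{\hconc}{m_{r'}}} \subseteq \leqacculoc_{w_1 \ldots w_n(\hinit)}$---we conclude $x_{r'} = \ell$, so $r'$ is a read of $\ell$ strictly later than $r$, contradicting the choice of $r$.

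Both hypotheses are genuinely used: acyclicity of $\leqacculoc$ is what makes antisymmetry available in the ``later $r'$'' case, while read-in-order is what pins $r'$ down to a location below $\ell$. The only mildly subtle step is checking that antisymmetry propagates from $\leqacculoc_{w_1\ldots w_n(\hinit)}$ to the intermediate accumulated order at time $m_{r'}$; this I expect to be the main obstacle, but it follows from the elementary observation that the union defining $\leqacculoc_{\opprefix{\hconc}{m}}$ ranges over a subfamily of the union defining $\leqacculoc_{w_1\ldots w_n(\hinit)}$, and the transitive closure is monotone in its input. With these two cases dispatched, the desired inclusion $\hlocal \subseteq \hobserved$ follows location by location.
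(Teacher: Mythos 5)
Your proposal is correct and follows essentially the same route as the paper's proof: fix the last read $r$ of $\ell$, identify the read-from write $w_s$, show it is in $\seqwf$ by construction, and rule out any later write to $\ell$ in $\seqwf$ by the same two-case split (the earlier-read case contradicting maximality of $s$, the later-read case using read-in-order plus antisymmetry of the accumulated order to contradict $r$ being the last read of $\ell$). Your extra remark on why antisymmetry transfers to the intermediate accumulated order is a correct, minor elaboration the paper leaves implicit.
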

\iflong
	\begin{proof}
		Let $x$ be a location in $\hlocal$, and let $r$ be the last read of $x$. Assume that $r$ reads from the global state $\opprefix{\hconc}{m}$. Let $w_s$ be the write $r$ \emph{reads-from}, namely, $w_s$ where $s$ is the maximal index such that $s \leq m$ and $\modw{w_s} = x_r$. By the construction, $w_s$ is included in the subsequence, so let $d$ be such that $i_d = s$.
		From this, $\opprefix{\hobserved}{d}(x) = \hlocal(x)$. It remains to show that later writes in the subsequence do not modify this location, namely $\modw{w_{i_j}} \neq \ell$ for all $d < j \leq k$.

		Assume $\modw{w_{i_j}} = x$ for some $j$. From the construction, $w_{i_j}$ is included in the subsequence due to some read $r'$ of location $x_{r'}$ from $\opprefix{\hconc}{m'}$ such that $i_j \leq m'$ and $x_{r'} \leqacculoc_{\opprefix{\hconc}{m'}} x$.
		\begin{itemize}
			\item If $r' \leq r$, $m' \leq m$ and thus $i_j \leq m$. Since $i_d = s$ was the maximal index so that $\modw{x_s} = x$ and $\modw{w_s} = x$, $i_j \leq i_d$.
			\item If $r' > r$, because reads respect the order, $x \leqacculoc_{\opprefix{\hconc}{m'}} x_{r'}$. Since the accumulated order is anti-symmetric, it follows that $x = x_{r'}$. But this is a contradiction to the fact that $r$ is the \emph{last} to read $x_r$.
		\end{itemize}
		The claim follows.
	\end{proof}
\fi

\ignore{
While the accumulated order is trivially reflexive and transitive, we need to require that the accumulated is (weakly) anti-symmetric.
\ignore{
	\begin{lemma}
	The accumulated order is reflexive and transitive.
	\end{lemma}
	\iflong
	\begin{proof}
	Follows from the reflexivity and transitivity of $\leqloc_{w_1 \ldots w_s(\hinit)}$ for every $0 \leq s \leq n$.
	\end{proof}	
}

\begin{definition}[Cycle Freedom]
The accumulated order is required to be weakly anti-symmetric.
\end{definition}
}

\ignore{
\subsection{Simulation}
In this section we target the property of simulation (\Cref{def:uniform-change}), and describe the properties on which we can rely when proving it.

\paragraph{Forward-Agreement for Simulation}
We have used the order to construct $\hobserved$ such that it is forward-agreeing with $\hconc$.

\paragraph{Information Between Predicates via Hindsight.}
In certain cases, it is useful to consider more than one predicate on the fabricated state.
For example, when $y$ is inserted as a child of $x$, the reachability of $y$ is determined by the reachability of $x$ before the write.
To this end, we prove the translation from the fabricated state to the global state for all predicates of interest simultaneously. This will allow \yf{complete}

\paragraph{What Do We Have So Far.}
\sharon{The point here is not clear}
Formally, let $\predset$ be the set of read-inferred predicates. Our goal is to prove for every $S \in \predset$ that $S(\hobserved) \implies \exists 0 \leq i \leq n. \ S(\opprefix{\hconc}{i})$.
We strengthen the claim to every intermediate state of the fabricated state, namely that for $0 \leq j \leq k$, $\forall S \in \predset. \ S(\opprefix{\hobserved}{j}) \implies \exists 0 \leq i \leq i_j. \ S(\opprefix{\hconc}{i})$. We prove this by induction on $j$.

The induction base is trivial as $\opprefix{\hobserved}{0} = \opprefix{\hconc}{0} = \hinit$.
For a step, consider a write $w_{i_{j+1}}$, and some $\Pred \in \predset$.
Assume that $\neg\Pred(\opprefix{\hobserved}{j})$ but $\Pred(w_{i_{j+1}}(\opprefix{\hobserved}{j}))$,
$\neg\Pred(\opprefix{\hconc}{i_{j+1}-1})$, and prove that $\Pred(w_{i_{j+1}}(\opprefix{\hconc}{i_{j+1}-1}))$.
\begin{enumerate}
	\item The write $w_{i_{j+1}}$ on the fabricated state turns $\Pred$ from false to true:
		$\neg\Pred(\opprefix{\hobserved}{j})$ but $\Pred(w_{i_{j+1}}(\opprefix{\hobserved}{j}))$.

	\item The write $w_{i_{j+1}}$ on the global state is valid (interfering writes can be assumed to be valid, as we have remarked before), and operates on a state in which $\Pred$ does not hold (although this fact will not play a role in our proofs).

	\item $\opprefix{\hobserved}{j}$ is forward-agreeing with $\opprefix{\hconc}{i_{j+1}-1}$.

	\item \emph{Hindsight}: if a predicate holds on the fabricated state before the write then it holds as some point the global state before the write. Formally, if $S(\opprefix{\hobserved}{j})$ holds for $S \in \predset$, then there exists some $0 \leq i \leq i_j$ s.t.\ $S(\opprefix{\hconc}{i_j})$ holds.
	This can be assumed from the induction hypothesis.
\end{enumerate}

In the next section we use this structure to show simulation for predicates defined by reachability along search paths.
}

\subsection{Simulation w.r.t. Reachability by Search Paths} \label{subsec:search-paths}
In this section we show how to prove the simulation property for predicates defined by \emph{search paths}.
We define a \emph{preservation} property that complements forward-agreement to obtain a general proof of simulation for such predicates, showing that reachability by search paths transfers from the fabricated state to the global state.
We then extend this result for predicates defined by reachability with checking another field, which is required for some operations.

\para{Search paths}
Intuitively, a $k$-search path in state $H$ is a sequence of locations following $\leqloc_H$ that is traversed when searching for $k$.
	Formally, for a parameter $k$, a $k$-search path in a state $H$ is a sequence of locations $\ell_1,\ldots,\ell_m$, with the following requirements:
	\begin{itemize}
\item If $\ell_1,\ldots,\ell_m$ is a $k$-search path in $H$ then $\ell_i \leqloc_{H} \ell_{i+1}$ and $\ell_i \neq \ell_{i+1}$ for every $1 \leq i < m$.
		\item If $\ell_1,\ldots,\ell_m$ is a $k$-search path in $H$ and $H'$ satisfies $H'(\ell_i) = H(\ell_i)$ for all $1 \leq i < m$, then $\ell_1,\ldots,\ell_m$ is a $k$-search path in $H'$ as well, i.e., the search path depends only on the values in the locations in the sequence but the last.
		\item If $\ell_1,\ldots,\ell_m$ and $\ell_m,\ldots,\ell_{m+r}$ are both $k$-search paths in $H$, then so is $\ell_1,\ldots,\ell_m,\ldots,\ell_{m+r}$.
		\item If $\ell_1,\ldots,\ell_m$ is a $k$-search path in $H$ then so is $\ell_i,\ldots,\ell_j$ for every $1 \leq i \leq j \leq m$.
	\end{itemize}

We say that $\ell_1 \searchpath{k} \ell_2$ holds in $H$ if there exists a $k$-search path in state $H$ that starts in $\ell_1$ and ends in $\ell_2$.
\begin{example}
	$k$-search paths in the tree of \Cref{Fi:Running} are consists of sequences $\langle x.key, x.\ileft, y.key \rangle$ where $y.key$ is the address pointed to by $x.\ileft$ (meaning, the location that is the value stored in $x.\ileft$) and $x.key < k$, or
	$\langle x.key, x.\iright, y.key \rangle$ where $y.key$ is the address pointed to by $x.\iright$ and $x.key > k$. 
	This definition of $k$-search paths reproduces the definition of reachability along search paths from \Cref{Se:ProofLin}.
\end{example}

\subsubsection{Simulation From Order and Preservation}

Assuming a specific location serving as the entry point to the data structure, $\rootobj$, the predicate of reachability by a $k$-search path is
$\PredSearch{k}{x} = \rootobj \searchpath{k} x$.


\begin{definition}[Preservation]
\label{def:preservation}
We say that $\seqw$ \emph{ensures preservation of reachability by search paths} if for every $1 \leq m \leq n$, 
if for some $0 \leq i < m$, $\opprefix{\hconc}{i} \models \PredSearch{k}{\modw{w_m}}$  then $\opprefix{\hconc}{m-1} \models \PredSearch{k}{\modw{w_m}}$.
\end{definition}

We now prove the simulation w.r.t. $\PredSearch{k}{x}$ between the fabricated and global state.
We analyze \emph{all} the reachability predicates for every parameter $k$ of interest together: $\PredSearch{k}{x} \in \predset$ for every $k$ and location $x$.
\begin{lemma}
If 
$\seqw$ ensures preservation of reachability by search paths, 
then the fabricated state $\hobserved$ constructed in \Cref{subsec:fabricated} simulates $\hconc$ w.r.t.
the predicate $\PredSearch{k}{x}$ 
for every $k$ and location $x$.
\end{lemma}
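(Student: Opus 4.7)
The plan is to proceed by induction on $j$, proving the simulation condition for the step from $\opprefix{\hobserved}{j-1}$ to $\opprefix{\hobserved}{j}$ \emph{simultaneously} for every predicate $\PredSearch{k}{x}$ (parametrized by search key $k$ and target location $x$). Treating all these predicates together at each $j$ is essential, because the step for $\PredSearch{k}{x}$ will turn out to hinge on a past reachability fact about a \emph{different} target location, namely $\modw{w_{i_j}}$, so the induction cannot be carried out one predicate at a time.

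The key auxiliary observation is a hindsight consequence of the induction hypothesis: if the simulation condition has been established at every step $j' < j$, then \Cref{lem:simulation} applied to the truncated subsequence $w_{i_1},\ldots,w_{i_{j-1}}$ yields, for every $k$ and $x$,
\[
  \PredSearch{k}{x}(\opprefix{\hobserved}{j-1}) \;\implies\; \exists\, 0 \leq i \leq i_{j-1}.\; \PredSearch{k}{x}(\opprefix{\hconc}{i}).
\]
For the inductive step, fix $k$, $x$ and assume $\neg\PredSearch{k}{x}(\opprefix{\hobserved}{j-1})$, $\PredSearch{k}{x}(w_{i_j}(\opprefix{\hobserved}{j-1}))$, and $\neg\PredSearch{k}{x}(\opprefix{\hconc}{i_j-1})$; the goal is $\PredSearch{k}{x}(w_{i_j}(\opprefix{\hconc}{i_j-1}))$. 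Fix a witnessing $k$-search path $\zeta = \ell_1,\ldots,\ell_m$ from $\rootobj$ to $x$ in $\opprefix{\hobserved}{j}$. Because $\zeta$ is not a $k$-search path in $\opprefix{\hobserved}{j-1}$ and a $k$-search path depends only on the values at its non-terminal positions, the modified location $\modw{w_{i_j}}$ must occur at some position $p < m$ of $\zeta$; pick the largest such $p$.

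Split $\zeta$ at $\ell_p = \modw{w_{i_j}}$ into a prefix $\ell_1,\ldots,\ell_p$ and a suffix $\ell_p,\ldots,\ell_m$. For the suffix, every $\ell_q$ with $q \geq p$ satisfies $\modw{w_{i_j}} \leqloc_{\opprefix{\hobserved}{j}} \ell_q$ by the search-paths-follow-order clause of \Cref{def:search-order}, so \Cref{lem:forward-agreement} shows that $\opprefix{\hobserved}{j-1}$ and $\opprefix{\hconc}{i_j-1}$ agree on the entire suffix (its terminal value $\ell_m$ being irrelevant by value-locality); hence the suffix is a $k$-search path in $w_{i_j}(\opprefix{\hconc}{i_j-1})$ as well. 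For the prefix, value-locality again shows that $\ell_1,\ldots,\ell_p$ is already a $k$-search path in $\opprefix{\hobserved}{j-1}$, so $\PredSearch{k}{\modw{w_{i_j}}}(\opprefix{\hobserved}{j-1})$ holds. The hindsight consequence then produces some $0 \leq i \leq i_{j-1} < i_j$ with $\PredSearch{k}{\modw{w_{i_j}}}(\opprefix{\hconc}{i})$, and preservation (\Cref{def:preservation}) lifts this to $\PredSearch{k}{\modw{w_{i_j}}}(\opprefix{\hconc}{i_j-1})$, which by value-locality equals $\PredSearch{k}{\modw{w_{i_j}}}(w_{i_j}(\opprefix{\hconc}{i_j-1}))$. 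Concatenating this prefix with the agreeing suffix (search paths are closed under concatenation) gives the desired $k$-search path to $x$ in $w_{i_j}(\opprefix{\hconc}{i_j-1})$.

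The main obstacle I anticipate is managing the mutual dependence across predicates: the step for $\PredSearch{k}{x}$ relies on hindsight for $\PredSearch{k}{\modw{w_{i_j}}}$, so the inductive statement must quantify uniformly over all pairs $(k,x)$ and be advanced in lockstep over $j$, rather than being proved for each predicate separately. A smaller subtlety is justifying that the largest occurrence of $\modw{w_{i_j}}$ in $\zeta$ is strictly before its endpoint, which is exactly where the assumption $\neg\PredSearch{k}{x}(\opprefix{\hobserved}{j-1})$ together with truncation and value-locality of search paths is used.
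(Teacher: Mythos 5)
Your proposal is correct and follows essentially the same route as the paper's proof: a simultaneous induction over $j$ for all predicates $\PredSearch{k}{x}$, splitting the witnessing path at $\modw{w_{i_j}}$, transferring the suffix via forward agreement, and recovering the prefix from the hindsight consequence of the induction hypothesis combined with preservation. You even make explicit two points the paper leaves implicit (that $\modw{w_{i_j}}$ must occur at a non-terminal position, and that the global-state prefix survives the write because search paths do not depend on their terminal location), so there is nothing to fix.
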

We prove the simulation property by induction on the index $j$ of the  write that changes the value of $\PredSearch{k}{x}$ on the fabricated state $\hobserved$.
%
%
The proof idea is as follows.
Given a write $w$ 
that creates a $k$-search path to $x$ in the fabricated state, we construct such a path in the corresponding global state. The idea is to consider the path that $w$ creates in the intermediate fabricated state after $w$, divide it to two parts: the prefix until $\modw{w}$, and the rest of the path.
Relying on forward-agreement, the part of the path from $\modw{w}$ to $x$ is exactly the same in the corresponding global state, so we need only prove that there is also an appropriate prefix.
But necessarily there has been a $k$-search path to $\modw{w}$ in the fabricated state \emph{before} $w$, so by the induction hypothesis (hindsight) applied on $\modw{w}$, exploiting the fact that simulation up to $j-1$ implies that $\PredSearch{k}{x}(\opprefix{\hobserved}{j-1}) \implies \exists 0 \leq i \leq i_{j-1}. \ \PredSearch{k}{x}(\opprefix{\hconc}{i})$ (see \Cref{lem:simulation}),
there has been a $k$-search path to $\modw{w}$ in some intermediate global state that occurred earlier than the time of $w$.
Since $w$ writes to $\modw{w}$, the preservation property ensures that there is such a path to $\modw{w}$ in the global state also at the time of the write, and the claim follows.
\iflong
\begin{proof}
	The proof is by induction on $j$, showing mutually simulation for $w_{i_j}$ and that $\PredSearch{k}{x}(\opprefix{\hobserved}{j}) \implies \exists 0 \leq i \leq i_{j}. \ \PredSearch{k}{x}(\opprefix{\hconc}{i})$ (see \Cref{lem:simulation}).

	Let $\hconc$ such that $\hconc \not\models \Pred$, and $w$ a valid write on $\hconc$.
	Let $\hobserved$ be forward-agreeing with $\hconc$, such that $\hobserved \not\models \Pred$ but $w(\hobserved) \models \Pred$.
	Our goal is to prove that $\validw{w}(\hconc) \models \Pred$.

	Let $\pi$ be the search path in $w(\hobserved)$. If $\pi$ does not include $\modw{w}$, it is also a search path in $\hobserved$ (before the write), in contradiction to the premise that $\hobserved \not\models \Pred$.
	Let $\pi = \ell_1,\ldots,\ell_m$, and $\ell_p = \modw{w}$.

	Consider the prefix of $\pi$, $\ell_1,\ldots,\ell_{p-1},\ell_p$. This is a $k$-search path in $w(\hobserved)$ (because $\pi$ is, and by slicing), but the locations $\ell_1,\ldots,\ell_{p-1}$ are the same in $\hobserved$, so it is also a $k$-search path for in $\hobserved$, and so
	$\hobserved \models \rootobj \searchpath{k} \modw{w}$.
	Thus, there is some $i$ such that $\opprefix{\hconc}{i} \models \Pred$, and since $w$ is valid on $\hconc$, from preservation to $\modw{w}$ it follows that $\hconc \models \rootobj \searchpath{k} \modw{w}$. This is also true after the write: $\validw{w}(\hconc) \models \rootobj \searchpath{k} \modw{w}$ because the locations $\ell_1,\ldots,\ell_{p-1}$ are different from $\modw{w}$ and thus remain the same after the write.

	Now, $\modw{w} = \ell_p,\ldots,\ell_m$ is a $k$-search path in $w(\hobserved)$. As that $\modw{w} \leq_{w(\hobserved)} \ell_q$ for all $p \leq q$, from forward agreement we have that these locations have the same value also in $\validw{w}(\hconc)$. It follows that $\validw{w}(\hconc) \models \modw{w} \searchpath{k} x$.

	Since $\validw{w}(\hconc) \models \rootobj \searchpath{k} \modw{w}$ and $\validw{w}(\hconc) \models \modw{w} \searchpath{k} x$, we have that $\validw{w}(\hconc) \models \rootobj \searchpath{k} x$.
\end{proof}
\fi

We conclude:
\begin{theorem}
\label{thm:main-result-instance-reach}
If $\leqloc_H$ is determined locally, every sequence of reads satisfies the read-in-order property,
the accumulated order $\leqacculoc_{\bar{w}(\hinit)}$ is a partial order, and
$\seqw$ ensures preservation of reachability by search paths, 
then for every $k$ and location $x$, if $\PredSearch{k}{x}$ holds on $\hlocal$, then there exists
$0 \leq i \leq n$ s.t.\ $\PredSearch{k}{x}$ holds on $\opprefix{\hconc}{i}$.
\end{theorem}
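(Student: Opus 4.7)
The plan is to assemble the theorem from the machinery of Section~\ref{Se:LocalViewArguments}, treating it as a ``pipeline'' that lifts $\PredSearch{k}{x}$ from $\hlocal$ to some $\opprefix{\hconc}{i}$ through the fabricated state $\hobserved$. First I would fix the subsequence $\seqwf = w_{i_1},\ldots,w_{i_k}$ exactly as in Section~\ref{subsec:fabricated}: for every read $r$ in $\seqr$ of location $x_r$ from global state $\opprefix{\hconc}{m}$, include all writes $w_j$ with $j \le m$ and $x_r \leqacculoc_{\opprefix{\hconc}{m}} \modw{w_j}$. Then I would verify the three ingredients granted by our hypotheses: (a) $\hobserved$ is forward-agreeing with $\hconc$, by invoking \Cref{lem:forward-agreement} (this uses only locality of $\leqloc_H$, part of \Cref{def:search-order}); (b) $\hlocal \subseteq \hobserved$, by invoking \Cref{lem:inclusion} (this uses acyclicity of the accumulated order and the read-in-order property, both packaged in the hypotheses of the theorem); and (c) $\PredSearch{k}{x}$ is upward-absolute, which is immediate from the closure properties of search paths stated in \Cref{Se:ProofPresentSearchPaths} (extending a state preserves any existing $k$-search path).

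Next I would carry out the main inductive argument, which is the content of \Cref{lemma:simulation-from-conditions}: the subsequence $\seqwf$ $\PredSearch{k}{x}$-simulates $\seqw$, simultaneously for every $k$ and location $x$. I would prove by induction on $j$ the conjunction of two statements: (i) $w_{i_j}$ satisfies the simulation condition of \Cref{def:uniform-change-abstract-short} on $\PredSearch{k}{x}$, and (ii) the ``hindsight'' consequence that $\PredSearch{k}{x}(\opprefix{\hobserved}{j}) \Rightarrow \exists\, i \le i_j. \PredSearch{k}{x}(\opprefix{\hconc}{i})$, which is exactly the conclusion of \Cref{thm:abstract-framework} specialized to the prefix up to $j$. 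These two facts have to be proved together because the step case of (i) invokes the $(j{-}1)$-case of (ii) (via \Cref{lem:simulation} applied to the prefix).

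For the step case of (i), suppose $w_{i_j}$ changes $\PredSearch{k}{x}$ from false to true on $\opprefix{\hobserved}{j-1}$, and suppose $\neg\PredSearch{k}{x}(\opprefix{\hconc}{i_j-1})$. Let $\zeta=\ell_1,\ldots,\ell_m$ be the $k$-search path in $w_{i_j}(\opprefix{\hobserved}{j-1})$ witnessing $\PredSearch{k}{x}$; since $\PredSearch{k}{x}$ was false beforehand, $\zeta$ must pass through $\modw{w_{i_j}}=\ell_p$ for some $p$. I would split $\zeta$ at $\ell_p$: the prefix $\ell_1,\ldots,\ell_p$ is a $k$-search path already in $\opprefix{\hobserved}{j-1}$ (since the search-path property depends on the values at $\ell_1,\ldots,\ell_{p-1}$ and those are untouched by the write to $\ell_p$), so by the inductive hindsight hypothesis there is some $i' \le i_{j-1}$ with $\opprefix{\hconc}{i'} \models \PredSearch{k}{\modw{w_{i_j}}}$; then the preservation hypothesis forces $\opprefix{\hconc}{i_j-1} \models \PredSearch{k}{\modw{w_{i_j}}}$, which survives the write to give $\PredSearch{k}{\modw{w_{i_j}}}$ in $w_{i_j}(\opprefix{\hconc}{i_j-1})$. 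The suffix $\ell_p,\ldots,\ell_m$ lies entirely forward of $\modw{w_{i_j}}$ in $\leqloc_{w_{i_j}(\opprefix{\hobserved}{j-1})}$, so forward-agreement (applied both before and after the write) transports it verbatim into $w_{i_j}(\opprefix{\hconc}{i_j-1})$. Concatenation of the two paths (closed under concatenation by the axioms of search paths) yields $\PredSearch{k}{x}(w_{i_j}(\opprefix{\hconc}{i_j-1}))$.

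Finally, with simulation established I would combine ingredients (a)--(c) with \Cref{lem:simulation}: upward-absoluteness plus $\hlocal \subseteq \hobserved$ gives $\PredSearch{k}{x}(\hobserved)$, and simulation then yields the desired $i$ with $\PredSearch{k}{x}(\opprefix{\hconc}{i})$. The main obstacle is the mutual induction in step (i)/(ii): the temptation is to prove simulation once and for all, but the step case genuinely needs hindsight for the modified location $\modw{w_{i_j}}$, and hindsight for that location is itself an instance of the theorem at an earlier stage. Keeping the induction stated over \emph{all} keys $k$ and \emph{all} locations $x$ simultaneously (so that hindsight is available for $\modw{w_{i_j}}$, not just for the $x$ we care about) is what makes the induction go through cleanly.
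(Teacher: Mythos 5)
Your proposal is correct and follows essentially the same route as the paper's proof: the same fabricated-state construction, the same three supporting facts (forward-agreement from locality, $\hlocal \subseteq \hobserved$ from acyclicity plus read-in-order, upward-absoluteness), and the same mutual induction coupling simulation with the hindsight consequence, with the path split at $\modw{w_{i_j}}$ handled by preservation on the prefix and forward-agreement on the suffix. Your closing observation—that the induction must range over all keys and all locations simultaneously so that hindsight is available for $\modw{w_{i_j}}$—is exactly the point the paper makes when it analyzes all reachability predicates together.
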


\subsubsection{Reachability with Another Field}\label{Se:ReachWithField}
We now extend the previous result to predicates involving not only the reachability of a location $x$ by a search path but also a property of one of its fields.
Let $\PredXF(y)$ be a property of the value of field of an object $y$. Our goal is to establish properties of the form $\PReachXK{y}{k} \land \PredXF(y)$. As we show here, extending our results from plain reachability to reachability with an additional field is straightforward.

Such predicates are useful for the correctness of ${\tt contains}$, which locates an element and checks without locks whether it is logically deleted, e.g., $\PredXF(y) = \QXD{y}$ or $\PredXF(y) = \QXD[\neg]{y}$.
As another example, consider the predicate $\QReachXK{x}{k} \land \PredXF(x.\mathit{next}) = y$.
The predicate says that there the \emph{link} from $x$ to $y$ is reachable.
Proving that exporting this  from the local view of the traversal to the past of the concurrent execution is the key technical contribution of~\cite{PODC10:Hindsight}.

Our previous results allow to establish that $\PReachXK{y}{k}$. Assume that $\seqr$ further reads the field of $y$ and sees that $\PredXF(y)$ holds. Since $\PredXF(y)$ depends on a single location, this means that it holds \emph{now}, at the time of the read.
Our goal is to ensure that $\Past{\QReachXK{y}{k}\land\PredXF(y)}$ is also true, i.e., both
$\QReachXK{y}{k}$  and $\PredXF(y)$ held
at the same time in the past.
The reasoning is as follows:
Let $y.d$ be the field of on which $\PredXF(y)$ depends.
We have that at some point in the past $\QReachXK{y}{k}$ holds.
If $\PredXF(y)$ also held at that time, then we are done.
Otherwise $\PredXF(y)$ was false at the past but it is true now, when we read $y.d$.
Therefore, a write $w$ must have changed $y.d$.
From preservation\footnote{Technically, the important property of a field is that it is reachable iff the object it belongs to is reachable; $\QReachXK{y.d}{k} \Rightarrow \QReachXK{y}{k}$.},
$\QReachXK{y}{k}$ holds also at the time of the write, so after $w$ is performed it holds that $\QReachXK{y}{k} \land \PredXF(y)$.

\ignore{

\begin{definition}[Field]
A location $d$ is a \emph{field} of a location $x$ if in every state $H$, $x \leqloc_{H} d$ and $H \models \PredSearch{k}{x} \iff \PredSearch{k}{d}$.
\end{definition}
\begin{example}
\todobom{Is this good? Should this come earlier?}
$x.\fdel$ is a field of $x.\dataf$.
\end{example}

\begin{definition}
Let $x$ be a location. A predicate $F(x)$ \emph{depends on a single field} of $x$ if it depends on at most a single
field $d$ of $x$,
meaning: for every $H,H'$
if $H(d) = H'(d)$ then $F(x)(H) \iff F(x)(H')$.
\end{definition}

\begin{theorem}
Let $x$ be a location, and let $F(x)$ be a predicate that depends on a single field of $x$.
If 
$\seqw$ ensures preservation of reachability by search paths (\Cref{def:preservation}), then the
fabricated state $\hobserved$ constructed in \Cref{subsec:fabricated} simulates $\hconc$ w.r.t.
the predicate $\rootobj \searchpath{k} x \land F(x)$
for every $k$ and location $x$.
\end{theorem}
The idea is that a change to the field of interest $d$ ensures preservation of reachability to $x$, and writes that may modify reachability to $x$ modify a location that comes before $x$ and thus before $d$, so that fabricated and global state agree on $d$ and thus on $F(x)$.
\iflong
	\begin{proof}
	Let $d$ be the single location on which $F(x)$ depends. There are two cases:
	\begin{itemize}
		\item $\hobserved \not\models \PredSearch{k}{x}$ but $w(\hobserved) \not\models \PredSearch{k}{x}$:
			As before, we can show that $\validw{w}(\hconc) \models \PredSearch{k}{x}$ as well.
			Further, as we have shown, necessarily $\modw{w} \leqloc_{\hobserved} x$.
			This implies that $\modw{w} \leqloc_{\hobserved} d$, and from forward-agreement, $\hobserved(d) = \hconc(d)$, and this implies that $w(\hobserved)(d) = w(\hconc)(d)$.
			$w(\hobserved) \models F(x)$ and thus $w(\hconc) \models F(x)$ as well.
			Overall we have $\validw{w}(\hconc) \models \PredSearch{k}{x} \land F(x)$.

		\item $\hobserved \not\models F(x)$ but $w(\hobserved) \models F(x)$:
			Necessarily $\modw{w} = d$. Thus, $w(\hobserved)(d) = w(\hconc)(d)$, and since $w(\hobserved) \models F(x)$ we have that $w(\hconc) \models F(x)$.
			Further, necessarily $\hobserved \models \PredSearch{k}{x}$, as $d$ an intermediate location in a search path to $x$ because $x \leq_{\hobserved} d$.
			We now argue that reachability to $x$ is maintained by $w$, and the fact that $\validw{w}(\hconc) \models \PredSearch{k}{x}$ follows as before.
			The reason is that by the premise $w$ maintains reachability to $\modw{w}=d$, but this is a field of $x$, so necessarily the reachability to $x$ is maintained as well:
			if $\opprefix{\hconc}{i} \models \PredSearch{k}{x}$ for some $i$, then also $\opprefix{\hconc}{i} \models \PredSearch{k}{d}$ because $d$ is a field of $x$. From the reachability preservation of $w$,
			$\hconc \models \PredSearch{k}{d}$, but again, because $d$ is a field of $x$, $\hconc \models \PredSearch{k}{x}$.
			As before, $\validw{w}(\hconc) \models \PredSearch{k}{x}$ as well.
			Overall we have $\validw{w}(\hconc) \models \PredSearch{k}{x} \land F(x)$.
	\end{itemize}
	\end{proof}
\else
The proof appears in the full version.
\fi
}

\clearpage
\section{Necessity of Conditions}\label{Se:NecessityConditios}
In this section we illustrate how reasoning from unsynchronized traversals might be incorrect when the conditions of our framework are not satisfied, disabling its use to reason about concurrent executions from the sequential behavior of the local view. The examples are based on (artificial) modifications of the running example (\Cref{Fi:Running}).

\para{Acyclicity} Consider a traversal of the tree in our running example searching for some key $k$, which starts from some arbitrary node $y$ rather than from $\rootobj$. Assume that the traversal then reaches $\NULL$. The traversal now starts from $\rootobj$, and happens to reach $y$. Based on the local view of the traversal, the operation declares that $k$ is not present in the tree, since $\QReachXK{\NULL}{k}$ holds in the local view (as $\QReachXYK{y}{\NULL}{k}$ and $\QReachXK{y}{k}$ were found).

However, this may not be true for any intermediate state of the concurrent execution: assume that the parent of $y$ when the traversal begins is $x$ and that $x.key = k$. The scenario above is possible if between the first and second phases of the traversal $x,y$ are rotated (see \Cref{Fi:Rotation}), although a node with key $k$ is always present in the tree.

Note that since the modifying procedures are exactly as in the running example, preservation holds in this example. The search order in this example includes edges from $\NULL$ to $\rootobj$, and this is of course not acyclic.  
\TODO{example of temporal acyclicity}

\para{Preservation}
Consider a binary search tree in which rotations are performed in-place, and a traversal exactly as \code{locate} in \Cref{Fi:Running}.
Consider reachability to $x$ when $y,x$ are rotated (see~\Cref{Fi:Rotation}). If the traversal reaches $y$ before the rotation, then the in-place rotation takes place, and the traversal continues, the traversal ``misses'' $x$, and could mistakenly declare that the key of $x$ is not contained in the tree.

Note that the traversal by pointers property is maintained. The acyclicity conditions does not hold (in an insubstantial way, see \Cref{Se:Conc}) but this does not affect the view of the traversal depicted here. (It does affect the traversal if it reaches $x$ instead of $y$ when a rotation takes place.)

\clearpage
\section{Formally Justifying the Validity of Local View Arguments for the Running Example}\label{Se:FormalRunningConditions}


\begin{figure}
\centering
\small

$
\begin{array}{l@{\,}c@{~~}l}
\mathit{Root}(H,H') & \eqdef & \forall o_1.\ H \models \rootobj = o_1 \implies H' \models \rootobj = o_1 \\
\mathit{Key}(H,H') & \eqdef & \forall o_1,k.\ H \models o_1.key = k \implies H' \models o_1.key = k \\
\mathit{Rem}(H,H') & \eqdef & \forall o_1,k.\ H \models o_1.rem \implies H' \models o_1.rem \\[2mm]
\mathit{Acyclic}(H,H') & \eqdef &  \forall o_1,o_2,o_3.\, ((H \models \QReachXK{o_1}{}\land child(o_1,o_2))\land (H' \models child(o_1,o_3)\land o_2\neq o_3)) \\
&&\hspace{4cm} \implies H \models \QReachXYK{o_2}{o_3}{} \vee H\models (new(o_3)\land \neg \QReachXYK{o_3}{o_1}{})
\\
\mathit{Preservation}(H,H') & \eqdef &  \forall o_1.\, H \models \QReachXK{o_1}{k} \implies H' \models \QReachXK{o_1}{k} \vee H'\models o_1.rem \vee H'\models locked(o_1)\land \neg \QReachXYK{root}{o_1}{}  \\
&&\hspace{7mm}H \models \neg \QReachXYK{root}{o_1}{} \implies H'\models locked(o_1) \vee H'\models o_1.rem

\end{array}
$
\caption{\label{Fi:Inv}Transition invariants for the running example. The variables $o_1$, $o_2$, and $o_3$ are interpreted over allocated objects in the heap. The predicate $\QReachXYK{o_1}{o_2}{}$ denotes the fact $o_2$ is reachable from $o_1$ in the heap (by some sequence of accesses to $\mathit{left}$ or $\mathit{right}$). Also, $locked(o_1)$ means that the procedure executing the current step holds a lock on $o_1$, and $new(o_3)$ means that $o_3$ was allocated by the procedure executing the current step and never ``made'' reachable from the $\rootobj$ (i.e., its address was never stored into some $\mathit{left}$ or $\mathit{right}$ field). 
}
\end{figure}

We discuss a particular strategy for proving the conditions of \Cref{Se:FrameworkConditions} above which applies in particular our running example. The acyclicity condition refers to a partial order $\leq_H$ on memory locations in a heap $H$ which in the case of the example is defined by 
\begin{align*}
&\forall o_1,o_2.\ child(o_1,o_2)\implies \forall f\in\{\mathit{keo_2}, \mathit{left}, \mathit{right}, \mathit{del}, \mathit{rem}\}.\ o_1.f\leq_H o_2.f \\
&\forall o_1.\ o_1.\mathit{key} \leq_H \{o_1.\mathit{rem}, o_1.\mathit{del}\} \leq_H \{o_1.\mathit{left}, o_1.\mathit{right}\}
\end{align*}
where $child(o_1,o_2)$ means that $o_2$ is a child of $o_1$. We may write $o_1\leq_H o_2$ to say that all the fields of $o_1$ are before the fields of $o_2$ in $\leq_H$.

The acyclicity condition follows from the \emph{transition invariant} $Acyclic(H,H')$ in \Cref{Fi:Inv}, which describes a relation between the state $H$ before and the state $H'$ after executing any statement in the code of the algorithm (in any concurrent execution). According to this invariant, if some assignment changes the child of some object $o_1$, reachable from the root, from $o_2$ to $o_3$, then $o_3$ was either reachable from $o_2$ (which implies $o_1\leq_H o_3$) or $o_3$ is a ``new'' object which was never reachable from the $root$ and $o_1$ is not reachable from $o_3$ (which implies $o_3\not\leq_H o_1$). In both cases, adding the constraint $o_1$ is ``smaller than'' $o_3$ to the partial order $\leq_H$ will not introduce a cycle. Proving the validity of this invariant is rather easy. The only updates to the children of a node reachable from the $\rootobj$ (when they are not ${\tt null}$) are from \code{removeRight()} at \Cref{Ln:RemoveRight} and \Cref{Ln:RemoveLeft}, and from \code{rotateRightLeft()} at \Cref{Ln:RotateXRight} and \Cref{Ln:RotatePLeft} (for all the other updates, the invariant holds vacuously). Notice that any property of fields of objects which are locked is true in concurrent executions as long as it holds in sequential executions. For instance, $y=z.\mathit{right}$ holds at line \Cref{Ln:RemoveRight} and \Cref{Ln:RemoveLeft}, and it implies on its own the invariant. A similar reasoning can be done at \Cref{Ln:RotatePLeft}. For \Cref{Ln:RotateXRight}, the new child $z$ of $x$ is a newly allocated object whose children are not on a path to $x$.

The transition invariant $Preservation(H,H')$ in \Cref{Fi:Inv} implies a property which is even stronger than the preservation condition: for every execution $e$ of the concurrent algorithm, and every update $w$ in $e$ to a heap object $o$, if $\QReachXK{o}{k}$ became true at some moment before $w$, then it remains true until $w$ gets executed (there is no requirement that $w$ overlaps in time with a read-only code fragment).
According to $Preservation(H,H')$, every other write $w'$ that happens after the moment when $\QReachXK{o}{k}$ became true will either maintain the validity of this predicate, or it will turn it to false, but then it will either hold a lock on $o$ or set its field $\mathit{rem}$ to $\true$ (and according to $Rem(H,H')$ the field $\mathit{rem}$ never changes from $\true$ to $\false$). The latter two cases are impossible since $w$ is enabled only if it holds a lock on $o$ and the field $\mathit{rem}$ is $\false$.
 Concerning the proof of $Preservation(H,H')$, since the field $\mathit{key}$ of every object, and the variable ${\tt root}$ are immutable (stated formally in $Key(H,H')$ and $Root(H,H')$), the only way to modify the validity of a predicate $\QReachXK{o}{k}$ is by changing the pointer fields $\mathit{left}$ or $\mathit{right}$. The only such updates occur in the procedures \code{removeRight()} and \code{rotateRightLeft()}. These updates can modify search paths only by removing or inserting keys, the updates at \Cref{Ln:RemoveRight},  \Cref{Ln:RemoveLeft}, and \Cref{Ln:RotatePLeft} remove the key of {\tt y} while the update at \Cref{Ln:RotateXRight} inserts the key of {\tt y} (or {\tt w}). The interesting case is when keys are removed from search paths: a predicate $\QReachXK{o}{k}$ can become $\false$, but only if $o$ contains the removed key and $o$ becomes unreachable from the $\rootobj$. Also, this can happen only if the procedure executing the current step holds a lock on $o$ or if the field $\mathit{rem}$ is set to $\true$
 (an unlock validates $Preservation(H,H')$ since it can happen only when the field $\mathit{rem}$ is already set).

\clearpage







\lstset{language={program},style=lnumbers,firstnumber=last}

\begin{figure}
\centering
\begin{tabular}{p{12cm}}
\begin{lstlisting}
type LFN 
  immutable int key 
  LFN$\times$bool next=$\langle$ref,mark$\rangle$

LFN tail$\leftarrow$new($+\infty$,null)
LFN head$\leftarrow$new($-\infty$,tail)
\end{lstlisting}
\begin{lstlisting}
bool add(int key) 
  LFN newNode, pred, succ

  (pred,succ)$\leftarrow$find(key) $\label{ln:lfn:add-check-key}$
  if (succ.key = k)
    return false 

  newNode$\leftarrow$new SNL(key,(succ,false)) 
  bool added$\leftarrow$CAS(&pred.next,(succ,false),(newNode,false))  $\label{ln:lfn:add-cas}$   
  if ($\neg$added)
    restart

  return true   
\end{lstlisting}
\begin{lstlisting}
bool remove(int key) 
  LFN newNode, pred, succ 

  (pred,succ)$\leftarrow$find(key)
  if (succ.key $\neq$ k)  $\label{ln:lfn:remove-check-key}$ 
    return false  $\label{ln:lfn:remove-check-key-ret}$ 

  LFN nodeToRemove$\leftarrow$succ
  (succ,bool)$\leftarrow$nodeToRemove.next

  while (true)
    bool iMarkedIt$\leftarrow$CAS(&nodeToRemove.next,(succ,false),(succ,true)) $\label{ln:lfn:remove-cas}$
    (succ,marked)$\leftarrow$nodeToRemove.next
    if (iMarkedIt)
      find(key)
      return true
    if (marked)
      return false $\label{ln:lfn:remove-marked-ret}$
\end{lstlisting}
\begin{lstlisting}
LFN$\times$LFN find(int key) 
  bool find, snip, marked, cont$\leftarrow$true   
  LFN pred$\leftarrow$null, curr$\leftarrow$null, succ$\leftarrow$null
  while (cont)
    pred$\leftarrow$head
    curr$\leftarrow$pred.next.ref
    while (cont)
      (succ,marked)$\leftarrow$curr.next $\label{ln:lfn:find-get-next}$
      while (marked)               $\label{ln:lfn:find-while-mark}$
        snip$\leftarrow$CAS(&pred.next,(curr,false),(succ,false)) $\label{ln:lfn:find-snip}$ $\label{ln:lfn:find-cas}$
        if ($\neg$snip) 
          restart
        curr$\leftarrow$succ  
        (succ,marked)$\leftarrow$curr.next $\label{ln:lfn:find-get-next-in-while}$
      if (curr.key < key) $\label{ln:lfn:find-check-key}$
         pred$\leftarrow$curr $\label{ln:lfn:find-set-pred}$
         curr$\leftarrow$succ $\label{ln:lfn:find-set-curr}$
      else
         cont$\leftarrow$false     
  return (pred,curr)  
\end{lstlisting}
\begin{lstlisting}
bool contains(int key) 
  LFN curr, succ
  bool marked$\leftarrow$false

  curr$\leftarrow$head
  succ$\leftarrow$pred.next.ref
  while (curr.key $<$ key) $\label{ln:lfl:contains:while}$
    $\{\Past{\lflQReachXK{curr}{key} \land (curr.mark \iff marked)} \}$ $\label{ln:lfl:contains:inv}$
    curr$\leftarrow$succ
    (succ,marked)$\leftarrow$curr.next

  $\{\Past{\lflQReachXK{curr}{key}  \land  (curr.mark \iff marked)}  \land key \leq curr.key \}$
  return (curr.key$=$key $\land$ $\neg$marked)
\end{lstlisting}

\end{tabular} 
\caption{\label{Fi:LFList}
Lock-free concurrent list~\cite[Chapter 9.8]{TAOMPP}
} 
\end{figure}

\section{Example: Lock-Free List-Based Concurrent Set}\label{Se:LockFreeList}

In this section, we apply our approach to verify the lock-free list-based concurrent set algorithm shown in~\Cref{Fi:LFList}.   
The code of the algorithm is based on the algorithm of~\cite[Chapter 9.8]{TAOMPP},  adapted to our language.
The algorithm is explained in detail in~\cite{TAOMPP}.
Thus, we only describe the parts necessary to understand our linearizbility proof and the assertions,  written inside curly braces, which annotate the code of the \code{contain} procedure.

The set algorithm uses an underlying sorted
linked-list of dynamically-allocated objects of type \code{LFN}, which we refer to as \emph{nodes}.
Every node has three fields: an immutable integer field {\tk} storing the
key of the node, a pointer field {\tref} pointing to a successor node 
(or to a designated \code{null} value),
and a boolean field {\tm} indicating
that the node was logically deleted from the list. 

The {\tref} and {\tm} fields of a node can be accessed atomically: we encapsulate  these fields inside a pair field {\tmnext}  comprised of a reference and a boolean field which can be accessed atomically. 
We write \code{x.next.ref} and \code{x.next.mark} to denote accessing the {\tn} and {\tm} fields of the node pointed to by \code{x} separately. We use the notation \code{(succ,marked)}$\leftarrow$\code{x.next},
where \code{succ} is a pointer variable and \code{marked} a boolean variable, 
to denote an atomic assignment of \code{x.next.ref} and \code{x.next.mark} to \code{succ} and \code{marked},
respectively. 
Similarly, we write \code{x.next}$\leftarrow$\code{(succ,marked)} to denote an atomic assignment to the two components of the \code{next} pair of fields of the node pointed to by \code{x}. 
We write \code{CAS(\&x.next,(currref,currmark),(newref,newmark))} to denote a \emph{compare-and-set} operation which atomically sets the contents of the {\tmnext} field of \code{x} to
\code{(newref,newmark)}, provided that its current value is \code{(newref,newmark)}.
When the \tm-component of the \code{next} field of a node is set, we say that the field, as well as the node itself, are  \emph{marked},
otherwise, we say that they are \emph{unmarked}.

\begin{remarknum}
The original code~\cite[Figures 9.24 to 9.27]{TAOMPP} is written in Java and keep the {\tref} and {\tm} fields of node using a \emph{markable pointer} (see~\cite[Pragma 9.8.1]{TAOMPP}) to allow reading, writing, and applying CAS to the {\tn} and {\tm} fields
simultaneously.
\end{remarknum}


The  list has designated sentinel \emph{head} and \emph{tail}
nodes. The \emph{head} node is always pointed to by the shared
variable \code{head} and contains the  value $-\infty$. The
\emph{tail} node is always pointed to by the shared variable
\code{tail}, and contains the  value $\infty$. The value $-\infty$
(resp. $\infty$) is smaller (resp. greater) than any possible value
of a key. When the algorithm starts, it first sets \emph{tail} to be the successor of \emph{head} and sets \emph{tail}'s successor to be \code{null}. The sentinel nodes remain unmarked throughout the execution.

The set algorithm is comprised of three interface procedures: \code{add},
\code{remove}, and \code{contains}.
The first two
use the internal \code{find} procedure to traverse the list
and prune out marked nodes:
when \code{find} is invoked to locate a key \code{key},
it traverses the list starting from the \emph{head} node until it reaches an unmarked node with a key greater than \code{key}. During the traversal it removes marked nodes (\Cref{ln:lfn:find-snip}). If an attempt to remove a node fails, the procedure restarts.
In contrast, the \code{contains} method's  traversal of the list is \emph{optimistic}: it is done
without any form of synchronization. As a result,  while a
thread is traversing the list, other threads might concurrently
change the list's structure.
When verifying this algorithm,  our approach helps in proving that \texttt{contains} is linearizable,
which is the most difficult part of the proof.
Proving the linearizability of \code{add} and \code{remove} can be done using a  rather standard   invariant-based concurrent reasoning 
as discussed below.

\paragraph*{Verifying assertions using invariants-based concurrent reasoning}
Procedures \texttt{add}, \texttt{remove}, and \texttt{find}  maintain several  \emph{state} invariants:\footnote{Recall that the \code{contains} does not modify the shared state.}
\begin{compactitem}[]
\item 
($\lflI{rT}$) the tail node is always reachable from the head node, where reachability between nodes is determined in this section by following  $\tref$ fields;
\item 
($\lflI{UB}$) all unmarked nodes are reachable from the \emph{head} node; 
and 
\item 
($\lflI{<}$) if node $v$ is the \tref-successor of node $u$ then the key of $v$ is strictly greater than that of~$u$. 
\end{compactitem}
In addition, the procedures maintains the following   \emph{transition} invariants:
\begin{compactitem}[]
\item 
($\lflTI{k}$) the \code{key} node is immutable;  
\item
($\lflTI{mn}$) the \code{next}    fields becomes immutable once it gets marked;
and
\item
($\lflTI{mr}$) right after a node gets marked, it is reachable from the head.
\end{compactitem}
Verifying the invariants hold is rather straightforward as it merely requires local reasoning about each mutation. 
For example, it is easy to see that invariant $\lflTI{mn}$ holds:
The modifications (\Cref{ln:lfn:add-cas,ln:lfn:remove-cas,ln:lfn:find-cas}) are   done using a \code{CAS} operation which may succeed only if the modified \code{next} pair of fields is unmarked.
Furthermore, a node $u$ gets marked only in \Cref{ln:lfn:remove-cas}. Hence, $u$ is reachable at that time. Note that the compare-and-set command cannot affect the \tref-field of $u$'s predecessor.

\paragraph*{Verifying linearizability}
We prove the linearizability of the algorithm  using an \emph{abstraction function}
$\repfunc: H \totalto \powerset(\mathbb{N})$ that maps a concrete memory state of the list to  the {\em abstract set} represented by this state, and showing that \code{add}  and \code{delete} manipulate this abstraction according to their specification and that \code{find} does not modify it.
We define $\repfunc$ to map $H$ to the set of keys of the unmarked nodes.
Note that by invariants $\lflI{rT}$, $\lflI{UB}$, and  $\lflI{<}$, these nodes 
are part of the sorted list segment connecting the \emph{head} and \emph{tail} nodes.
We refer to this list segment as the \emph{backbone list}. 

\subparagraph*{Verifying linearizability of add and remove}
The proof that  \code{add} and \code{remove}   are linearizable 
follows directly from the invariants once we establish the following properties 
of \code{find}:
(a) it does not change the abstract set represented by the list, and
(b) the  pointers (\code{pred},\code{curr}) it returns point to nodes \emph{pred} and \code{curr}, respectively, such that
(i)
the key of  \emph{pred} (resp. \emph{curr}) is smaller than (resp. greater or equal to) \code{key},
(ii)
\emph{pred} was unmarked and the \tref-predecessor of  \emph{curr} at some point during the traversal, 
and  
(iii) 
at some (perhaps different) time point during the traversal,  \emph{curr} was unmarked.

To verify property (a), we  observe that  \code{find}  removes the marked node pointed to by \code{curr} by redirecting the \code{ref}-field of its predecessor (pointed to by \code{pred}) to point to \code{succ}---\code{curr}'s \code{ref}-successor:
Using compare-and-set  (\Cref{ln:lfn:find-snip}) ensures that the removal succeeds only if \code{pred} is unmarked and its \tref-field points to \code{curr}.
The fact that \code{succ} is the \tref-successor of \code{curr} is ensured by transition invariant $\lflTI{mn}$ which prohibits the modification of marked node. 
Property (a) holds because cutting out marked nodes this way does not affect the reachability of unmarked nodes from the head.

Property (b.i) follows from the check made in \Cref{ln:lfn:find-check-key} and the immutability of keys.
To verify properties (b.ii) and (b.iii), it suffices to observe that 
after \code{find} traverses the \code{next.ref}-field of the node \emph{curr} pointed to by \code{curr} (\Cref{ln:lfn:find-get-next,ln:lfn:find-get-next-in-while}) it ensures that \emph{curr} is unmarked (\Cref{ln:lfn:find-while-mark}) before it updates \code{pred} and \code{curr} 
(\Cref{ln:lfn:find-set-pred,ln:lfn:find-set-curr}).\footnote{Recall that the {\tref} and {\tm} fields are read in one atomic action.}

The linearizability of  invocations of \code{add} and \code{remove} which return \code{true}  follows from invariant  $\lflI{UB}$ which ensures that as they modify unmarked nodes, these nodes must be reachable from the head. This, together with property (a), shows that  adding a new node or marking an existing unmarked one affects the represented set in the intended way. 

The linearizability of   invocations of
\code{add}   which return \code{false}
follows from property (b.iii) and the check made in \Cref{ln:lfn:add-check-key}:
The former ensures that the node pointed to by \code{curr} was unmarked during the traversal of \code{find} and the latter that the key of that node is the one the procedure attempts to add.

The linearizability of \emph{unsuccessful} invocations of \code{remove}, i.e.,
ones which return \code{false}, 
can justified using two different reasons:
\begin{compactitem}
\item 
\code{remove} returning \code{false} in \Cref{ln:lfn:remove-check-key-ret} can be justified by properties (a) and (b.ii), which, together, ensure that there was some point during the execution of \code{find} in which the node \emph{pred} pointed to by \code{pred} was reachable from the head and its successor was the node \emph{curr} pointed to by \code{curr}. Hence, the latter was reachable too. As the key $k$ the procedure tries to remove is bigger than the key of \emph{pred} and smaller than the key of \emph{curr}, the  sortedness of the list ensures  that at this time $k$ was not the key of any unmarked node.

\item 
\code{remove} returning \code{false} in \Cref{ln:lfn:remove-marked-ret} can be justified by property (a),  the check made in \Cref{ln:lfn:add-check-key}, and property   (b.iii):
The first two ensure that the node pointed to by \code{curr} was unmarked during the traversal of \code{find} and the last one was that at a later point this node was marked. By transition invariant $\lflTI{mr}$, it follows that at some time point during the execution of \code{remove} a marked node with the key removes attempts to delete was reachable from the head, and as the list is sorted, that key was not part of the set the list represents.
\end{compactitem}





\subparagraph*{Verifying linearizability of  contains}
We use our framework to verify the linearizability of \code{contains} by
defining the notion of an order over memory locations, the notion of \emph{valid search path} for
key $k$ that starts at the \emph{head} node, and proving that the code satisfies the acyclically and preservation conditions.

As in our running example, we define the order over memory locations based on reachability. We say that there is a valid search path to an object $o_x$ for key $k$ from the head of the list, denoted by
$\lflQReachXK{o_x}{k}$,
if  $o_x$ is reachable from the head node to $o_x$ and its key is smaller than $k$. Formally, search paths are defined as follows:
$$
\begin{array}{l}
\lflQReachXYK{o_r}{o_x}{k}
\eqdef
\exists o_0,\ldots,o_m.\,
o_0=o_r \land o_m=o_x \land {}
\\[2pt]
\qquad
\forall i=1..m.\, o_{i-1}.\tmnext.\tref=o_i \land  o_{i-1}.\code{key} < k \ .  

\end{array}
$$

The acyclicity of the order stems from the immutability of keys and  invariant $I_{<}$ which ensures that cycles are impossible.

To prove the preservation of search paths to locations of modification it suffices to note that as marked nodes are never modified, it suffices to show the property hold for unmarked ones.
Note that neither marking a nor changing the successor of a node affects the search paths which go   \emph{through} it:
Invariant $\lflI{<}$ ensures that adding a node $v$ in between nodes $u$ and $w$ does not break any search paths which goes through $w$: These must be for  keys greater than that of $w$, and hence of $v$.
Removing a marked node may merely shorten a search path to an unmarked node.
Marking a node has not effect of search paths.

To verify the linearizability of \code{contains} we
establish the loop invariant
$$
\lflQReachXK{curr}{key} \land (curr.mark \iff marked)
$$
in \Cref{ln:lfl:contains:inv} using sequential reasoning. 
This is straightforward.
We then lift it to concurrent executions (using its past form) by applying
the extension of our framework discussed in \Cref{rem:ReachWithField}.
This invariant, together with the definition of a search path ensures that as we get out of the loop only if we reached an unmarked node with the key that we look for, and in this case \code{marked} is false, or  
that during the traversal we never encountered that key that we look for or that this key was in a marked node. In either cases, the return value correspond to the contents of the abstract set at that time.



\clearpage




\lstset{language={program},style=lnumbers,firstnumber=last}

\begin{figure}
\centering
\begin{tabular}{p{8cm}p{6.7cm}}
\begin{lstlisting}
type SLN 
  immutable int key 
  immutable int topLevel 
  SLN $\times$ bool next[L]

SLN tail$\leftarrow$new($+\infty$,L,(null,false),...,(null,false))
SLN head$\leftarrow$new($-\infty$,L,(tail,false),...,(tail,false))


\end{lstlisting}
\end{tabular} 
\caption{\label{Fi:LFSkipList:Type}
Lock-free concurrent skiplist: Type declaration and the \code{head} and \code{tail} global variables pointing to the first and last, respectively, sentential nodes of the list. (See~\cite[Fig 14.10]{TAOMPP}).  
} 
\end{figure}


\section{Example: Lock-Free Skiplist-Based Concurrent Set}\label{Se:LockFreeSkipList}
 
In this section, we apply our approach to verify the lock-free skiplist-based concurrent set algorithm shown in~\Cref{Fi:LFSkipList:Type,Fi:LFSkipList:Contains,Fi:LFSkipList:Find,Fi:LFSkipList:Add,Fi:LFSkipList:Remove}.   
The code of the algorithm is based on the algorithm of~\cite[Chapter 14.4]{TAOMPP},  adapted to our language.
The algorithm is explained in detail in~\cite{TAOMPP}.
Thus, we only describe the parts necessary to understand our linearizbility proof and the assertions,  written inside curly braces, which annotate the code of the \code{contain} procedure (\Cref{Fi:LFSkipList:Contains}).

The set algorithm uses an underlying concurrent skiplist comprised of  
dynamically-allocated objects of type \code{SLN} (see \Cref{Fi:LFSkipList:Type}), which we refer to as \emph{nodes}.
Every node has three fields: an immutable integer field {\tk} storing the
key of the node, 
an array {\tmnext}  with $L$ entries, where each entry contains a pair comprised of a pointer field {\tref} and a 
boolean field {\tm} which allows to link every node in multiple levels,
and an integer field {\ttl} which determines the number of populated entries.
We refer to the list obtained by the links at the $i$th entry of the nodes array of links as the list at level $i$.
Roughly speaking, every node $u$ is part of lock-free lists (see \Cref{Se:LockFreeList})
at levels $0..u.\ttl$. 
The bottom list (\tmnext[0]) is the \emph{main} list, and every list $i$,
where $0<i\leq u.\ttl$, serves as a shortcut which allows to bypass multiple nodes of the list at level $i-1$.

The  skiplist has designated sentinel \emph{head} and \emph{tail}
nodes. The \emph{head} node is always pointed to by the shared
variable \code{head} and contains the  value $-\infty$. The
\emph{tail} node is always pointed to by the shared variable
\code{tail}, and contains the  value $\infty$. The value $-\infty$
(resp. $\infty$) is smaller (resp. greater) than any possible value
of a key. When the algorithm starts, it first sets \emph{tail} to be the successor of \emph{head} in all $L$ levels, and initializes all of \emph{tail}'s \tref-fields to be \code{null}. The sentinel nodes remain unmarked throughout the execution.

The set algorithm is comprised of three interface procedures: \code{add},
\code{remove}, and \code{contains}. 
The first two
use the internal \code{find} procedure to traverse the list
and prune out marked nodes.
In contrast, the \code{contains} method's  traversal of the list is \emph{optimistic}: it is done
without any form of synchronization. As a result,  while a
thread is traversing the list, other threads might concurrently
change the list's structure. 
When verifying this algorithm,  our approach helps in proving that \texttt{contains} is linearizable, 
which is, as in the case of the lock-free list, the most difficult parts of the proof. 
Proving the linearizability of \code{add} and \code{remove} follows rather easily using invariant-based concurrent reasoning 
as discussed below.

\paragraph*{Verifying invariants using concurrent reasoning}
Procedures \code{add}, \code{remove}, and \code{find}  maintain several  \emph{state} invariants.\footnote{Recall that the \code{contains} does not modify the shared state.} In particular, for every $i=0..L$, the  list at level $i$ maintain all the state and transition invariants of the lock-free list-based set algorithm (see \Cref{Se:LockFreeList}).
In addition, the skip list maintain the following state
invariants:
\begin{compactitem}[]
\item 
($\slI{mu}$) if the ${\tmnext}[i]$ field of a node $u$ 
is marked then so are all the fields $u.{\tmnext}[j]$ for $i < j \leq L$.
($\slI{sub}$) if node $u$ precedes node $v$ at level $i$ and both nodes are unmarked at level $i$at level $i$ then $u$ precedes $v$ at  level $j$ for any $0 \leq j < i$.
\end{compactitem}

Verifying the invariants hold is rather straightforward as it merely requires local reasoning about each mutation. 
For example, it is easy to see that the \code{next} field of a marked node is never modified:
The modifications (\Cref{ln:sl:add-cas,ln:sl:add-setnext-cas,ln:sl:add-main-cas,ln:sl:remove-cas,ln:sl:remove-main-cas,ln:sl:find-cas}) are   done using a \code{CAS} operation which may succeed only if the modified \code{next} pair of fields is unmarked. 
To verify invariant $\slI{mu}$, we only need to observe that \code{remove} marks the entries of the {\tmnext} array from top to bottom (\Cref{ln:sl:remove-main-cas}). Note that as a marked \tmnext-field never gets modified,   invariant $\slI{mu}$ holds even if a thread tries to remove a node which is still being added to the list.
To verify invariant $\slI{sub}$, we   first observe that \code{add} links a new node in (level-wise) a 
bottom up fashion (\Cref{ln:sl:add-main-cas}). Thus, a node $v$ gets linked to the $i$th level, for any $0\leq i < L$ before it gets linked in level $i+1$. We then apply invariant $\slI{mu}$ to realize that if nodes $u$ and $v$ are unmarked at level $i$ then they are unmarked at level $i+1$.
By invariant $\lflI{UB}$ of the lock free list (see \Cref{Se:LockFreeList}), this means that the node is reachable at the lists at level $i+1$ and at level $i$.
By $\lflI{<}$, the lists at all levels are sorted.  Thus, if $u$ precedes $v$ at level $i$ it precedes is at level $i-1$ too.

\paragraph*{Verifying linearizability}
We prove the linearizability of the algorithm  using an \emph{abstraction function}
$\repfunc: H \totalto \powerset(\mathbb{N})$ that maps a concrete memory state of the list to  the {\em abstract set} represented by this state, and showing that \code{add}  and \code{delete} manipulate this abstraction according to their specification and that .\code{find} does not modify it.
We define $\repfunc$ to map $H$ to the set of keys of the unmarked nodes of the main (bottom) list.
Note that by invariants $\lflI{rT}$, $\lflI{UB}$, and  $\lflI{<}$), these nodes 
are part of the \emph{backbone list} of the main list---the sorted list connecting the \emph{head} and \emph{tail} nodes.

\subparagraph*{Verifying linearizability of  add and
remove}

The proof that  \code{add} or \code{remove}, shown in 
\Cref{Fi:LFSkipList:Add,Fi:LFSkipList:Remove},
are linearizable follows almost immediately from the invariants once we establish the following properties of \code{find}, shown in
\Cref{Fi:LFSkipList:Find}:
(A) \code{find} populates the pair of input arrays with pointers to predecessors and successors of the searched key at every level, and (B) it returns \code{true} if and only if it found an unmarked node at the bottom list containing the searched node. Furthermore, in this case, it sets \code{succs[0]} to point to this node.

To prove property (A), we note that, roughly speaking, the \code{find} procedure of the skiplist traverses the lists at all the levels starting from the top lists and making its way down to the bottom list.
It fills the $i$th entry of the \code{preds} and \code{succs} arrays, for every $0 \leq i \leq L$, at with pointers to the predecessor and successor nodes of the search key at the list at level $i$.
As in the \code{find} procedure of the lock-free list, it prunes out marked nodes as it goes over the lists (\Cref{ln:sl:find-snip}).
The most tricky aspect of \code{find} is that its traversal at level $i-1$, for $0 < i \leq L$,
does not start from the head of the list but from the predecessor  node of the searched key at level $i$.  \code{find} ensures that it does not miss a  node $u$ containing the desired key which is unmarked at the bottom level by switching the traversal from level $i$ to level $i-1$ in a 
node \emph{pred} which, being a predecessor node, has a smaller key than the one \code{find} searches for and 
is  unmarked at level $i-1$ (\Cref{ln:sl:find-start,ln:sl:find-check-start}).
By invariant  $\slI{sub}$, \emph{pred} is unmarked at all levels $0..i-1$, by invariant  $\lflI{UB}$,
$u$ is on the backbone of the list at   levels $0..i-1$, and by invariant $\lflI{<}$, $u$  appears after \emph{pred} at these lists.

To prove property (B), we observe that \code{find} gets out of the \code{for}-loop only after it set \code{succs[0]} to point to a node with a key greater or equal to the searched key 
(\Cref{ln:sl:find-check-key,ln:sl:find-set-succs})
which was unmarked  during the traversal (\Cref{ln:sl:find-check-mark-curr}).


\lstset{language={program},style=lnumbers,firstnumber=last}

\begin{figure}
\centering
\begin{tabular}{p{10cm}}
\begin{lstlisting}
bool find(int key, SLN preds[L], SLN succs[L]) 
  bool find, snip, marked, cont   
  SLN pred$\leftarrow$null, curr$\leftarrow$null, succ$\leftarrow$null
  pred$\leftarrow$head
  for (int level$\leftarrow$L; 0 $\leq$ level; level--) $\label{ln:sl:find-snip}$
    (curr,marked)$\leftarrow$pred.next[level] $\label{ln:sl:find-start}$
    if (marked)                               $\label{ln:sl:find-check-start}$
      restart
    cont$\leftarrow$true 
    while (cont)
      (succ,marked)$\leftarrow$curr.next[level] 
      while (marked)                          $\label{ln:sl:find-check-mark-curr}$
        snip$\leftarrow$CAS(&pred.next[level],(curr,false),(succ,false)) $\label{ln:sl:find-cas}$
        if ($\neg$snip) 
          restart
        (succ,marked)$\leftarrow$curr.next[level]
      if (curr.key < key)     $\label{ln:sl:find-check-key}$
        pred$\leftarrow$curr  $\label{ln:sl:find-set-pred}$
        curr$\leftarrow$succ
      else
        cont$\leftarrow$false     
    preds[level]$\leftarrow$pred. 
    succs[level]$\leftarrow$succ $\label{ln:sl:find-set-succs}$
  return curr.key=key   
\end{lstlisting}
\end{tabular}
\caption{\label{Fi:LFSkipList:Find}
Lock-free concurrent skiplist algorithm: Procedure \code{find}. (See~\cite[Fig 14.13]{TAOMPP}).
} 
\end{figure}



\begin{remark}
We modified the  \code{add} procedure  of~\cite[Fig 14.11]{TAOMPP}  to update the successors of a newly added node using a compare-and-swap (\Cref{ln:sl:add-setnext-cas}) because the original version has a subtle race between concurrent \code{add} and \code{remove} of the same node.
Also, following~\cite{phd:Fraser04}, we added to procedure \code{find}  a check that the node \code{pred} in which the traversal switches to a lower level (\Cref{ln:sl:find-start})   is unmarked at the new level (\Cref{ln:sl:find-check-start}). This simplified the proof of \code{find}, which is   done outside our framework.
The \code{find}  procedure of~\cite[Fig 14.13]{TAOMPP} does not make this check, which is indeed unnecessary: As \code{pred} was checked to be  unmarked at the previous, higher, level, invariant $\slI{mu}$ ensures that at the time it was unmarked at the new, lower, level. 
\end{remark}

The proof of linearizability of \code{add} and \code{remove} is carried out essentially in the same way it is done for the lock-free list (see \Cref{Se:LockFreeList}) when applied to the main list. It is possible to do so because property (A) ensures that the pair of pointers \code{(preds[0],succs[0])} returned by the skiplist's \code{find} fulfills the same conditions as the pair \code{(pred,curr)}  returned by the lock-free list's \code{find}. (See property (b) in \Cref{Se:LockFreeList}).


\lstset{language={program},style=lnumbers,firstnumber=last}

\begin{figure}
\centering
\begin{tabular}{p{12cm}}
\begin{lstlisting}
bool add(int key) 
  int topLevel$\leftarrow$random(0..L)
  SLN[] preds$\leftarrow$ new SLN[L] 
  SLN[] succs$\leftarrow$ new SLN[L] 
  SLN newNode, succ, newSucc

  bool found$\leftarrow$find(key,preds,succs)
  if (found)
    return false 
     
  newNode$\leftarrow$new SNL(key,topLevel,null,...,null) 
  (succ,marked)$\leftarrow$succs[0]
  newNode.next[0]$\leftarrow$(succ,marked)
  bool setnext$\leftarrow$CAS(&pred.next[0],(succ,false),(newNode,false))    $\label{ln:sl:add-cas}$
  if ($\neg$setnext)
    restart

  for (int level$\leftarrow$1; level $\leq$ L; level++)
    bool linked$\leftarrow$false
    while ($\neg$linked)
      succ$\leftarrow$succs[level].ref
      newSucc$\leftarrow$newNode.next[level].ref
      setnext$\leftarrow$CAS(&newNode.next[level],(newSucc,false),(succ,false)) $\label{ln:sl:add-setnext-cas}$  
      if ($\neg$setnext)
        return true
      linked$\leftarrow$CAS(&preds[level].next,(newNode,false),(succ,false))   $\label{ln:sl:add-main-cas}$
      if ($\neg$linked)
        bool newMark$\leftarrow$newNode.next[level].mark 
        if (newMark)
          return true
        find(key,pred,succ)
   return true   
\end{lstlisting}
\end{tabular}
\caption{\label{Fi:LFSkipList:Add}
Lock-free concurrent skiplist algorithm: Procedure \code{add}. (See~\cite[Fig 14.11]{TAOMPP}).
} 
\end{figure}



\lstset{language={program},style=lnumbers,firstnumber=last}

\begin{figure}
\centering
\begin{tabular}{p{12cm}}
\begin{lstlisting}
bool remove(int key) 
  SLN[] preds$\leftarrow$ new SLN[L] 
  SLN[] succs$\leftarrow$ new SLN[L] 
  SNL succ  
  bool marked, found     
  int level

  found$\leftarrow$find(key,preds,succs)
  if ($\neg$found)
    return false 
     
  SLN nodeToRemove$\leftarrow$succs[0] 
  for (level $\leftarrow$nodeToRemove.topLevel; 1 $\leq$ level; level--)
    (succ,marked)$\leftarrow$nodeToRemove.next[level] 
    while ($\neg$marked)
      CAS(&nodeToRemove.next[level],(succ,false),(succ,true))     $\label{ln:sl:remove-cas}$      
      (succ,marked)$\leftarrow$nodeToRemove.next[level] 
    
  while (true)
    bool iMarkedIt$\leftarrow$CAS(&nodeToRemove.next[level],(succ,false),(succ,true)) $\label{ln:sl:remove-main-cas}$
    (succ,marked)$\leftarrow$nodeToRemove.next[0]
    if (iMarkedIt)
      find(key,preds,succs)
      return true
    if (marked)
      return false


\end{lstlisting}
\end{tabular}
\caption{\label{Fi:LFSkipList:Remove}
Lock-free concurrent skiplist algorithm: Procedure \code{remove}. (See~\cite[Fig 14.12]{TAOMPP}).
} 
\end{figure}

\subparagraph*{Verifying linearizability of contains}
We use our framework to verify the linearizability of \code{contains}, shown in \Cref{Fi:LFSkipList:Contains}, by
defining the notion of an order over memory locations, the notion of \emph{valid search path} for
key $k$ that starts at the top level list entry of the \emph{head} node, and proving that the code respect the acyclically and preservation conditions. 

Recall that the order is defined for memory locations, i.e., at the granularity of fields.
We define the order over memory locations containing entries of the \code{next} arrays of nodes in the following way: All locations pertaining to entries at level $i$ are smaller than the ones at level $j$ for any $i<j$.
We define the order between locations pertaining to entries at the same level according to reachability, as we did in the case of the lock-free list (see \Cref{Se:LockFreeList}).   
The other fields are immutable, and thus their order is immaterial. In particular, it is easy to modify the code so that the key field of a node is read only once in a traversal when moving between levels; this is of course equivalent.

We say that there is a \emph{valid search path} to a location to  the $i$th entry of the {\tmnext} array of the node pointed to by $x$, denoted by $\slQReachXK{x.\inext_i}{k}$,
if it is possible to reach from the top-level \code{next}-field of the  node $\slheadobj$ to
that entry by either (i) traversing over links at the same level if they originate from nodes which are either marked at that level or that their key is smaller than $k$ or (ii) descend to a lower level entry in a node whose key is not greater than $k$ and that it is unmarked at the current level.
Formally, search paths   are defined as follows:
$$
\begin{array}{l}
\QReachXYK{o_r,i}{o_x,j}{k}
\eqdef
\exists o_0,\iota_0,\ldots,o_m,\iota_m.\,
o_0=o_r \land \iota_0=i \land o_m=o_x \land \iota_m=j \land {}\\
\qquad
\forall i=1..m.\, \slnextnode(o_{i-1},\iota_{i-1},k,o_i,\iota_i)\,, \text{ and }
\\[2pt]
\qquad
\qquad
\slnextnode(o_{i-1},\iota_{i-1},k,o_i,\iota_i) = {}\\
\qquad\qquad\qquad	
	(o_{i-1}.\inext_{i-1} = o_i \land (o_{i-1}.\imark \lor o_{i-1}.\ikey < k)) 
	\lor   {}\\
\hfill
	(o_{i-1} = o_i \land (\neg o_{i-1}.\imark \land o_{i-1}.\ikey \leq k) \ .
\end{array}
$$
Note that a search path to $k$ might go at the bottom level through nodes with a key greater than $k$, but these nodes must be marked.
Also note that if $\slQReachXK{x.\inext_0}{k}$ holds and $x.\ikey>k$ this indicates that $k$ is not in the abstract set represented by the list
(because a valid search path to $k$ does not continue past a node with key $k$ and all the list elements are linked in the lowest level).

The acyclicity of the order stems from the immutability of keys, the fact that the order between entries at different levels never changes, and invariant $I_{<}$ of the lock-free list which ensures that cycles are impossible between entries at the same level.

To prove the preservation of search paths to locations of modification it suffices to note that as marked nodes are never modified, it suffices to show the property hold for unmarked ones.
Note that neither adding a link nor removing one changes the search paths which goes \emph{through} unmarked {\tmnext} fields:
Only marked \tmnext-fields are removed (\Cref{ln:sl:find-snip,ln:sl:remove-cas,ln:sl:remove-main-cas}).
As no search path goes from these nodes directly to a lower level entry, this change may only shorten same-level search paths.
Adding a \tmnext-field into a list at level $l$ (\Cref{ln:sl:add-cas,ln:sl:add-main-cas}) does not break a search path that used to go  through 
its predecessor because: the predecessor field remains unmarked (thus there is no effect of search paths that goes down a level) and, 
as discuss in \Cref{Se:LockFreeList}, the list at level $l$ remains sorted, and thus the key of the node is smaller than all the keys in the following nodes.
Marking a node  $v$ adds search paths which go through the marked links, but does not remove any.
However, it may remove search paths that used to   switch at $v$ to a lower level.
Luckily, as the \emph{head} node is never marked, invariant $\slI{sub}$ ensures that for any such search path which was removed there is another  valid search path which goes through an unmarked level-$l$-predecessor $u$ of $v$ that gets to the $l-1$ {\tmnext} entry of $v$ by 
going down a level at $u$ and going through $l$-level links to get to the location of $u.\tmnext[l-1]$'s entry.


To verify the linearizability of \code{contains},
we
establish three loop invariants.
(Using sequential reasoning, establishing the present   form of these invariants is straightforward.)
\begin{compactitem}
\item
The outer (\code{for}) loop invariant ensures that in \Cref{ln:sl:contains:inv:for:pred},
when the procedures starts traversing a new level, \code{pred} points to a node which at some point in time during its traversal was   the target of a valid search path and was unmarked one level up.
(Intuitively, this assertion justifies starting the new traversal at the middle of the list.)
\item 
The former assertion allows to establish the loop invariant of the intermediate (\code{while(cont)}) loop which says that at \Cref{ln:sl:contains:inv:while:cur} \code{curr} points to a node which at some point in time during the traversal was the target of a valid search path. 
\item 
The role of the  loop invariant of the internal (\code{while(marked)}) loop is key.
It says that whenever the loop is about to start (\Cref{ln:sl:contains:inv:while:marked}), not only was the $level$ entry of the {\tmnext} array of \code{curr} the target of a valid search path, but at that time its successor at level {\code{level}} was \code{succ} and that entry at that level was marked only if the \code{marked} variable is true.  	
\end{compactitem}
We lift the invariants  to concurrent executions  by applying
the extension of our framework discussed in \Cref{rem:ReachWithField}.
From this point on,  it is easy to prove that in \Cref{ln:sl:contains:ret}, \code{curr} points to a node which at some point in time its bottom {\tmnext} entry was unmarked and the target of a search path. As the key of the node is  unmarked with a key equal or greater to the desired one. Linearizability follows.


\lstset{language={program},style=lnumbers,firstnumber=last}

\begin{figure}
\centering
\begin{lstlisting}
bool contains(int v) 
  SLN pred$\leftarrow$head, curr$\leftarrow$null, succ$\leftarrow$null;
  for (int level$\leftarrow$L; 0 $\leq$ level; level--)
    $\{\Past{\slQReachXK{pred.\inext_{level}}{v} \land \neg pred.next[level+1].marked)} \land pred.key < v  \}$ $\label{ln:sl:contains:inv:for:pred}$
    curr$\leftarrow$pred.next[level].ref
    bool cont$\leftarrow$true
    while (cont)
      $\{\Past{\slQReachXK{curr.\inext_{level}}{v}} \}$ $\label{ln:sl:contains:inv:while:cur}$  
      (succ,marked)$\leftarrow$curr.next[level] 
      while (marked)
        $\{\Past{\slQReachXK{curr.\inext_{level}}{v} \land curr.next[level]=(succ,marked)} \}$ $\label{ln:sl:contains:inv:while:marked}$
        curr$\leftarrow$succ 
        $\{\Past{\slQReachXK{curr.\inext_{level}}{v}} \}$ $\label{ln:sl:contains:inv:for:succ}$
        (succ,marked)$\leftarrow$curr.next[level] 
      $\{\Past{\slQReachXK{curr.\inext_{level}}{v} \land curr.next[level]=(succ,marked)} \}$ $\label{ln:sl:contains:after-while}$ 
      if (curr.key < v)
        $\{\Past{\slQReachXK{curr.\inext_{level}}{v} \land  curr.next[level]=(succ,false)} \land curr.key < v  \}$ 
        pred$\leftarrow$curr
        curr$\leftarrow$succ
      else
        $\{\Past{\slQReachXK{curr.\inext_{level}}{v} \land  curr.next[level]=(succ,false)} \land curr.key \geq v  \}$ 
        cont$\leftarrow$false

    $\{\Past{\slQReachXK{curr.\inext_0}{v} \land \neg curr.next[0].marked)} \land curr.key \geq v \}$  $\label{ln:sl:contains:ret}$   
    return curr.key = v   
\end{lstlisting}
\caption{\label{Fi:LFSkipList:Contains}
Lock-free concurrent skiplist algorithm: Procedure \code{contains}. (See~\cite[Fig 14.14]{TAOMPP}).
} 
\end{figure}









\fi

\end{document}